\numberwithin{equation}{section}
\theoremstyle{plain}
\newtheorem{theorem}{Theorem}[section]
\newtheorem{lemma}[theorem]{Lemma}
\newtheorem{proposition}[theorem]{Proposition}
\newtheorem{corollary}[theorem]{Corollary}
\theoremstyle{definition}
\newtheorem{definition}[theorem]{Definition}
\newtheorem{remark}[theorem]{Remark}
\newtheorem{assumption}[theorem]{Assumption}
\newcommand{\N}{\mathbb{N}}
\newcommand{\R}{\mathbb{R}}
\newcommand{\E}{\mathbb{E}}
\renewcommand\d[1]{\ensuremath{%
  \;\mathrm{d}#1\@ifnextchar\d{\!}{}}}
\newcommand{\condbar}{\;\middle|\;}
\newcommand\munderbar[1]{\underaccent{\bar}{#1}}
\DeclareMathOperator{\diag}{diag}
\newcommand{\one}{\mathbf{1}}
\begin{document}

\begin{frontmatter}
\title{Optimal Investment and Consumption in a Stochastic Factor Model}
\runtitle{Optimal Investment and Consumption in a Stochastic Factor Model}

\begin{aug}
\author[A]{\fnms{Florian}~\snm{Gutekunst}\ead[label=e1]{florian.gutekunst@warwick.ac.uk}},
\author[A,B]{\fnms{Martin}~\snm{Herdegen}\ead[label=e2]{martin.herdegen@isa.uni-stuttgart.de}}
\and
\author[A]{\fnms{David}~\snm{Hobson}\ead[label=e3]{d.hobson@warwick.ac.uk}}
\address[A]{Department of Statistics, University of Warwick\printead[presep={,\ }]{e1,e3}}

\address[B]{Institut für Stochastik und Anwendungen, Universität Stuttgart\printead[presep={,\ }]{e2}}
\end{aug}

\begin{abstract}
        In this article, we study optimal investment and consumption in an incomplete stochastic factor model for a power utility investor on the infinite horizon.
        When the state space of the stochastic factor is finite, we give a complete characterisation of the well-posedness of the problem, and provide an efficient numerical algorithm for computing the value function. 
        When the state space is a (possibly infinite) open interval and the stochastic factor is represented by an Itô diffusion, we develop a general theory of sub- and supersolutions for second-order ordinary differential equations on open domains without boundary values to prove existence of the solution to the Hamilton-Jacobi-Bellman (HJB) equation along with explicit bounds for the solution.
        By characterising the asymptotic behaviour of the solution, we are also able to provide rigorous verification arguments for various models, including -- for the first time -- the Heston model. 
        Finally, we link the discrete and continuous setting and show that that the value function in the diffusion setting can be approximated very efficiently through a fast discretisation scheme.
\end{abstract}

\begin{keyword}[class=MSC]
\kwd{91G10}
\kwd{91G80}
\kwd{93E20}
\kwd{34A34}
\kwd{91B16}
\end{keyword}

\begin{keyword}
\kwd{Investment/Consumption}
\kwd{Merton Problem}
\kwd{Incomplete Markets}
\kwd{Sub-/Supersolution}
\end{keyword}

\end{frontmatter}
		\section{Introduction}
		\label{section:introduction}

		In Merton's investment and consumption problem \cite{Merton1969,Merton1971}, an investor seeks to maximize their expected lifetime utility from consumption while investing in a stock and risk-free bond.
		The case of a Black-Scholes market, i.e., a model with constant coefficients, has been studied extensively, and it is straight-forward to derive a candidate value function for a power utility investor.
		  In this setting, multiple verification arguments have been given, see e.g.,\citet{Karatzas1986,Davis1990,Herdegen2021}.

		In a stochastic factor model, i.e., a model in which the market coefficients depend on some other stochastic process, the situation becomes much more complicated. 
        Most importantly, the market is now usually incomplete, and the candidate value function and optimal investment and consumption rates derived from the first-order condition depend on the solution of a semi-linear equation.
		Outside of the special cases of logarithmic utility and complete markets, no closed form solutions for this equation are known.

		The case where the stochastic factor is represented by a continuous time Markov chain with finite state space has been studied by \citet{Sotomayor2009}. 
        Under the key assumption that the problem is well-posed in each state individually, i.e., if the stochastic factor was frozen in that state, they prove a verification theorem as well as the existence of a solution to the Hamilton-Jacobi-Bellman (HJB) equation. However, this frozen-state well-posedness assumption excludes important examples, e.g., discretisations of the Vasicek or the Heston model.
        
		In the case in which the stochastic factor is represented by an Itô diffusion, many variants of the problem have been studied. 
        Over the finite horizon, \Citet{Karatzas1991} and \citet{Zariphopoulou2001} study the problem without consumption in an incomplete market, and \citet{Wachter2002} studies the problem with consumption in a complete market.
		Over the infinite horizon, \citet{Fleming2003} provide an existence and verification result in a very specific incomplete stochastic volatility model.
		\Citet{Hata2012,Hata2012a} study existence and verification in a stochastic factor model, but under relatively strict assumptions on the coefficients.
        Most importantly, they assume uniform ellipticity of the differential operator, as well as global Lipschitz-continuity of the coefficients, which among other models rules out the Heston model.
        Moreover, both \citet{Fleming2003} and \citet{Hata2012,Hata2012a} work exclusively in a uniformly well-posed setting, i.e., a setting in which the optimal consumption rates obtained by freezing the stochastic factor in each state are positive and uniformly bounded away from zero.

		One of the most notable contributions to the investement-consumption problem with a stochastic factor over the infinite horizon was made by \citet{Guasoni2020}.
		Working under some broad assumptions, they use sub- and supersolutions to the HJB equation to prove existence of a positive solution to the HJB equation, together with upper and lower bounds on the solution.
        More precisely, their sub- and supersolutions are linked to optimal consumption rates in fictitious complete markets that are obtained from the original incomplete model, but with distorted dynamics.
        As in \cite{Sotomayor2009}, they require that each state is well-posed when the stochastic factor is frozen in that state.
		While they also prove a verification theorem, the latter depends on a growth condition on the derivative of the solution that is difficult to check for concrete models.
        In particular, \citet{Guasoni2020} themselves are not able to verify it for the Heston model.
        
		In a follow-up paper, \citet{Guasoni2019} focus exclusively on the Vasicek model. 
		Using a bespoke verification theorem, they are able to prove that the candidate value function is indeed optimal. 
        Whilst this is a significant achievement, the bespoke nature of the verification theorem means that it cannot be used for different stochastic factor models.
        
        \medskip{}

		The main contributions of our paper are as follows: First, in the case where the stochastic factor has a finite state space, we fully characterise the well-posedness of the problem.
        We show that the assumption that the problem is well-posed for every state individually is not necessary. 
        Indeed, even if there are states in which the problem would be ill-posed if the state was frozen, as long as the the process spends little enough time in these “bad” states, the overall problem may still be well posed.
		Such models arise naturally, for example as discretisations of a Heston model with risk aversion $R \in (0, 1)$, or of a Vasicek model. 
        Second, turning to the case where the the stochastic factor is a diffusion, we develop a theory of sub- and supersolutions for second-order ordinary differential equations on the whole real line (and more generally any open interval).
        Here, the key difficulty is that there are no natural boundary values.
		Third, we use this theory to construct a candidate solution to the HJB equation for the investment-consumption problem with a stochastic factor given by a diffusion process and also to derive asymptotic estimates for the optimal consumption rate.
        Fourth, we use those asymptotic estimates to give a verification argument.
		Finally, we combine our discrete and continuous results to propose an efficient numerical scheme for computing the value function and optimal consumption rate in the diffusion setting.
		
		Describing our contributions in more detail, in the finite regime setting, we use the theory of $Z$- and $M$-matrices to show that the HJB equation has a unique positive solution if and only if a certain matrix, involving the $Q$-matrix of the Markov process describing the stochastic factor, the risk aversion parameter $R$ and the optimal consumption rates in the frozen state models, is a non-singular $M$-matrix. 
        Moreover, we show that the solution to the HJB equation can be efficiently computed using a fixed-point iteration.
        
		In the diffusion setting, like many papers in the extant literature \cite{Hata2012,Hata2012a,Guasoni2020,Guasoni2019}, we use sub- and supersolutions to prove the existence of a candidate solution. 
        However, unlike the approach in the extant literature, we do not construct a bespoke solution given sup- and supersolution for a specific equation at hand, but rather we develop a general theory of sub- and supersolutions for second-order problems on a (potentially unbounded) open domain without boundary values. 
        Our theory is an extension of the general theory of second-order boundary value problems on a bounded domain that goes back to \citet{Nagumo1937}.
        We refer to \citet{DeCoster2001} for a more recent survey paper.
        Our approach is motivated on the one hand by the observation that the theory of sub- and supersolutions for second order differential equations is also a very helpful tool for non-linear HJB equations arising from different diffusion-based infinite horizon control problems other than the consumption problem, and on the other hand, by a desire to find new solution methods for singular problems.
        Consider for example the homogeneous Dirichlet problem with a non-linear zero-order term that has a singularity at $0$.
		Here, the closed domain theory presented in \citet{DeCoster2001} fails due to the lack of boundedness at the singularity.
		Compared to the boundary value theory on closed bounded domains, the open domain theory requires less regularity of the equation at values on the boundary of the admissible domain of the solution.
		If the boundary values are strictly enforced by the sub- and supersolution, the open domain theory can prove the existence of solutions to singular boundary value problems, even when the closed domain theory is not applicable.
		An example where this technique is applied can be found in \cref{section:bounded_domain} where we construct a solution to a class of HJB equations.

        It is insightful to compare our approach to that of \citet{Guasoni2020} which in many ways is closest to our approach.
        As in \citet{Guasoni2020}, our arguments to prove the existence of a global positive solution to the HJB equation when the stochastic factor is an Itô diffusion are based on sub- and supersolutions, but we construct them differently: 
        In \citet{Guasoni2020}, sub- and supersolutions are constructed by either solving the HJB equation for a fictitious complete market in closed form, or abstractly through Lemma 3.1, where verifying the continuity (and even just the finiteness) of (3.4) and (3.5) is non-trivial in a general model. 
        By contrast, we use proportional sub- and supersolutions.
		While these are not as tight (and hence less usable as approximate consumption policies), verifying that they are in fact sub- and supersolutions is substantially easier than in either of the approaches in \citet{Guasoni2020}.
		The second advantage of the proportional sub- and supersolutions is that it is very easy to read off further properties like growth behaviour at infinity.
		The bounds resulting from the proportional sub- and supersolutions then allow us to characterise the asymptotic behaviour of solutions to the HJB equation as well as their logarithmic derivatives.
		This, in turn, enables us to verify that the candidate solution we constructed is indeed the optimal consumption rate.
		In particular, we are able to do verification in the Heston model for the first time in the literature.
		Furthermore, using our techniques we are able to treat the Vasicek model in a semi-unified manner that does not require a bespoke verification theorem as in \cite{Guasoni2019}.

        It should be noted that the assumption of each state being well-posed if the stochastic factor was frozen in that state (which is assumed in \cite{Sotomayor2009,Hata2012,Guasoni2020}) is a major limitation of the current literature.
        There is no economic reason for making this assumption, and there are several models (including the Heston model with $R < 1$) where it is not satisfied.
        In the discrete setting, our characterisation of the well-posedness of the problem completely removes this limitation.
        In the diffusion setting, we take a first step toward getting rid of the assumption.
        Our asymptotic results still apply in models in which it does not hold, which makes dealing with such models easier.
        This can be seen for example in our existence and verification arguments in the Vasicek model (which are much simpler than the ones in \cite{Guasoni2019}), and in some parameter configurations of the Heston model.
        Still, there is no systematic way of dealing with models that fail to be uniformly well-posed; they require manual handling.
        
		As our final contribution, we propose a method for numerically computing the value function in models in which the stochastic factor is an Itô diffusion.
		We discretise the stochastic factor into a Markov chain with a finite number of states, and then use the fixed point iteration for the finite regime setting to compute the value function of the discretisation.
		This procedure is very efficient and can handle a large number of states; computing the value function of a discretisation with 1\,000\,000 states takes one second on a regular laptop.
		We show that as the grid size of the discretisation becomes small (while keeping the minimal and maximal states constant), the discrete solutions approach the solution to the HJB equation under Neumann boundary conditions on a bounded domain, which corresponds to the HJB equation when the stochastic factor is a reflected diffusion.
		It should be noted that while a discrete model can never have correlation between the stochastic factor and the Brownian motion driving the risky asset, the equations for the optimal consumption rate in any model with constant correlation can be transformed into the equations for the optimal consumption rate for a modified model with zero correlation, and vice-versa.
		This way, our numerical scheme can handle arbitrary constant correlations.

		The rest of the paper is organised as follows: 
		\cref{section:setting} describes the problem setting. \cref{section:finite_state_space} considers factor processes with a finite state space. We give a necessary and sufficient condition for the existence of a solution to the HJB equation, and fully characterise the well-posedness of the optimal investment and consumption problem.
		We show that in simple models this yields an easily checkable condition for the well-posedness of the problem.
		\cref{section:diffusion} considers the optimal investment and consumption problem in a model in which the stochastic factor is given by an Itô diffusion. 
		We provide an easy way to generate explicit sub- and supersolutions for a large class of models. 
		We characterise the asymptotic behaviour of solutions to the HJB equation and their logarithmic derivatives.
		This then allows us to prove a verification theorem.
		In \cref{section:numerics}, we propose a numerical scheme for computing the value function and optimal consumption rate.
		We show that a fixed-point iteration can be used to efficiently solve the HJB equation in a model with a stochastic factor with finite state space. 
		Through discretisation, this yields a numerical procedure for stochastic factors that are Itô diffusions. 
        We show that the discretisations converge as the grid size becomes small.
\cref{section:global_existence} extends the theory of sub- and supersolutions for one-dimensional second order boundary value problems on closed bounded domains to problems without boundary values on a (potentially unbounded) open domain.
		This theory drives our existence results. 
		Finally, in \cref{section:examples}, we apply our results to some examples.
		In particular, we obtain (to the best of our knowledge) the first verification in a Heston model.
        
		In \cref{section:bounded_domain}, we show that the well-posedness criterion from the finite regime setting carries over to the continuous case and characterises the existence of solutions to the HJB equation on a bounded domain under homogeneous Neumann and Dirichlet boundary conditions.
        \Cref{section:auxiliary} collects some results of a more technical nature.

        \section{Problem Setting}
		\label{section:setting}

		We work on a filtered probability space $(\Omega, \mathcal{F}, \mathbb{F} = (\mathcal{F}_t)_{t \ge 0}, \mathbb{P}) $ which satisfies the usual conditions and supports a Brownian motion $W = (W_t)_{t \geq 0}$.

		The agent can invest into a risk-free bond or bank account and a risky asset, the dynamics of which are influenced by the stochastic factor $Y$.
		Denote the state space of $Y$ by $E$.
		The bank account process $B$ follows the dynamics \[
				dB_t = r(Y_t) B_t dt, \quad B_0 = 1
		,\] where $r: E \to \R $ is a (state-dependent) interest rate.
		The risky asset $S$ has dynamics \[
				dS_t = S_t ( (r(Y_t) + \lambda(Y_t) \sigma(Y_t)) dt + \sigma(Y_t) dW_t), \quad S_0 = s_0 > 0
		,\] where $\lambda: E \to \R$ is the market price of risk and $\sigma: E \to (0, \infty)$ is the volatility of the asset.
        For ease of notation, we assume that there is only one risky asset.
        Our results extend easily to the case of multiple risky assets.

		For initial wealth $x > 0$, the agent allocates a fraction of wealth $\Pi = (\Pi_t)_{t \ge 0}$ into the risky asset and consumes at a rate given by the fraction of wealth $\Xi = (\Xi_t)_{t \ge 0}$. Their wealth process $X^{\Pi,\Xi}$ has dynamics \[
				\d X^{\Pi,\Xi}_t = X^{\Pi,\Xi}_t ((r(Y_t) + \Pi_t \lambda(Y_t) \sigma(Y_t) - \Xi_t) dt + \Pi_t \sigma(Y_t) dW_t), \quad X^{\Pi,\Xi}_0 =x.
		\] 
		The optimal investment and consumption problem is to find
		\begin{align}
				\label{setting:control_problem}
				V(x, y) = \sup_{(\Pi, \Xi) \in \mathcal{A}} \E\left[\int_{0}^{\infty} e^{-\int_{0}^{t} \delta(Y_s) \d s } \frac{(\Xi_t X^{\Pi, \Xi}_t)^{1-R}}{1-R} \d t \condbar X^{\Pi, \Xi}_0=x, Y_0=y \right] 
		,\end{align}
		where $\delta : E \to \R$ denotes the (state-dependent) impatience rate and $R \in (0, \infty) \setminus \{1\} $ is the agent's risk aversion.
		The set of admissible strategies is denoted by $\mathcal{A}$.
		We call a strategy $(\Pi, \Xi)$ admissible if $\Pi$ and $\Xi$ are $\mathbb{F}$-progressively measurable, $\Xi_t \ge 0$ a.s.\ for all $t \ge 0$, and 
        \begin{equation}
            \label{setting:admissiblity}
            \int_0^t (|r(Y_t)| + |\Pi_t \lambda(Y_t) \sigma(Y_t)| + \Xi_t
        + \Pi_s^2 \sigma(Y_s)^2) \d s < \infty  \; \; a.s.
        \end{equation}  
        for all $t \ge 0$.\footnote{Note that nonnegativity of $X$ follows automatically.}
        
		We call problem \eqref{setting:control_problem} well-posed if $|V(x, y)| < \infty$ for all $x > 0, y \in E$.
		Otherwise, the problem is called ill-posed.

        We define the function $\eta: E \to \R$ by
         \[
                \eta(y) := \frac{1}{R} \left(\delta(y) - (1-R) \left(r(y) + \frac{\lambda(y)^2}{2R}\right)\right)
        .\]
        and call it the \emph{frozen consumption rate}. 
        This terminology is motivated by the fact that if the stochastic factor $Y$ is frozen at the value $y$ (i.e., we are in a Black--Scholes market), problem \eqref{setting:control_problem} is well-posed if and only if $\eta(y) > 0$ and in this case $\eta(y)$ is the optimal consumption rate; see e.g.\ \cite{Herdegen2021}.
        
        Note that we allow $\eta$ to be negative in some states (which corresponds to the frozen problem being ill-posed).
        If $\eta > 0$, we call the problem well-posed everywhere.
        If $\eta \ge C > 0$ for some constant $C > 0$, we call the problem uniformly well-posed.

        For ease of notation, we will henceforth often omit the dependence of the coefficients on $Y$.

		\section{Stochastic Factor with Finite State Space}
		\label{section:finite_state_space}
        
        In this section, we study the well-posedness of the optimal investment and consumption problem \eqref{setting:control_problem} in a regime-dependent market with a finite number $N \in \N$ of regimes, so that $E = \{1, \ldots, N\} $.

		We assume that $Y$ is a continuous-time Markov chain with Q-matrix $Q$ that is independent of the Brownian motion $W$.
		
		In the following, we identify functions from $E$ to $\R$ with vectors in $\R^{N}$.
		Inequalities between vectors and matrices are to be understood component-wise.
		For $p \in \R$, $x \in (0, \infty)^{N}$, we write $x^{p} := (x_1^{p}, \ldots, x^n_p)^\top \in (0, \infty)^{N}$.
		We write $\one = (1, \ldots, 1)^\top \in \R^{N}$ for the vector of ones.

        The Hamilton-Jacobi-Bellman (HJB) equation in this model is given by \begin{multline*}
                \sup_{(\pi, \xi) \in \R \times (0, \infty)} \bigg\{ \frac{(\xi x)^{1-R}}{1-R} + (r + \pi \lambda \sigma - \xi) x \frac{\partial V}{\partial x}(x,y) +\\+ \frac{1}{2} \pi^2 \sigma^2 x^2 \frac{\partial^2 V(x,y)}{\partial x^2} + (Q V(x, \cdot))(y) - \delta V(x, y) \bigg\} = 0
        .\end{multline*}
		Making the Ansatz $V(x, y) = \frac{x^{1-R}}{1-R} f(y)$, one can show that the HJB equation reduces to the Matrix HJB equation 
		\begin{align}
				\label{finite_state_space:hjb}
				\left(\diag(\eta) - \frac{1}{R} Q\right) f = f^{1-\frac{1}{R}}
		,\end{align}
		with candidate optimal controls $\hat{\Pi}_t = \hat{\pi}(Y_t)$ and $\hat{\Xi}_t = \hat{\xi}(Y_t)$, where \[
				\hat{\pi}(y) = \frac{\lambda(y)}{R \sigma(y)}, \quad \hat{\xi}(y) = f(y)^{-\frac{1}{R}}
		.\]
		\Citet[Lemmas~4.1~\&~4.2]{Sotomayor2009} show that \cref{finite_state_space:hjb} has a solution if $\eta_i > 0$ for $i=1,\ldots,N$.
		We extend their results to give a necessary and sufficient condition for when \cref{finite_state_space:hjb} has a solution; this will include cases where $\eta_i \le 0 $ for some (but not all) $i$.
		Such models are of interest in their own right but also because they arise naturally from a discretisation of
        a diffusion model.
        For example, this may occur for many parameter combinations within the Vasicek class of models or the Heston model with risk aversion $R \in (0, 1)$.
        
        First, we give a sufficient criterion for the existence of a solution to the Matrix HJB equation \eqref{finite_state_space:hjb}.

		\begin{lemma}
				\label{finite_state_space:sufficient}
				Let $A\in \R^{N\times N}$ be invertible, and let $p < 1$.
				If $A^{-1} \ge 0$ and $A^{-1}_{ii}>0$ for all $i=1,\ldots,N$, then the equation $Ax = x^{p}$ has a solution $x \in \R^{N}$ with $x > 0$.
		\end{lemma}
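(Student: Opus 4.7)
The plan is to recast $Ax = x^p$ as a fixed-point problem $x = T(x)$ with $T(x) := A^{-1}x^p$, and then produce a fixed point of $T$ by Brouwer's theorem applied on a compact convex invariant subset of the open positive orthant $(0,\infty)^N$.

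First I would verify that $T$ is a continuous self-map of $(0,\infty)^N$: continuity is immediate from continuity of $x \mapsto x^p$ on $(0,\infty)^N$, while positivity of the image follows from the componentwise bound $T(x)_i = \sum_j A^{-1}_{ij}\,x_j^p \ge A^{-1}_{ii}\,x_i^p > 0$, which uses both hypotheses $A^{-1}\ge 0$ and $A^{-1}_{ii}>0$. A positive fixed point of $T$ is exactly a positive solution of $Ax = x^p$.

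The core of the argument is the construction of the invariant set. I would try a hypercube $K = [a,b]^N$ with $0 < a \le b$. For any $x \in K$ one has $x_j^p \in [\min(a^p,b^p),\max(a^p,b^p)]$ coordinate-wise, so with $v := A^{-1}\one$, which lies in $(0,\infty)^N$ since $v_i \ge A^{-1}_{ii} > 0$,
\[
    T(x)_i \in \bigl[\min(a^p,b^p)\,v_i,\ \max(a^p,b^p)\,v_i\bigr], \qquad x \in K.
\]
Hence $T(K) \subseteq K$ reduces to two scalar conditions relating $a, b$ to $\min_i v_i$ and $\max_i v_i$. For $p \in [0,1)$ they read $a^{1-p} \le \min_i v_i$ and $\max_i v_i \le b^{1-p}$, and are satisfied by choosing $a$ small and $b$ large, crucially exploiting $1-p > 0$. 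For $p < 0$ the extremes of $x^p$ swap and the conditions couple into $a \le b^p \min_i v_i$ and $a^p \max_i v_i \le b$, which I would solve by a two-sided scaling of the form $a = \varepsilon^\alpha$, $b = \varepsilon^{-\beta}$ with $\alpha,\beta>0$ tuned to $|p|$ and $\varepsilon$ small. Once invariance is in hand, Brouwer's theorem delivers a fixed point $x^\ast \in K \subset (0,\infty)^N$, which is the desired positive solution.

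The main obstacle is the regime $p < 0$: there, $x\mapsto x^p$ is decreasing and $T$ is anti-monotone, so the clean monotone-sandwich argument available for $p \in [0,1)$ is unavailable. The componentwise bound above circumvents this by using only nonnegativity of $A^{-1}$, but balancing $a$ against $b$ becomes delicate when $|p|$ is large. In that sub-regime one may have to refine the hypercube to coordinate-dependent bounds $a_i, b_i$ satisfying the coupled inequalities $\alpha \le A^{-1}\beta^p$ and $A^{-1}\alpha^p \le \beta$, or instead pass to the monotone iterate $T^2$ (of homogeneity degree $p^2$) and apply a sub-/supersolution argument there, recovering a fixed point of $T$ from uniqueness of the fixed point of $T^2$.
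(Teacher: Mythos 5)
Your Brouwer argument is clean and correct for $p\in[0,1)$, and it is close in spirit to what the paper does: the paper applies the Poincaré--Miranda theorem, which is a sibling of Brouwer, on a coordinate box. For that range your hypercube $K=[a,b]^N$ with $a^{1-p}\le\min_i v_i$ and $\max_i v_i\le b^{1-p}$ is arguably more transparent than the paper's construction.

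The genuine gap is the regime $p\le -1$, which the lemma must cover (it is used with $p=1-\tfrac1R$, and $R<\tfrac12$ gives $p<-1$). Your hypercube scaling $a=\varepsilon^{\alpha}$, $b=\varepsilon^{-\beta}$ requires simultaneously $\alpha>\beta|p|$ (from $a\le b^p\min_i v_i$) and $\alpha<\beta/|p|$ (from $a^p\max_i v_i\le b$), which is consistent only when $|p|<1$. Your two proposed repairs do not close this: seeking coordinate-dependent $\alpha\le\beta$ with $A^{-1}\beta^p\ge\alpha$ and $A^{-1}\alpha^p\le\beta$ is an \emph{invariance} requirement, and chaining the two inequalities forces $\alpha^{1-p^2}$ to be controlled by data independent of $\alpha$, which again fails once $p^2>1$; and passing to $T^2$, which is homogeneous of degree $p^2$, hits exactly the same wall when $|p|\ge1$, besides needing a uniqueness statement for fixed points of $T^2$ that is not available at this stage of the paper (uniqueness is only proved later, under the extra $M$-matrix hypothesis). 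The deeper point is that $T(x)=A^{-1}x^p$ is a contraction in the log-metric only for $|p|<1$ (this is precisely why the paper's fixed-point iteration in \cref{numerics:fixed_point_iteration} is restricted to $p\in(-1,1)$), so any argument built on an invariant box is structurally limited to $|p|<1$. The paper escapes this by using Poincaré--Miranda, which requires only opposite-sign conditions for the $i$-th component on the two faces $\{x_i=a_i\}$ and $\{x_i=b_i\}$ — not that the box is mapped into itself — and then chooses the $a_i,b_i$ (via an intermediate-value argument on the diagonal term $A^{-1}_{ii}t^p-t$ and its coupling to $\sum_{j\ne i}A^{-1}_{ij}a_j^p$) precisely to secure those face signs for every $p<0$. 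To complete your proof along Brouwer lines you would need to replace the invariant-box step with a genuinely degree-theoretic one; at that point you have essentially rediscovered the paper's argument.
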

		\begin{proof}
				First, consider the case $p < 0$.
				Denote
				\[
						f^{<}_i(x) = (A^{-1} x^{p})_i - x_i = \sum_{j \neq i} A^{-1}_{ij} x_j^{p} + A^{-1}_{ii} x_i^{p} - x_i, \quad i = 1,\ldots,N
				.\]
				Let $a_i$ be the positive solution of
				\[
						A^{-1}_{ii} a_i^{p} - a_i = 1
				,\]
				and $b_i$ the positive solution to \[
						\sum_{j \neq i} A^{-1}_{ij} a_j^{p} + A^{-1}_{ii} b_i^{p} - b_i = -1
				.\]
				Solutions to these equations exist by the intermediate value theorem.
				In particular, notice that $a_i < b_i$.
				Now, set $K = \bigtimes_{i=1}^{N} [a_i, b_i]$.
				For any $x\in K$ with $x_i = a_i$ for some $i=1,\ldots,N$, we have \[
						f^{<}_i(x) \ge A^{-1}_{ii} x_i^{p} - x_i = A^{-1}_{ii} a_i^{p} - a_i = 1 > 0
				,\] and for $x\in K$ with $x_i = b_i$, we have \[
						f^{<}_i(x) \le \sum_{j\neq i} A^{-1}_{ij} a_j^{p} + A^{-1}_{ii} x_i^{p} - x_i = \sum_{j \neq i} A^{-1}_{ij} a_j^{p} + A^{-1}_{ii} b_i^{p} - b_i = -1 < 0
				.\]
				Hence, the Poincaré-Miranda theorem implies that there exists an $x\in K$ with $f^{<}(x) = 0$, i.e. $Ax = x^{p}$.

				Next, consider the case $0 < p < 1$.
				Denote \[
						f^{>}_i(x) = \frac{\sum_{i \neq j} A^{-1}_{ij} x_j^{p} + A^{-1}_{ii} x_i^{p}}{x_i}
				.\]
				Set $a_i = \left( \frac{1}{2} A^{-1}_{ii} \right)^{\frac{1}{1-p}}$, and $b_i = \left(2 \sum_{k,j=1}^{N} A^{-1}_{kj}\right)^{\frac{1}{1-p}}$.
				Note that $a_i < b_i$, and that all $b_i$ coincide.
				Again, denote $K = \bigtimes_{i=1}^{N} [a_i, b_i]$.
				For $x\in K$ with $x_i = a_i$ for some $i = 1,\ldots,N$, we have \[
						f^{>}_i(x) \ge \frac{A^{-1}_{ii} x_i^{p}}{x_i} = \frac{A^{-1}_{ii} a_i^{p}}{a_i} = 2 > 1
				.\]
				For $x\in K$ with $x_i = b_i$, we have \[
						f^{>}_i(x) \le \frac{\sum_{j\neq i} A^{-1}_{ij} b_j^{p} + A^{-1}_{ii} x_i^{p}}{x_i} = \frac{\sum_{j=1}^{n} A^{-1}_{ij} b_i^{p}}{b_i} \le \frac{\sum_{k,j=1}^{N} A^{-1}_{kj} b_i^{p}}{b_i} = \frac{1}{2} < 1
				.\]
				By the Poincare-Miranda theorem, the exists an $x\in K$ with $f^{>}_i(x) = 1$ for all $i=1,\ldots,N$, i.e.\ $Ax = x^{p}$.

				Finally, if $p = 0$, the solution is given by $x = A^{-1} \one$.
		\end{proof}

		We are mostly interested in applying this result to the case $A = A_{\eta, Q}$ where $A_{\eta,Q}:= \diag(\eta) - \frac{1}{R}Q$.
		Since $Q$ is a Q-matrix, all off-diagonal entries of this matrix are non-positive.
		This additional structure is helpful for obtaining a necessary and sufficient condition for the existence of a solution.

        For the next theorem, we recall the definition of $Z$-matrices and $M$-matrices and collect some equivalent characterisations of $M$-matrices that will be used throughout the rest of the paper.

   		\begin{definition}
				\label{finite_state_space:$M$-matrix:definition}
				A matrix $A \in \R^{N \times N}$ is called a \emph{$Z$-matrix} if all off-diagonal elements are non-positive, i.e.\ $A_{ij} \le 0$ for all $i, j = 1,\ldots,N$ with $i \neq j$.
				
				A $Z$-matrix $A$ is called an \emph{$M$-matrix} if it can be written as $A = s \operatorname{Id} - B$ for some matrix $B \in \R^{N \times N}$ with $B \ge 0$ and $s \in \R$ with $s \ge \rho(B)$, where $\rho(B)$ is the spectral radius of $B$.
		\end{definition}
		
		\begin{lemma}
				\label{finite_state_space:$M$-matrix:properties}
				Let $A \in \R^{N \times N}$ be a $Z$-matrix.
				Then the following statements are equivalent:
				\begin{enumerate}[(\roman*),ref=\roman*]
						\item \label{finite_state_space:$M$-matrix:properties:m_matrix} $A$ is a non-singular $M$-matrix.
						\item \label{finite_state_space:$M$-matrix:properties:inverse_nonnegative} $A$ is invertible, and $A^{-1} \ge 0$.
						\item \label{finite_state_space:$M$-matrix:properties:monotone} For all $x\in \R^{N}$ with $Ax \ge 0$, we have $x \ge 0$.
						\item \label{finite_state_space:$M$-matrix:properties:positive_image} There exists some $x\in \R^{N}$ with $x > 0$ and $Ax > 0$.
						\item \label{finite_state_space:$M$-matrix:properties:principal_minors} All leading principal minors of $A$ are positive.
						\item \label{finite_state_space:$M$-matrix:properties:eigenvalues} The real parts of all eigenvalues of $A$ are positive.
						\item \label{finite_state_space:$M$-matrix:properties:diagonal_nonsingular} For any non-negative diagonal matrix $D$, $A + D$ is non-singular.
				\end{enumerate}
				If $A = s \operatorname{Id} - B$ is a non-singular $M$-matrix, then $A^{-1} = \frac{1}{s} \sum_{n=0}^{\infty} \frac{1}{s^{n}} B^{n}$, and \[
						\|A^{-1}\|_\infty \le \frac{\|x\|_\infty}{\min_{j=1,\ldots,N} (Ax)_j}
				\] for any $x\in \R^{N}$ with $Ax > 0$.
				Furthermore, all diagonal elements of $A$ are positive.
		\end{lemma}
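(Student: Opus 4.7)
The seven equivalences are classical in the theory of $M$-matrices and can be found for instance in Berman--Plemmons, \emph{Nonnegative Matrices in the Mathematical Sciences}, Chapter~6. The plan is therefore to cite this reference for (\ref{finite_state_space:$M$-matrix:properties:m_matrix})--(\ref{finite_state_space:$M$-matrix:properties:diagonal_nonsingular}), noting that the usual strategy is to close a loop (\ref{finite_state_space:$M$-matrix:properties:m_matrix})$\Rightarrow$(\ref{finite_state_space:$M$-matrix:properties:inverse_nonnegative})$\Rightarrow$(\ref{finite_state_space:$M$-matrix:properties:monotone})$\Rightarrow$(\ref{finite_state_space:$M$-matrix:properties:positive_image})$\Rightarrow$(\ref{finite_state_space:$M$-matrix:properties:m_matrix}) and to show (\ref{finite_state_space:$M$-matrix:properties:m_matrix})$\Leftrightarrow$(\ref{finite_state_space:$M$-matrix:properties:principal_minors})$\Leftrightarrow$(\ref{finite_state_space:$M$-matrix:properties:eigenvalues}) separately. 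The only genuinely non-trivial implication is (\ref{finite_state_space:$M$-matrix:properties:positive_image})$\Rightarrow$(\ref{finite_state_space:$M$-matrix:properties:m_matrix}), which I would prove using Perron--Frobenius applied to $B := sI - A$ for $s \ge \max_i A_{ii}$; the existence of a strictly positive $x$ with $Ax > 0$ then forces $\rho(B) < s$. The equivalence (\ref{finite_state_space:$M$-matrix:properties:positive_image})$\Leftrightarrow$(\ref{finite_state_space:$M$-matrix:properties:diagonal_nonsingular}) follows by noting that $A + D$ inherits the $Z$-matrix structure and satisfies $(A + D)x > 0$ whenever $Ax > 0$.

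For the Neumann series, I would start from the observation that, since $A = sI - B$ with $B \ge 0$ is a non-singular $M$-matrix, the inequality $s \ge \rho(B)$ must be strict. Indeed, by Perron--Frobenius $\rho(B)$ is an eigenvalue of $B$, so $s = \rho(B)$ would make $A$ singular. Hence $\rho(B/s) < 1$, the geometric series $\sum_{n \ge 0} (B/s)^n$ converges, and equals $(I - B/s)^{-1}$; multiplying by $1/s$ yields the claimed representation.

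The norm bound is the heart of what is used later in the paper. The plan is as follows: fix any $x \in \R^N$ with $x > 0$ and $Ax > 0$, and set $c := \min_j (Ax)_j > 0$. For an arbitrary $y \in \R^N$ with $\|y\|_\infty \le 1$, we have $-\one \le y \le \one$ componentwise, and since $c \one \le Ax$, this gives $-\tfrac{1}{c} Ax \le y \le \tfrac{1}{c} Ax$. Applying $A^{-1}$ and using $A^{-1} \ge 0$ (property~(\ref{finite_state_space:$M$-matrix:properties:inverse_nonnegative})) preserves the inequalities, so $-\tfrac{1}{c} x \le A^{-1} y \le \tfrac{1}{c} x$ and therefore $\|A^{-1} y\|_\infty \le \|x\|_\infty / c$. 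Taking the supremum over $\|y\|_\infty \le 1$ gives the advertised bound on $\|A^{-1}\|_\infty$.

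Finally, positivity of the diagonal entries is a direct consequence of property~(\ref{finite_state_space:$M$-matrix:properties:positive_image}): picking $x > 0$ with $Ax > 0$, for each index $i$ the $Z$-matrix structure gives $A_{ij} x_j \le 0$ for $j \ne i$, whence $A_{ii} x_i \ge (Ax)_i > 0$ and thus $A_{ii} > 0$. The main obstacle in the whole lemma is really just the non-trivial implication (\ref{finite_state_space:$M$-matrix:properties:positive_image})$\Rightarrow$(\ref{finite_state_space:$M$-matrix:properties:m_matrix}) inside the equivalences, which I would delegate to the cited reference rather than reprove; everything beyond that is a short consequence of monotonicity of $A^{-1}$ and nonpositivity of the off-diagonal entries.
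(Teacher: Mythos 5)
Your proposal is correct, and for the equivalences (i)--(vii) it is essentially the same approach as the paper: both delegate to the standard $M$-matrix literature (the paper cites Plemmons~1977, you cite the Berman--Plemmons book, which covers the same theorem). The Neumann-series argument is also the same in substance: the paper gets $s > \rho(B)$ directly from condition~(vi), while you route the strict inequality through Perron--Frobenius; both are fine.

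Where you genuinely diverge is on the last two claims. For the norm bound $\|A^{-1}\|_\infty \le \|x\|_\infty / \min_j (Ax)_j$, the paper simply cites Axelsson~1990, whereas you give a short, self-contained proof using only $A^{-1} \ge 0$ and the sandwich $-\tfrac{1}{c}Ax \le y \le \tfrac{1}{c}Ax$ for $\|y\|_\infty \le 1$. This is a nice elementary alternative that avoids an external dependence. One small mismatch: the lemma asserts the bound for \emph{any} $x$ with $Ax > 0$, not only for $x > 0$; you added the hypothesis $x > 0$. This is harmless because $Ax > 0$ forces $x \ge 0$ by property~(iii), and your argument only needs $x \ge 0$ (so that $-\tfrac{1}{c}x \le A^{-1}y \le \tfrac{1}{c}x$ yields $\|A^{-1}y\|_\infty \le \|x\|_\infty/c$), but the proof as written should be adjusted to invoke~(iii) rather than assume $x > 0$. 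For the positivity of the diagonal entries, the paper reads it off from the principal-minors characterization~(v) (the $1\times 1$ leading principal minors are $A_{ii}$); your direct argument from~(iv) using $A_{ii}x_i \ge (Ax)_i > 0$ is equally valid and slightly more self-contained. In short: same skeleton, but you prove the final two assertions from first principles where the paper relies on citations.
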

		\begin{proof}
				The equivalence of \eqref{finite_state_space:$M$-matrix:properties:m_matrix}--\eqref{finite_state_space:$M$-matrix:properties:diagonal_nonsingular} is a subset of \cite[Thm.~1]{Plemmons1977}, specifically conditions $F_{15}$, $F_{16}$, $K_{33}$, $A_1$, $J_{29}$, $A_3$.

				Now, let $A = s \operatorname{Id} - B$ be a non-singular $M$-matrix. 
				By \eqref{finite_state_space:$M$-matrix:properties:eigenvalues}, we have $s > \rho(B) $, so that $\rho(\frac{1}{s} B) < 1$.
				Hence, the Neumann series for $\frac{1}{s}B$ converges, so \[
						A^{-1} = \frac{1}{s} \left(\operatorname{Id} - \frac{1}{s} B\right)^{-1} = \frac{1}{s} \sum_{n=0}^{\infty} \frac{1}{s^{n}} B^{n}
				.\] 

				The bound on $\|A^{-1}\|_\infty$ follows from \cite[Thm.~2.1]{Axelsson1990}.
				Lastly, the positivity of the diagonal elements follows from condition $A_1$ of \cite[Thm.~1]{Plemmons1977} since the diagonal elements are principal minors.
		\end{proof}

		Using the concepts of $Z$- and $M$-matrices we can now prove a necessary and sufficient criterion for the equation $Ax = x^{p}$ to have a solution.

		\begin{theorem}
				\label{finite_state_space:z_matrix:criterion}
				Let $A \in \R^{N \times N}$ be a $Z$-matrix and $p < 1$.
				The equation $Ax = x^{p}$ has a solution $x\in \R^{N}$ with $x > 0$ if and only if $A$ is a non-singular $M$-matrix.
		\end{theorem}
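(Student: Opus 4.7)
The plan is to package both directions as short deductions from \cref{finite_state_space:sufficient} and the characterisation of non-singular $M$-matrices in \cref{finite_state_space:$M$-matrix:properties}. There should be no heavy lifting; the whole point of having established those two results first is that this theorem becomes a near-immediate consequence.

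For the sufficiency direction, I would assume $A$ is a non-singular $M$-matrix and verify the hypotheses of \cref{finite_state_space:sufficient}. Characterisation \eqref{finite_state_space:$M$-matrix:properties:inverse_nonnegative} gives invertibility together with $A^{-1} \ge 0$. To upgrade $A^{-1}_{ii} \ge 0$ to strict positivity, I would invoke the Neumann-series formula $A^{-1} = \frac{1}{s}\sum_{n=0}^{\infty} \frac{1}{s^{n}} B^{n}$ from the same lemma: the $n=0$ term contributes $\frac{1}{s}\operatorname{Id}$ while every subsequent term is entrywise non-negative, so $A^{-1}_{ii} \ge \frac{1}{s} > 0$. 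Now \cref{finite_state_space:sufficient} produces a positive solution to $Ax = x^{p}$.

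For the necessity direction, suppose $x \in \R^{N}$ with $x > 0$ solves $Ax = x^{p}$. The only observation needed is that $x > 0$ implies $x^{p} > 0$ componentwise (this holds for every real $p$, so the restriction $p<1$ is irrelevant here). Thus $Ax > 0$ for some $x > 0$, and characterisation \eqref{finite_state_space:$M$-matrix:properties:positive_image} of \cref{finite_state_space:$M$-matrix:properties} tells us that a $Z$-matrix with this property must be a non-singular $M$-matrix.

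There is no real obstacle in this argument; the substantive content has been pushed entirely into the preceding lemmas. If anything, the only point that deserves a brief sentence is the strict positivity $A^{-1}_{ii}>0$ in the sufficiency step, since \cref{finite_state_space:$M$-matrix:properties} only states non-negativity of $A^{-1}$ and positivity of the diagonal entries of $A$ itself. Apart from that, the proof is a two-line application of the two earlier results.
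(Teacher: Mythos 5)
Your proof is essentially identical to the paper's: necessity follows from $x>0$ and $Ax = x^p > 0$ via characterisation~\eqref{finite_state_space:$M$-matrix:properties:positive_image}, and sufficiency combines characterisation~\eqref{finite_state_space:$M$-matrix:properties:inverse_nonnegative} with the Neumann-series lower bound $A^{-1}_{ii} \ge \frac{1}{s} > 0$ before invoking \cref{finite_state_space:sufficient}. The one point you flag as deserving care (upgrading $A^{-1}_{ii}\ge 0$ to strict positivity) is indeed exactly the step the paper's proof spells out, and your treatment of it matches.
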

		\begin{proof}
				First, let $x > 0$ be a solution to $Ax = x^{p}$.
				In particular, this means that $x > 0$ and $Ax > 0$. 
				Hence, $A$ is a non-singular $M$-matrix by condition \eqref{finite_state_space:$M$-matrix:properties:positive_image} in \cref{finite_state_space:$M$-matrix:properties}.

				Conversely, assume that $A$ is a non-singular $M$-matrix.
				By condition \eqref{finite_state_space:$M$-matrix:properties:inverse_nonnegative} in \cref{finite_state_space:$M$-matrix:properties}, we have $A^{-1} \ge 0$.
				Furthermore, we have the representation \[
						A^{-1} = \frac{1}{s} \sum_{n=0}^{\infty} \frac{1}{s^{n}} B^{n}
				\] for some matrix $B\ge 0$ and scalar $s>0$.
				In particular, this implies that $A^{-1}_{ii} \ge \frac{1}{s} > 0$ for all $i=1,\ldots,N$.
				The existence of a solution $x>0$ to $Ax = x^{p}$ now follows from \cref{finite_state_space:sufficient}.
		\end{proof}

        We call a vector $x \in \R^N$ with $x > 0$ a \emph{supersolution} to the equation $Ax = x^p$ if $Ax \ge x^p$.
        Similarly, we call $x$ a \emph{subsolution }if $Ax \le x^p$.
        We proceed to show that there exists an ordering between sub- and supersolutions to the equation $Ax = x^p$, which implies in particular that solutions are unique.

		\begin{proposition}
				\label{finite_state_space:z_matrix:uniqueness}	
				Let $A \in \R^{N \times N}$ be a non-singular $M$-matrix and $p < 1$.
				If $x \in \R^N$ with $x > 0$ is a subsolution to the equation $Ax = x^p$ and $y \in \R^N$ with $y > 0$ is a supersolution, then $x \le y$.
				In particular, there exists a unique solution $x \in \R^N$ with $x > 0$  to the equation $Ax = x^p$.
		\end{proposition}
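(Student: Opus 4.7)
The plan is to prove the ordering by a maximum-ratio contradiction argument, exploiting the $Z$-matrix structure of $A$ together with the sublinearity of $t\mapsto t^{p}$ for $p<1$. Suppose for contradiction that $x \not\le y$. Since both vectors are strictly positive componentwise, the ratio $\alpha := \max_{i} x_i/y_i$ is well-defined, and the failure of $x\le y$ forces $\alpha > 1$. Pick an index $i^{*}$ at which the maximum is attained, so that $x_{i^{*}} = \alpha y_{i^{*}}$ while $x_j \le \alpha y_j$ for every $j$.

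The heart of the argument is to transfer this pointwise information to the images under $A$. Since $A$ is a $Z$-matrix, $A_{i^{*}j}\le 0$ for $j\neq i^{*}$; combined with $x_j - \alpha y_j \le 0$, this makes $A_{i^{*}j}(x_j-\alpha y_j)\ge 0$. Together with the diagonal identity $A_{i^{*}i^{*}}x_{i^{*}} = \alpha A_{i^{*}i^{*}}y_{i^{*}}$, summing gives $(Ax)_{i^{*}} \ge \alpha (Ay)_{i^{*}}$. The sub- and supersolution inequalities then chain to $x_{i^{*}}^{p} \ge (Ax)_{i^{*}} \ge \alpha (Ay)_{i^{*}} \ge \alpha y_{i^{*}}^{p}$, where the last step uses $\alpha > 0$ and $y_{i^*}^p > 0$. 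Substituting $x_{i^{*}} = \alpha y_{i^{*}}$ and dividing by $y_{i^{*}}^{p}>0$ yields $\alpha^{p} \ge \alpha$, contradicting $\alpha>1$ together with $p<1$.

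For uniqueness, any positive solution of $Ax = x^{p}$ is simultaneously a subsolution and a supersolution, so applying the comparison statement twice gives $x\le y$ and $y\le x$; existence of a positive solution has already been supplied by \cref{finite_state_space:z_matrix:criterion}. The main subtlety is that the argument must remain valid uniformly in the sign of $p$, but the inequality $\alpha^{p}<\alpha$ for $\alpha>1$ holds for every $p<1$ (positive, zero, or negative) without any case distinction, and we never need to invoke $A^{-1}\ge 0$ or the full $M$-matrix structure beyond the $Z$-matrix property in the contradiction step itself.
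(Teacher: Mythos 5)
Your proposal is correct and takes essentially the same approach as the paper: both consider the maximal ratio $\max_i x_i/y_i$ at an index where it is attained, use the $Z$-matrix sign condition to push the inequality through $A$, and compare $\alpha^p$ with $\alpha$. The only cosmetic difference is that you phrase it as a contradiction (assume $\alpha>1$) while the paper argues directly that $\beta\le 1$; your closing observation that only the $Z$-matrix property, not the full $M$-matrix structure, is used in the comparison step matches the structure of the paper's argument as well.
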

		\begin{proof}
				Set $\beta = \max_{i=1,\ldots,N} \frac{x_i}{y_i} $.
				Let $i$ be an index at which the maximum is attained.
				Since $A_{ij} \le 0$ for $j \neq i$ and $x_i = \beta y_i$, we have \[
						\beta^{p} y_i^{p} = x_i^{p} \ge (Ax)_i = A_{ii} x_i + \sum_{j \neq i} A_{ij} x_j \ge A_{ii} \beta y_i + \sum_{j \neq i} A_{ij} \beta y_j = \beta (Ay)_i \ge \beta y_i^{p}
				.\]
				Hence, $\beta \le 1$, i.e. $x \le y$.
				This immediately implies uniqueness of a solution through symmetry.
		\end{proof}

		We can now completely characterise the wellposedness of the optimal investment and consumption problem \eqref{setting:control_problem}.
        Note that for $R > \frac{1}{2}$ the value function can be efficiently computed numerically via a fixed point iteration as presented in \cref{numerics:fixed_point_iteration}.

        We note here that the proof of the verification theorem for $R>1$ by \citet[Thm.~3.2]{Sotomayor2009} is not fully correct as it implicitly makes the additional assumption that only those strategies $(\Pi, \Xi)$ for which $\liminf_{t \to \infty} \E[e^{-\delta t} \hat V(X^{\Pi, \Xi}_t, Y_t)] \ge 0$ are admissible.\footnote{\citet{Sotomayor2009} claim that all strategies $(\Pi, \Xi)$ satisfy this transversality condition based on the wrong statement that some limit of the discounted utility of zero consumption is equal to $0$, whereas in fact it is $-\infty$.  Indeed, it is clear that their argument cannot hold since even in a Black-Scholes market, some constant-proportional strategies fail to satisfy this transversality condition, see \cite[Rem.~4.7]{Herdegen2021}.}
        Since we do not want to restrict the class of admissible strategies, we give a self-contained proof.
        
		\begin{theorem}
				\label{finite_state_space:criterion}
				The optimal investment and consumption problem \eqref{setting:control_problem} is well-posed if and only if $A_{\eta, Q} := \diag(\eta) - \frac{1}{R} Q$ is a non-singular $M$-matrix.
				In this case, the value function and optimal controls are given by $V = \hat{V}(X_t, Y_t)$, $\hat{\Pi}_t = \hat{\pi}(Y_t)$, and $\hat{\Xi}_t = \hat{\xi}(Y_t)$, where \[
						\hat{V}(x, y) = \frac{x^{1-R}}{1-R} f(y), \quad \hat{\pi}(y) = \frac{\lambda(y)}{R \sigma(y)}, \quad \hat{\xi}(y) = f(y)^{-\frac{1}{R}}
				,\] and where $f\in \R^{N}$ with $f > 0$ is the unique positive solution to the equation $A_{\eta,Q} f = f^{1-\frac{1}{R}}$.
		\end{theorem}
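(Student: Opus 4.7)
The plan is to combine the algebraic characterisation of the matrix HJB equation \eqref{finite_state_space:hjb} from \cref{finite_state_space:z_matrix:criterion,finite_state_space:z_matrix:uniqueness} with a stochastic verification argument. Since $Q$ has non-positive off-diagonal entries as a $Q$-matrix, $A_{\eta, Q} = \diag(\eta) - \frac{1}{R}Q$ is a $Z$-matrix, and the exponent $p = 1 - \frac{1}{R} < 1$ lies in the range covered by \cref{finite_state_space:z_matrix:criterion}. Hence \eqref{finite_state_space:hjb} admits a positive solution $f$ if and only if $A_{\eta, Q}$ is a non-singular $M$-matrix, and \cref{finite_state_space:z_matrix:uniqueness} provides uniqueness of this positive solution.

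Assuming the $M$-matrix condition, I would verify that $\hat V(x, y) = \frac{x^{1-R}}{1-R} f(y)$ coincides with the value function and that $(\hat\pi, \hat\xi)$ is optimal through a standard stochastic verification argument. Applying Itô's formula to $e^{-\int_0^t \delta(Y_s) \d s}\hat V(X_t^{\Pi, \Xi}, Y_t)$ along an arbitrary admissible strategy, \eqref{finite_state_space:hjb} implies that the drift plus the discounted running reward $e^{-\int_0^t \delta(Y_s) \d s}\frac{(\Xi_t X_t^{\Pi,\Xi})^{1-R}}{1-R}$ is non-positive pathwise, with equality under the candidate controls. The lower bound $V \ge \hat V$ then follows by computing the expected utility under the candidate controls directly: the candidate optimal wealth process is an explicit exponential with Markov-modulated coefficients, and since $E$ is finite, the two-sided bound $0 < \min_y f(y) \le f(y) \le \max_y f(y)$ makes the requisite convergence transparent. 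For the upper bound $V \le \hat V$ in the case $R \in (0, 1)$, both $\hat V$ and the running reward are non-negative, so a standard localisation combined with Fatou's lemma delivers the bound at once.

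The delicate case is the upper bound $V \le \hat V$ for $R > 1$, which is precisely where the proof in \cite{Sotomayor2009} is incomplete (see the preceding footnote). Here both $\hat V$ and the running reward are non-positive, and the terminal term $\E[e^{-\int_0^T \delta(Y_s) \d s}\hat V(X_T^{\Pi,\Xi}, Y_T)]$ need not converge to zero for arbitrary admissible strategies. The plan is to restrict attention to strategies with finite expected utility (all others are irrelevant for the supremum) and then to exploit the bound $|\hat V(x, y)| \le \frac{\max_y f(y)}{R-1}\, x^{1-R}$ together with the integrability of the running reward to obtain the required tail control on $\E[e^{-\int_0^T \delta(Y_s) \d s} X_T^{1-R}]$.

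For the necessity direction, suppose $A_{\eta, Q}$ is not a non-singular $M$-matrix. For $R \in (0, 1)$, I would compute the expected utility under a constant-proportional strategy $(\hat\pi, c)$ with $c > 0$: a Feynman--Kac argument gives $\E[e^{-\int_0^t \delta(Y_s) \d s} X_t^{1-R}] = x^{1-R} (e^{-t(R A_{\eta, Q} + (1-R)c\operatorname{Id})}\one)(y)$, whose integrability in $t$ on $[0, \infty)$ is governed by the spectrum of $A_{\eta, Q}$. Failure of the $M$-matrix condition means, via the eigenvalue characterisation in \cref{finite_state_space:$M$-matrix:properties}, that $A_{\eta, Q}$ has an eigenvalue with non-positive real part, and an appropriate choice of $c$ then makes the integral diverge, yielding $V = +\infty$ somewhere. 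For $R > 1$, I would argue indirectly: if the problem were well-posed, the homogeneity $V(\alpha x, y) = \alpha^{1-R} V(x, y)$ would force $V(x, y) = \frac{x^{1-R}}{1-R}\tilde f(y)$ for some positive $\tilde f$, and a dynamic-programming-principle argument would show that $\tilde f$ solves \eqref{finite_state_space:hjb}, forcing $A_{\eta, Q}$ to be a non-singular $M$-matrix by \cref{finite_state_space:z_matrix:criterion} --- a contradiction. The principal obstacle throughout is the verification for $R > 1$, where the absence of a free sign-based argument for the transversality term necessitates a careful tail analysis along arbitrary wealth trajectories.
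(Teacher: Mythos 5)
Your outline of the algebraic step (existence and uniqueness of $f > 0$ via \cref{finite_state_space:z_matrix:criterion,finite_state_space:z_matrix:uniqueness}) matches the paper exactly. The verification and necessity steps, however, diverge and both contain gaps.

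For the verification when $R > 1$, your plan is to bound the terminal term $\E[e^{-\int_0^T \delta(Y_s)\,ds}\hat V(X_T,Y_T)]$ by restricting to finite-utility strategies and exploiting integrability of the running reward. This does not go through: finite expected utility controls $\E[\int_0^\infty e^{-\int_0^t\delta\,ds}(\Xi_t X_t)^{1-R}\,dt]$, which gives no direct handle on $\E[e^{-\int_0^T\delta\,ds}X_T^{1-R}]$, and indeed --- as the paper's own footnote explains --- even in a Black--Scholes market some admissible constant-proportional strategies violate the required transversality condition, so a direct Itô/Fatou argument cannot cover all admissible competitors. The paper side-steps this by adapting the perturbation argument of Herdegen et al., which requires checking only three facts about the \emph{candidate} strategy: that the candidate attains $\hat V$, that the candidate's transversality term vanishes, and that the candidate's value process $\Lambda_t$ is a true martingale. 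These are verified via Feynman--Kac, reducing to showing that $B := RA_{\eta,Q} + (1-R)\diag(f^{-1/R})$ is a non-singular $M$-matrix (since $Bf = f^{1-1/R}>0$), whence $e^{-Bt}\to 0$. Your proposal does not engage with this structure and leaves exactly the hard part open.

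For necessity, the paper uses a single argument covering both $R<1$ and $R>1$: perturb $\delta \mapsto \delta + \Delta_n$ so that $A(\Delta_n)$ is a non-singular $M$-matrix with solution $f_n$, show $f_n$ is monotone increasing by comparison and must diverge (else the limit would solve the equation for $A(\Delta^*)$ which is \emph{not} a non-singular $M$-matrix, contradicting \cref{finite_state_space:z_matrix:criterion}), and then use monotonicity of the value function in $\delta$ together with the sign of $\frac{1}{1-R}$ to conclude $|V| = \infty$. Your split approach has two problems. The $R<1$ Feynman--Kac calculation for a constant-proportional strategy fails at the boundary case $\sigma_{\min}(A_{\eta,Q}) = 0$: for every fixed $c>0$ the matrix $RA_{\eta,Q}+(1-R)c\,\operatorname{Id}$ is then still a non-singular $M$-matrix and the integral converges, so you would need an additional limiting argument as $c\downarrow 0$ (tracking the rate $c^{1-R}(B^{-1}\mathbf 1)(y)\sim c^{-R}\to\infty$) that your sketch omits. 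The $R>1$ argument via homogeneity and a dynamic programming principle is circular in spirit: establishing that a finite value function must satisfy the matrix HJB \eqref{finite_state_space:hjb} in the classical sense requires essentially the verification machinery you are trying to avoid, and making the DPP step rigorous here is a substantial undertaking, not a routine reduction.
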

		\begin{proof}
				Note that $A_{\eta,Q} = \diag(\eta) - \frac{1}{R} Q$ is a $Z$-matrix since $Q$ is a Q-matrix.

				First, assume that $A_{\eta,Q}$ is a non-singular $M$-matrix.
				In this case, a unique solution $f > 0$ to the Matrix HJB equation $A_{\eta, Q} f = f^{1-\frac{1}{R}}$ exists by \cref{finite_state_space:z_matrix:criterion}.

                We aim to verify that $\hat{V}$ is indeed the value function. 
                The perturbation argument for this from the Black-Scholes case by \citet[Thm.~5.1, Cor.~5.4]{Herdegen2021} carries over to the stochastic factor setting.
                The only facts used in the argument that require special consideration in our setting are that 
                \begin{enumerate}[(\roman*),ref=\roman*]
                    \item \label{finite_state_space:verification:attained} $\E[\int_{0}^{\infty} \exp(-\int_{0}^{t} \delta(Y_s) \d s) \frac{(\hat{\Xi}_t X^{\hat{\Pi}, \hat{\Xi}}_t)^{1-R}}{1-R}\d t | X_0 = x, Y_0 = y] = \hat{V}(x, y)$,
                    \item \label{finite_state_space:verification:transversality} $\liminf_{t\to \infty} \E[\exp(-\int_{0}^{t} \delta(Y_s) \d s) \hat{V}(X^{\hat{\Pi}, \hat{\Xi}}_t, Y_t) ] = 0$,
                    \item \label{finite_state_space:verification:martingale} $\Lambda_t := \int_{0}^{t} \exp(-\int_{0}^{\infty} \delta(Y_s) \d s) \frac{(\hat{\Xi}_t X^{\hat{\Pi}, \hat{\Xi}}_t)^{1-R}}{1-R} \d t + \exp(-\int_{0}^{t} \delta(Y_s) \d s) \hat{V}(X^{\hat{\Pi}, \hat{\Xi}}_t, Y_t)$ is a martingale.
                \end{enumerate}
                
                For brevity, denote the candidate wealth process by $\hat{X} = X^{\hat{\Pi}, \hat{\Xi}}$, which is given by \[
						\hat{X}_t = X_0 \exp\left(\int_{0}^{t} \left(r(Y_t) + \frac{\lambda(Y_t)^2}{R} - f(Y_t)^{-\frac{1}{R}}\right) \d t\right) \mathcal{E}\left(\frac{\lambda(Y_\cdot)}{R} \cdot W\right)_t
				.\] 
                From this and Fubini-Tonelli, we get 
                \begin{multline*}
                    \E\left[\int_{0}^{\infty} \exp\left(-\int_{0}^{t} \delta(Y_s) \d s\right) \frac{(\hat{\Xi}_t \hat{X}_t)^{1-R}}{1-R}\d t \condbar Y_0 = y \right] = \\ =
                    \frac{X_0^{1-R}}{1-R} \int_{0}^{\infty} \E\left[\exp\left( - \int_{0}^{t} \left(R \eta(Y_s) + (1-R) f(Y_s)^{-\frac{1}{R}}\right) \d s \right) f(Y_t)^{1-\frac{1}{R}} \condbar Y_0 = y \right] \d t
                .\end{multline*}
                Set \[
						(h(t))_j = \E\left[\exp\left( - \int_{0}^{t} \left( R \eta(Y_s) + (1-R) f(Y_s)^{-\frac{1}{R}} \right) \d s \right) f(Y_t)^{1-\frac{1}{R}} \condbar Y_0 = j\right] 
				.\] 
				By Feynman-Kac, $h$ satisfies \[
						h' = Qh - \left( R \diag(\eta) + (1-R) \diag\left(f^{-\frac{1}{R}}\right) \right) h = - \left(RA + (1-R) \diag\left( f^{-\frac{1}{R}} \right) \right) h
				\] and $h(0) = f^{1-\frac{1}{R}}$, so $h(t) = \exp(-B t) f^{1-\frac{1}{R}}$, where $B = RA + (1-R) \diag(f^{-\frac{1}{R}})$.
				Since
				\[
						Bf = Rf^{1-\frac{1}{R}} + (1-R) f^{1-\frac{1}{R}} = f^{1-\frac{1}{R}} > 0
				,\]
				$B$ is a non-singular $M$-matrix by condition \eqref{finite_state_space:$M$-matrix:properties:positive_image} in \cref{finite_state_space:$M$-matrix:properties}.
				By condition \eqref{finite_state_space:$M$-matrix:properties:eigenvalues}, this implies that the real parts of all eigenvalues of $B$ are positive. 
				This yields that \[
                        \frac{X_0^{1-R}}{1-R} \int_{0}^{\infty} h(t) \d t = \frac{X_0^{1-R}}{1-R} B^{-1} f^{1-\frac{1}{R}} = \frac{X_0^{1-R}}{1-R} f
                ,\] which is exactly \eqref{finite_state_space:verification:attained}.

                For \eqref{finite_state_space:verification:transversality}, we have 
                \begin{multline*}
                    \E\left[\exp\left(-\int_{0}^{t} \delta(Y_s) \d s\right) \hat{V}(\hat{X}_t, Y_t) \right] = \E\left[\exp\left(-\int_{0}^{t} \delta(Y_s) \d s\right) \frac{\hat{X}_t^{1-R}}{1-R} f(Y_t) \right] =\\= \frac{X_0^{1-R}}{1-R} \exp(-Bt) f \to 0 \text{ as } t \to \infty
                .\end{multline*}

                Finally, for \eqref{finite_state_space:verification:martingale}, notice that 
                \begin{multline*}
                        \hat{X}_t^{2-2R} =\\= X_0^{2-2R} \exp\left((2-2R) \int_{0}^{t} \left(r(Y_s) + \frac{\lambda(Y_s)^2}{R} - f(Y_s)^{-\frac{1}{R}} + \frac{1}{2} \frac{1-2R}{R^2} \lambda(Y_s)^2 \right) \d s \right) \\ \mathcal{E}\left(\frac{2-2R}{R} \lambda(Y_\cdot) \cdot W\right)_t
                .\end{multline*}
                Since $E$ is finite, all functions from $E$ to $\R$ are bounded, and so there exists some constant $C > 0$ s.t.\ $\E[\hat{X}_t^{2-2R}] \le X_0^{2-2R} e^{Ct}$.
                Hence, \[
                        \E\left[\int_{0}^{t} \hat{X}_s^{2-2R} \d s\right] < \infty
                \] for all $t > 0$.
                By Itô's formula for Markov-modulated diffusion processes (see \cref{auxiliary:ito_markov_modulated}) and the HJB equation we have \[
                        \Lambda_t = \int_{0}^{t} \exp\left(- \int_{0}^{t} \delta(Y_s) d s\right) \frac{\lambda(Y_s)}{R} f(Y_s)^{1-\frac{1}{R}} \hat{X}^{1-R} dW_s + M_t
                ,\] $M$ is a true martingale by the square-integrability of $\hat{X}^{1-R}$.
                By considering the quadratic variation and using the boundedness of the $\delta$, $\lambda$, and $f$, as well as the square-integrability of $\hat{X}^{1-R}$, the stochastic integral is also a true martingale.

                With these ingredients, the perturbation argument from \cite[Thm.~5.1, Cor.~5.4]{Herdegen2021} then yields that $\hat{V}$ is indeed the value function, and that $\hat{\Pi}$ and $\hat{\Xi}$ are the optimal controls.
                				
				Now, assume that $A_{\eta, Q}$ is not a non-singular $M$-matrix.
				Denote the minimal real part of the eigenvalues of $A_{\eta, Q}$ by $\sigma_{min}(A_{\eta, Q}) = \min_{\lambda \in \sigma(A_{\eta, Q})} \operatorname{Re}(\lambda)$, where $\sigma(A_{\eta, Q})$ is the spectrum of $A_{\eta, Q}$.
				By \eqref{finite_state_space:$M$-matrix:properties:eigenvalues} in \cref{finite_state_space:$M$-matrix:properties}, $A_{\eta, Q}$ not being a non-singular $M$-matrix means that $\sigma_{min}(A_{\eta, Q}) \le 0$.

				Let $\Delta^{*} = -R\sigma_{min}(A_{\eta, Q})$.
				For $\Delta \in \R$, define $\eta(\Delta): E \mapsto \R$ and $A(\Delta) \in \R^{N \times N}$ by
				\begin{align*}
						\eta(\Delta) &= \frac{1}{R} \left(\delta + \Delta \one - (1-R) \left(r + \frac{\lambda^2}{2}\right)\right), \\
						A(\Delta) &= \diag(\eta(\Delta)) - \frac{1}{R} Q = A_{\eta,Q} + \frac{\Delta}{R} \operatorname{Id}
				.\end{align*}
				We have $\sigma_{min}(A(\Delta^{*})) = 0$.
				Set $\Delta_n = \Delta^{*} + \frac{R}{n}$.
				Since the control problem is monotone in $\delta$, it suffices to prove that $(f_n)_i \to \infty$ for some $i=1,\ldots,N$, where $f_n > 0$ is the solution to $A(\Delta_n) f_n = f_n^{1-\frac{1}{R}}$.
				Note that $\sigma_{min}(A(\Delta_n)) = \frac{1}{n} > 0$, so $A(\Delta_n)$ is a non-singular $M$-matrix by \eqref{finite_state_space:$M$-matrix:properties:eigenvalues} in \cref{finite_state_space:$M$-matrix:properties}, and $f_n$ is well-defined by \cref{finite_state_space:z_matrix:criterion}.

				We have \[
						A(\Delta_n) f_{n+1} = \left(A(\Delta_{n+1}) + \frac{1}{R} (\Delta_n - \Delta_{n+1}) \operatorname{Id}\right) f_{n+1} \ge A(\Delta_{n+1}) f_{n+1} = f_{n+1}^{1-\frac{1}{R}}
				,\] so $f_n \le f_{n+1}$ by \cref{finite_state_space:z_matrix:uniqueness}.
				Hence, $f := \lim_{n\to \infty} f_n$ exists in $[f_1, \infty]$.
				Assume for sake of contradiction that $f_i < \infty$ for all $i=1,\dots,N$.
				Since $A(\Delta_n) \to A(\Delta^{*})$ and $f_n \to f$, we have $A(\Delta^{*}) f = f^{1-\frac{1}{R}}$.
				Since $\sigma_{min}(A(\Delta^{*})) = 0$, $A(\Delta^{*})$ is not a non-singular $M$-matrix by condition \eqref{finite_state_space:$M$-matrix:properties:eigenvalues} in \cref{finite_state_space:$M$-matrix:properties}, which is a contradiction to \cref{finite_state_space:z_matrix:criterion}.
				Hence, $f_i = \infty$ for some $i = 1,\ldots,N$, and so the problem \eqref{setting:control_problem} is ill-posed.
		\end{proof}

        \begin{remark}
            \label{finite_state_space:wellposedness:easy_criteria}
            
            \Cref{finite_state_space:criterion} has some immediate consequences: 
				\begin{enumerate}
						\item The problem is well-posed if $\eta_i > 0$ for all $i=1,\ldots,N$: In this case, $A_{\eta, Q}$ is a non-singular $M$-matrix by condition \eqref{finite_state_space:$M$-matrix:properties:positive_image} of \cref{finite_state_space:$M$-matrix:properties} since $A_{\eta, Q} \one = \eta > 0$.
						\item The problem is ill-posed if $\eta_i \le 0$ for all $i=1,\ldots,N$: In this case, $A_{\eta, Q} - \diag(\eta) = \frac{1}{R} Q $ is singular, so by condition \eqref{finite_state_space:$M$-matrix:properties:diagonal_nonsingular} of \cref{finite_state_space:$M$-matrix:properties}, $A_{\eta, Q}$ is not a non-singular $M$-matrix.
						\item The problem is ill-posed if $\eta_i \le -\frac{1}{R} \sum_{j\neq i} q_{ij}$ for some $i=1,\ldots,N$: In this case, the $i$\nobreakdash-th diagonal element of $A_{\eta, Q}$ is non-positive, which implies that $A$ is not a non-singular $M$-matrix by \cref{finite_state_space:$M$-matrix:properties}.
								The financial interpretation of this is as follows: If the bound is violated, then (when starting in state $i$) the agent accumulates infinite expected utility before leaving state $i$ for the first time. 
								To see this, notice that $\sum_{j\neq i} q_{ij}$ is the rate at which the regime process jumps out of state $i$, and that $\eta_i + \frac{1}{R} \sum_{j\neq i} q_{ij}$ is the well-posedness constant for the consumption problem in a Black-Scholes market with coefficients from state $i$ and $\operatorname{Exp}\left(\sum_{j\neq i} q_{ij}\right) $-distributed random time horizon (see \cite[Thm.~VI]{Merton1971}).
						\item Heuristically, the problem is ill-posed if the average of the frozen consumption rates is negative:
								In order for the problem to be well-posed, is is necessary that $\det(A_{\eta, Q}) > 0$.
								Assume that $Q$ is an irreducible Q-matrix with invariant measure $\pi$.
								Then $\operatorname{rk}(Q) = N-1$, and we have $Q^{T} \pi = 0$ and $Q \one = 0$.
								By the formula for the adjugate of an $N\times N$-matrix with rank $N-1$, we have $\operatorname{adj}(-Q) = C \one \pi^{T}$ for some scalar $C \neq 0$.
								Since the submatrix obtained by deleting the $i $-th row and column of $-Q$ is diagonally dominant with non-negative diagonal, it has a non-negative determinant, and so we have $C > 0$ since $\one \pi^{T} \ge 0$.
								By Jacobi's formula, we obtain \[
										\frac{\partial}{\partial \eta_i} \det \left( \diag(\eta) - \frac{1}{R} Q \right)  \Bigr|_{\eta = 0} = \operatorname{adj}\left(-\frac{1}{R} Q\right)_{ii} = \frac{C}{R^{N-1}} \pi_i
								.\] 
								Hence, a first order Taylor expansion yields that for small $\eta$ \[
										\det(A_{\eta, Q}) = \det\left( \diag(\eta) - \frac{1}{R} Q \right) = \frac{C}{R^{N-1}} \E_{Y\sim \pi}[\eta(Y)] + o(\|\eta\|)
								.\]
				\end{enumerate}
		\end{remark}

		\subsection{Examples}

		\subsubsection{Cyclic Model}

		Consider the case where $Y$ cycles through the different states one after the other, jumping from state $i$ to state $(i+1) \mod N$ at rate $q_i$.
		Economically, this would correspond to e.g.\ a business cycle model.
		The Q-matrix is given by \[
				Q = \begin{pmatrix}
						-q_1 & q_1 & 0 & 0 & \cdots & 0 & 0\\
						0 & -q_2 & q_2 & 0 & \cdots & 0 & 0 \\
						\vdots & \vdots & \vdots & \vdots & \ddots & \vdots & \vdots \\
						q_N & 0 & 0 & 0 & \cdots & 0 & -q_N
				\end{pmatrix}
		.\]
		Assume that $\eta_i > -\frac{q_i}{R}$ for all $i=1,\ldots,N$ to ensure the diagonal elements of $A_{\eta, Q}$ are positive, otherwise the problem is ill-posed by \cref{finite_state_space:wellposedness:easy_criteria}.
        By condition \eqref{finite_state_space:$M$-matrix:properties:principal_minors} in \cref{finite_state_space:$M$-matrix:properties}, $A_{\eta, Q}$ is a non-singular $M$-matrix if and only if the leading principal minors of $A_{\eta, Q}$ are positive.
		The leading principal minors $(m_n)_{1 \leq n \leq N}$ are given by
		\begin{align*}
				m_n &= \prod_{i=1}^{n} \left(\eta_i + \frac{q_i}{R}\right), \quad n < N \\
				m_N &= \det(A_{\eta, Q}) = \prod_{i=1}^{N} \left(\eta_i + \frac{q_i}{R}\right) - \prod_{i=1}^{N} \left(\frac{q_i}{R}\right)
		.\end{align*}
		By assumption, $m_n > 0$ for $n=1,\ldots,N-1$.
		By \cref{finite_state_space:criterion}, the problem is thus well-posed if and only if $m_N > 0$, i.e. \[
				\prod_{i=1}^{N} \left( 1 + \frac{R}{q_i} \eta_i \right) > 1
		.\]
		Notice that \[
				\prod_{i=1}^{N} \left( 1 + \frac{R}{q_i} \eta_i \right) = \exp\left( \sum_{i=1}^{N} \log\left( 1 + \frac{R}{q_i} \eta_i \right)  \right) \le \exp\left( R\sum_{i=1}^{N} \frac{\eta_i}{q_i} \right) = \exp\left( C \E_{Y\sim \pi}[\eta(Y)]  \right)
		,\] where $C = R \frac{\sum_{i=1}^{N} \prod_{j\neq i} q_j}{\prod_{i=1}^{N} q_i} > 0$ is independent of $\eta$ and $\pi = (\pi_i)_{i=1,\dots,N}$ with $\pi_i = \frac{\prod_{j\neq i} q_j}{\sum_{k=1}^{N} \prod_{j\neq k} q_j}$ is the stationary distribution of the Markov chain.
		Hence, the problem is ill-posed if the average of the frozen consumption rates is non-positive under the stationary measure.
        This means that the Taylor expansion from \cref{finite_state_space:wellposedness:easy_criteria} gives a suffcient criterion for ill-posedness in this model.
		The converse inequality, i.e.\ the problem being well-posed if it is well-posed in the average of the individual states, does not hold due to the effects of higher order terms in $\eta$.

		\subsubsection{Nearest-Neighbour Model}

		Consider the case when the factor can jump only to its direct neighbours, i.e. \[
				Q = \begin{pmatrix}
						-q_{1,+} & q_{1,+} & 0 & \cdots & 0 & 0 & 0 \\
						q_{2,-} & -(q_{2,-}+q_{2,+}) & q_{2,+} & \cdots & 0 & 0 & 0 \\
						\vdots & \vdots & \vdots & \ddots & \vdots & \vdots & \vdots \\
						0 & 0 & 0 & \cdots & q_{N-1,-} & -(q_{N-1,-} + q_{N-1,+}) & q_{N-1,+} \\
						0 & 0 & 0 & \cdots & 0 & q_{N,-} & -q_{N,-}
				\end{pmatrix}
		.\]
		Markov chains of this type arise e.g.\ as discretisations of diffusion processes.
		For convenience of notation, set $q_{1,-} = q_{N,+} = 0$.
		As in the cyclic model, we assume that $\eta_i > - \frac{q_{i,-} + q_{i,+}}{R}$ for all $i=1,\ldots,N$ since the problem is ill-posed otherwise by \cref{finite_state_space:wellposedness:easy_criteria}.
		The principal minors of $A_{\eta, Q} = \eta - \frac{1}{R} Q$ are given by the recursion \begin{align*}
				m_{-1} &= 0, \quad m_0 = 1, \\ 
				m_n &= \left( \eta_n + \frac{q_{n,-} + q_{n,+}}{R} \right) m_{n-1} - \frac{q_{n-1,+} \times q_{n,-}}{R^2} m_{n-2} \quad \text{for} \quad n=1,\ldots,N
		.\end{align*}
		  By condition \eqref{finite_state_space:$M$-matrix:properties:principal_minors} in \cref{finite_state_space:$M$-matrix:properties}, $A_{\eta, Q}$ is a non-singular $M$-matrix (and thus the problem well-posed by \cref{finite_state_space:criterion}) if and only if $m_i > 0$ for all $i=1,\ldots,N$.
		Note that the principal minors can become very large if $N$ is large.
		In this case, it is more numerically stable to instead check if the ratio $r_i = \frac{m_i}{m_{i-1}}$ satisfies $r_i > 0$ for $i=1,\ldots,N$.
		The sequence $(r_i)_{i=1,\ldots,N}$ is given by the recursion \[
				r_1 = \eta_1 + \frac{q_{1,+}}{R}, \quad r_{n} = \eta_n + \frac{q_{n,-} + q_{n,+}}{R} - \frac{q_{n-1,+} \times q_{n,-}}{R^2 r_{n-1}} \quad \text{ for } \quad n=2,\ldots,N
		.\] 

		\section{Stochastic Factor with Diffusion Dynamics}
		\label{section:diffusion}

		\subsection{Setting and HJB equation}
        
		We now consider a setting with a continuous state space, where the stochastic $Y$ factor is given by a one-dimensional Itô diffusion with dynamics\[
				dY_t = a(Y_t) dt + b(Y_t) d\tilde{W}_t
		\] and state space $E \subseteq \R$, where $E$ is an open interval and $W$ and $\tilde{W}$ are Brownian motions with correlation $\rho(Y_t)$. \footnote{The extension to a multi-dimensional factor process is non-trivial and left for future research.}
        We assume that $r, \lambda, \sigma, a, b, \rho$ as well as the discount rate $\delta$ are locally Lipschitz-continuous functions, and that $b(y) \neq 0$,  $\sigma(y) > 0 $ as well as $ \rho(y) \in [-1, 1] $ for all $y \in E$.
        The dependence of the coefficients on $Y$ will be omitted from here on for ease of notation.
        
        The Hamilton-Jacobi-Bellman (HJB) equation in this setting is given by \begin{multline*}
                \sup_{(\pi, \xi) \in \R \times (0, \infty)} \bigg\{ \frac{(\xi x)^{1-R}}{1-R} + (r + \pi \lambda \sigma - \xi) x \frac{\partial V}{\partial x} + \frac{1}{2} \pi^2 \sigma^2 x^2 \frac{\partial^2 V}{\partial x^2} +\\+ a \frac{\partial V}{\partial y} + \frac{1}{2} b^2 \frac{\partial^2 V}{\partial y^2} + \rho b \pi \sigma x  \frac{\partial^2 V}{\partial x \partial y} - \delta V \bigg\} = 0
        .\end{multline*}
		After the transformation $V(x, y) = \frac{x^{1-R}}{1-R} f(y)$, the HJB equation becomes 
        \begin{multline*}
                \sup_{(\pi, \xi) \in \R \times (0, \infty)} \bigg\{ \xi^{1-R} + (1-R) (r + \pi \lambda \sigma - \xi) f - (1-R) \frac{R}{2} \pi^2 \sigma^2 f +\\+ (1-R) \rho \pi \sigma b f' + af' + \frac{1}{2} b^2 f'' - \delta f \bigg\} = 0
        .\end{multline*}
        Solving the first-order conditions, we get
		\begin{align}
				\label{hjb:hjb:pre_transformation}
				0 = \frac{1}{2}b^2 f'' + &\left(a + \frac{1-R}{R} \rho \lambda b\right) f' - R \eta f  + R f^{1-\frac{1}{R}} + \frac{1}{2} \frac{1-R}{R} \rho^2 b^2 \frac{(f')^2}{f} 
		,\end{align}
		with candidate optimal control functions \[
				\hat{\xi} = f^{-\frac{1}{R}}, \quad \hat{\pi} = \frac{1}{\sigma R} \left(\lambda + \rho b \frac{f'}{f}\right)
		.\] 
		After the transformation $f = u^{-R}$, the equation becomes 
		\begin{align}
                \label{hjb:hjb}
				\qquad 0 &= \frac{1}{2} b^2 u'' + \left(a + \frac{1-R}{R} \rho \lambda b\right) u' + \eta u  - u^2 - \frac{1}{2} b^2 \left(\left(1 - \rho^2\right) R + \rho^2 + 1\right) \frac{(u')^2}{u} \notag \\
                &= \frac{1}{2} b^2 u'' + \tilde a u' + \eta u  - u^2 - d \frac{(u')^2}{u},
		\end{align}
        where $\tilde{a} := a + \frac{1-R}{R} \rho \lambda b$ and $d := \frac{1}{2} b^2 ((1-\rho^2) R + \rho^2 + 1)$. 
        The candidate optimal controls are now given by
		\[
				\hat{\xi} = u, \quad \hat{\pi} = \frac{1}{\sigma R} \left(\lambda - R \rho b \frac{u'}{u}\right)
		,\] i.e., we parametrise the problem in terms of the consumption rate.

		In the case of constant correlation $\rho(y) \equiv \rho$, using a distortion transform $f = v^{\varphi}$ in \eqref{hjb:hjb} with $\varphi = \frac{1}{1 - \frac{R-1}{R} \rho^2}$ yields 
		\begin{align}
				\label{hjb:hjb:constant_correlation}
				0 = \frac{1}{2} b^2 v'' + \tilde a v' - \frac{R}{\varphi} \eta v + \frac{R}{\varphi} v^{1 - \frac{\varphi }{R}}
		,\end{align}
		with candidate optimal controls \[
				\hat{\xi} = v^{-\frac{\varphi}{R}}, \quad \hat{\pi} = \frac{1}{\sigma R} \left(\lambda + \rho \varphi b \frac{v'}{v}\right) 
		.\] 
		This parametrization is useful in some places as it removes the non-linear first-order term.
		It can be interpreted as the HJB equation \eqref{hjb:hjb:pre_transformation} of a model with an independent factor, $\tilde{\rho} \equiv 0$, with adjusted drift $\tilde{a}$, and risk aversion $\tilde{R} = \frac{R}{\varphi} = (1 - \rho^2) R + \rho^2$, while keeping the frozen consumption rate $\tilde{\eta} = \eta$ the same.

        The results in the following sections rely on the general theory of sub- and supersolutions to second-order problems without boundary values that is presented in \cref{section:global_existence}.
        We state here the main notions that are needed for the following sections, the reader is directed to \cref{section:global_existence} for the theory in its full generality.

        We call a function $\alpha: E \to [0, \infty)$ that is $C^2$ apart from a finite set of kinks at which $\alpha$ is left- and right-differentiable a \emph{subsolution} to the HJB equation \eqref{hjb:hjb} if \[
                \frac{1}{2} b^2 \alpha'' \ge -\tilde{a} \alpha' - \eta \alpha + \alpha^2 + d \frac{(\alpha')^2}{\alpha} 
        \] 
        and if, at the kinks,  $D^{-} \alpha < D^{+} \alpha$, where $D^{\pm}$ denotes the right and left derivative.
        Similarly, we call a function $\beta: E \to [0, \infty)$ that is $C^2$ apart from a finite set of kinks at which $\beta$ is left- and right-differentiable a \emph{supersolution} to the HJB equation \eqref{hjb:hjb} if \[
                \frac{1}{2} b^2 \beta'' \le -\tilde{a} \beta' - \eta \beta + \beta^2 + d \frac{(\beta')^2}{\beta} 
        \] and $D^{-} \beta > D^{+} \beta$ at the kinks.
        Sub- and supersolutions to \cref{hjb:hjb:pre_transformation,hjb:hjb:constant_correlation} are defined analogously.

		Notice that some of the transformations flip the notion of sub- and supersolutions:
		Supersolutions to \cref{hjb:hjb:pre_transformation,hjb:hjb:constant_correlation} correspond to subsolutions to \cref{hjb:hjb} (and vice versa for subsolutions).

        The main result of \cref{section:global_existence} is \cref{global_existence:existence}: 
        If $0 < \alpha \le \beta$ are sub- and supersolutions to \cref{hjb:hjb}, respectively, then there exists a global solution $u: E \to (0, \infty)$ to \cref{hjb:hjb} that satisfies $\alpha \le u \le \beta$. 
        We also have a corollary which extends the result on existence of a solution to the case where we only have $0 \leq \alpha_n \leq \beta$ for some sequence of subsolutions with $\sup_{n \in \N} \alpha_n > 0$.

	    \subsection{Construction of a Candidate Solution for $R > 1$}
        \label{section:hjb:existence:R>1}
        
        In this and the following section, we present results that guarantee a global positive solution to the HJB equations \eqref{hjb:hjb} and \eqref{hjb:hjb:constant_correlation}.
        In this section, we consider under minimal assumptions the case that the risk aversion $R$ is greater than unity (there are no corresponding results for the case that $R < 1$), and in the following section the general case under slighty stronger assumptions.
       
		If $R > 1$, we can always construct a subsolution (and thus a solution) to \cref{hjb:hjb:constant_correlation} if we have a supersolution. 
		In terms of the consumption rate, this means that we can always bound the optimal consumption rate away from $\infty$, provided that we can bound the optimal consumption rate away from $0$.

		\begin{lemma}
				\label{hjb:existence:abstract}
				Assume that $R > 1$, and that $\rho$ is constant.
                Let $E_m := [e^{-}_m, e^{+}_m] \uparrow E$ be an approximating sequence for the state space $E$.
                Assume that the principal eigenvalue of the operator $Lv := -\frac{1}{2}b^2 v'' - \tilde{a} v' + \frac{R}{\varphi} \eta v$ under Dirichlet boundary conditions is positive over $E_m$ for all $m \in \N$.
				Let $\beta \in C^1(E, (0, \infty))$ be a supersolution to \cref{hjb:hjb:constant_correlation}.
				Then \cref{hjb:hjb:constant_correlation} has a global solution $v$ with $0 < v \le \beta$.
		\end{lemma}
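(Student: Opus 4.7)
The plan is to apply the corollary of \cref{global_existence:existence} (in its analog for \cref{hjb:hjb:constant_correlation}): having $\beta$ at hand as a supersolution, it remains to exhibit a sequence of subsolutions $(\alpha_m)_{m\in\N}$ with $0\le \alpha_m \le \beta$ and $\sup_m \alpha_m > 0$ pointwise on $E$, and the corollary then produces a global solution $v$ with $\sup_m \alpha_m \le v \le \beta$, in particular $0 < v \le \beta$. I would build each $\alpha_m$ from a rescaled Dirichlet principal eigenfunction of $L$ on the approximating subinterval $E_m$, extended by zero to the rest of $E$.

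For each $m$, since $L$ has locally Lipschitz coefficients and is uniformly elliptic on the compact interval $E_m$, standard spectral theory (e.g.\ Krein--Rutman applied to the compact resolvent) yields a principal eigenpair $(\mu_m, \phi_m)$ with $L\phi_m = \mu_m \phi_m$, $\phi_m \in C^2(E_m)$ strictly positive on $(e_m^-, e_m^+)$, and $\phi_m = 0$ at the endpoints. The hypothesis gives $\mu_m > 0$, and Hopf's boundary-point lemma gives $\phi_m'(e_m^-) > 0$ and $\phi_m'(e_m^+) < 0$. Setting $\alpha_m := \epsilon_m \phi_m$ on $E_m$ and $\alpha_m \equiv 0$ outside, for a small $\epsilon_m > 0$ still to be chosen, the strict sign of $\phi_m'$ at the boundary automatically delivers the kink condition $D^-\alpha_m < D^+\alpha_m$ at $e_m^\pm$.

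The substantive step is to verify the subsolution differential inequality $L\alpha_m \le \tfrac{R}{\varphi}\alpha_m^{1-\varphi/R}$ on each of the three pieces. Here I would exploit the hypothesis $R>1$: combined with $\rho^2 \in [0,1]$ it gives $\varphi/R \in [1/R, 1]$, so the exponent $1-\varphi/R$ lies in $[0, 1-1/R) \subset [0,1)$. On $E_m$ the inequality rearranges to $\epsilon_m^{\varphi/R}\mu_m \phi_m^{\varphi/R} \le R/\varphi$, and since $\varphi/R > 0$ and $\phi_m$ is bounded on $\overline{E_m}$ this can be enforced by taking $\epsilon_m$ sufficiently small. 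Outside $E_m$ the inequality is trivial (the borderline case $\rho^2 = 1$, where the exponent vanishes and the equation becomes linear, reduces to $0 \le R/\varphi$). A further reduction of $\epsilon_m$ (using that $\beta$ is bounded away from $0$ on $\overline{E_m}$) yields $\alpha_m \le \beta$ on $E$. Since $E_m \uparrow E$, any fixed $y \in E$ eventually lies in $E_m$ with $\alpha_m(y) > 0$, so $\sup_m \alpha_m > 0$ pointwise, and the invocation of the corollary of \cref{global_existence:existence} completes the argument.

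The main obstacle is precisely this scaling step: it is the strict sub-affineness of $v \mapsto v^{1-\varphi/R}$ near the origin, guaranteed by $R > 1$, that allows the small multiple $\epsilon_m \phi_m$ to sit below the nonlinear term and thereby qualify as a subsolution. For $R < 1$ the exponent $1-\varphi/R$ becomes negative, so the right-hand side blows up as $\alpha \to 0$, the extension-by-zero construction fails, and an entirely different approach is needed, which is why the lemma is restricted to $R > 1$.
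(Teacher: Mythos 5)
Your proof is correct, and it takes a genuinely different (and arguably more streamlined) route than the paper's. The paper first invokes \cref{bounded_domain:criterion:dirichlet} to obtain a positive solution $v_m$ of the full nonlinear equation \eqref{hjb:hjb:constant_correlation} under Dirichlet boundary conditions on $E_m$, then observes that $K_m^{-1}v_m$ with $K_m=\sup_{E_m}\tfrac{v_m}{\beta}\vee 1\ge 1$ is a subsolution lying below $\beta$ (using that $K_m^{-1}\le K_m^{-(1-\varphi/R)}$ since $1-\varphi/R\in[0,1)$), before extending by zero and invoking \cref{global_existence:existence:family}. You bypass \cref{bounded_domain:criterion:dirichlet} entirely and go straight to the principal Dirichlet eigenpair $(\mu_m,\phi_m)$ on $E_m$, which is precisely the object used \emph{inside} the proof of \cref{bounded_domain:criterion:dirichlet} to build its subsolution. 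The scaling computation $\epsilon_m^{\varphi/R}\mu_m\phi_m^{\varphi/R}\le \tfrac{R}{\varphi}$ and the case distinction for $\rho^2=1$ are both correct (in fact you handle $\rho^2=1$ more explicitly than the paper, which applies \cref{bounded_domain:criterion:dirichlet} stated only for exponent $p\in(0,1)$ even though $p=1-\varphi/R$ can be $0$ when $\rho^2=1$). The one spot where you should be more explicit, and where the paper is careful, is the Hopf step: since the zero-order coefficient $\tfrac{R}{\varphi}\eta$ of $L$ may be negative, Hopf's lemma should be applied to the shifted operator $L+\tfrac{R}{\varphi}\eta^-$, for which $(L+\tfrac{R}{\varphi}\eta^-)\phi_m=\mu_m\phi_m+\tfrac{R}{\varphi}\eta^-\phi_m\ge 0$ holds; the conclusion $\phi_m'(e_m^-)>0$, $\phi_m'(e_m^+)<0$ is the same, but stating the intermediate operator makes the appeal to the boundary-point lemma rigorous.
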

		\begin{proof}
				Notice that since $R > 1$, $\frac{R}{\varphi} = (1 - \rho^2) R + \rho^2 > 1$ as well.
				By \cref{bounded_domain:criterion:dirichlet}, there exists a positive solution $v_m$ to \cref{hjb:hjb:constant_correlation} over the domain $E_m$ under Dirichlet boundary conditions.
				Setting $K_m = \sup_{E_m} \frac{v_m}{\beta} \vee 1$, we have \[
						L(K_m^{-1} v_m) = K_m^{-1} L v_m = \frac{R}{\varphi} K_m^{-1} v_m^{1-\frac{\varphi}{R}} \le \frac{R}{\varphi} (K_m^{-1} v_m)^{1-\frac{\varphi}{R}}
				,\] i.e.\ $K_m^{-1} v_m$ is a subsolution to \cref{hjb:hjb:constant_correlation} over $E_m$.
				Note that $0$ is a global solution to \cref{hjb:hjb:constant_correlation}.
				Define \[
						\alpha_m(y) = \begin{cases}
								K_m^{-1} v_m(y), & y\in E_m, \\
								0, & \text{otherwise}.
						\end{cases}
				\]
				Since $v_m > 0$ on $E_m^\circ$, we have
				\[
						\left(L + \frac{R}{\varphi} \eta^{-}\right) v_m = v_m^{1-\frac{1}{R}} + \frac{R}{\varphi} \eta^{-} v_m \ge 0 \quad \text{ on } E_m^\circ
				,\]
				and $v_m = 0$ on $\partial E_m$ by the Dirichlet boundary conditions.
				Hence, Hopf's Lemma (applied to $L + \frac{R}{\varphi} \eta^{-}$) yields $v_m'(e^{-}_m) > 0, v_m'(e^{+}_m) < 0$.
				Thus, $\alpha_m$ satisfies the subsolution property at the kinks, and is hence a global subsolution.
				Note that $0 \le \alpha_m \le \beta $ by construction.

                Since $\alpha_m > 0$ on $E_m^{\circ}$, $\sup_{m \in \N} \alpha_m > 0$. 
                Thus, \cref{global_existence:existence:family} yields the existence of a global solution $v$ to \cref{hjb:hjb:constant_correlation} with $0 < v \le \beta$.
		\end{proof}

		Similarly, when the correlation is non-constant, it is enough to find a supersolution to the model with an independent factor and a modified drift. 

		\begin{corollary}
				\label{hjb:existence:abstract:general_correlation}
				Assume $R > 1$.
                Let $E_m := [e^{-}_m, e^{+}_m] \uparrow E$ be an approximating sequence for the state space $E$.
                Denote \[
                        L_0 v := -\frac{1}{2} b^2 v'' - \left(a + \frac{1-R}{R} \rho \lambda b\right) v'
                .\]
                Assume that the principal eigenvalues of the operators $L_0 + R \eta$ and $L_0 + \eta$ under Dirichlet boundary conditions are positive over $E_m$ for all $m \in \N$.
				Let $\beta \in C^1(E, (0, \infty))$ be a supersolution to the equation $L_0 v + R \eta v = R v^{1-\frac{1}{R}}$.              Then \cref{hjb:hjb:pre_transformation} has a global solution $0 < f \le \beta$.
		\end{corollary}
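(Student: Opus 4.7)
The strategy is to apply \cref{global_existence:existence:family} by exhibiting $\beta$ as a global supersolution and constructing a family of subsolutions $(\alpha_m)_{m\in\N}$ of \eqref{hjb:hjb:pre_transformation}. The key observation is that, since $R > 1$, the coefficient $\tfrac{1-R}{2R}\rho^2 b^2$ of the singular term $(f')^2/f$ in \eqref{hjb:hjb:pre_transformation} is non-positive, so the hypothesis $L_0\beta + R\eta\beta \geq R\beta^{1-1/R}$ immediately yields
\begin{equation*}
L_0\beta + R\eta\beta \geq R\beta^{1-1/R} + \tfrac{1-R}{2R}\rho^2 b^2 \tfrac{(\beta')^2}{\beta},
\end{equation*}
i.e., $\beta$ is automatically a supersolution of \eqref{hjb:hjb:pre_transformation} as well.

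For the subsolutions I mimic the construction from the proof of \cref{hjb:existence:abstract}. For each $m$, let $\phi_m > 0$ be the Dirichlet principal eigenfunction of $L_0 + R\eta$ on $E_m$ corresponding to its positive principal eigenvalue $\mu_m$. Fix an exponent $k > R \vee 2$ and a small constant $c_m > 0$, and set $\alpha_m := c_m \phi_m^k$ on $E_m$, extending by $0$ to $E \setminus E_m$. A direct computation using the eigenvalue equation and the identity $L_0(\phi_m^k) = -\tfrac{k(k-1)}{2}b^2\phi_m^{k-2}(\phi_m')^2 + k\phi_m^{k-1}L_0\phi_m$ reduces the subsolution inequality for \eqref{hjb:hjb:pre_transformation} to one whose $\phi_m^{k-2}(\phi_m')^2$-coefficient, namely $\tfrac{c_m k}{2R}[Rk(1-\rho^2) + k\rho^2 - R]$, is non-negative for $k \geq R$ and $|\rho| \leq 1$ when $R > 1$, while the remaining bulk term is of order $c_m$ and hence dominated as $c_m \to 0$ by the nonlinear right-hand side of order $c_m^{1-1/R}$ (again because $R > 1$ makes $c^{1-1/R}/c$ blow up). Choosing $c_m$ still smaller to guarantee $\alpha_m \leq \beta$ on $E_m$ completes the construction on $E_m^\circ$; since $k > 2$, both $\alpha_m'$ and $\alpha_m''$ vanish continuously at $\partial E_m$, so the extension by $0$ is $C^2$ and no kink condition needs to be verified.

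Because $\alpha_m > 0$ on $E_m^\circ$ and $E_m \uparrow E$, we have $\sup_{m \in \N}\alpha_m(y) > 0$ for every $y \in E$, and \cref{global_existence:existence:family} then produces a global solution $0 < f \leq \beta$ of \eqref{hjb:hjb:pre_transformation}. The main obstacle is the delicate interplay between the sublinear nonlinearity $Rv^{1-1/R}$ and the singular first-order term $\tfrac{1-R}{2R}\rho^2 b^2 (v')^2/v$ near $\partial E_m$: a naive subsolution of the form $c\phi_m$ fails because the singular term dominates the vanishing nonlinearity with the wrong sign, and it is precisely the power trick $\phi_m^k$ with $k > R \vee 2$ that balances these two singular behaviors, which is why the restriction $R > 1$ is essential here. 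The twin positivity assumptions on $L_0 + R\eta$ and $L_0 + \eta$ reflect the endpoint cases $\rho \equiv 0$ and $\rho \equiv \pm 1$ of the distortion interpolation $R/\varphi \in [1, R]$ from \cref{hjb:existence:abstract}, both of which are needed to accommodate a general non-constant $\rho$.
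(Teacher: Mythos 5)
Your approach is genuinely different from the paper's, and it contains a real gap.

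The paper never applies the Section~\ref{section:global_existence} machinery with vanishing subsolutions to the singular equation \eqref{hjb:hjb:pre_transformation}. Instead it first solves two auxiliary constant-correlation models via \cref{hjb:existence:abstract}: the $\rho\equiv 0$ model (giving $f_0$, and requiring positivity of the principal eigenvalue of $L_0 + R\eta$) and the $\rho\equiv 1$ model (giving $f_1\le f_0$, and requiring positivity of the principal eigenvalue of $L_0 + \eta$, since after the distortion transform one solves $L_0 v + \eta v = 1$). Only then does it feed $f_1$ and $f_0$, which are \emph{strictly positive} sub- and supersolutions of the singular equation \eqref{hjb:hjb:pre_transformation}, into \cref{global_existence:existence}. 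In all intermediate applications where the subsolution vanishes, the equation in question is the distorted one \eqref{hjb:hjb:constant_correlation}, whose right-hand side has no $(f')^2/f$ term.

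You instead try to apply \cref{global_existence:existence:family} directly to \eqref{hjb:hjb:pre_transformation} with subsolutions $\alpha_m = c_m\phi_m^k$ that vanish on $E \setminus E_m$. The algebra of the power trick is correct: the coefficient of $\phi_m^{k-2}(\phi_m')^2$ is $\tfrac{c_m k b^2}{2}\bigl[(k-1) + \tfrac{1-R}{R}\rho^2 k\bigr]$, which is nonnegative for $k \ge R$ when $R>1$, and the bulk term is of order $c_m$ while the nonlinearity is of order $c_m^{1-1/R}$, so taking $c_m$ small works. But the framework of Section~\ref{section:global_existence} requires the right-hand side $f(y,u,v)$ of $u'' = f(y,u,u')$ to be continuous on $E_1\times\tilde E_2\times\R$ and to satisfy the Nagumo condition on compact subsets $K \subset E_1\times\tilde E_2$. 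For \eqref{hjb:hjb:pre_transformation} the right-hand side contains $\tfrac{1-R}{2R}\rho^2 b^2\tfrac{v^2}{u}$, which is neither continuous nor Nagumo-bounded as $u\downarrow 0$ whenever $\rho\not\equiv 0$. Since your subsolutions take the value $0$, you are forced to include $0\in\tilde E_2$, which is precisely where the hypotheses of \cref{global_existence:nagumo}, \cref{global_existence:compact_existence}, and hence \cref{global_existence:existence:family} fail. The truncation in $v$ does not help, because $f_C(y,u,v)$ still blows up as $u\to 0$. The observation that $\alpha_m''$ vanishes continuously at $\partial E_m$ addresses the kink condition, but not the failure of continuity and the Nagumo condition of the right-hand side of the ODE over regions touching $\{u = 0\}$.

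A further sign that something is missing: your construction only uses the Dirichlet principal eigenfunction of $L_0 + R\eta$, so the positivity assumption on the principal eigenvalue of $L_0 + \eta$ never enters. The closing paragraph gestures at why both appear in the statement, but that is a post-hoc interpretation, not a use; a correct proof along your lines would prove the corollary with a strictly weaker hypothesis, which should make you suspicious.
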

		\begin{proof}
				First, consider a modified model with drift $a + \frac{1-R}{R} \rho \lambda b$ and correlation $0$.
				In this model, \cref{hjb:hjb:pre_transformation,hjb:hjb:constant_correlation} coincide and read \begin{equation}
						\label{hjb:existence:abstract:general_correlation:model_0}
						L_0 f + R \eta f = R f^{1-\frac{1}{R}}
						\tag{i}
				.\end{equation} 
				By \cref{hjb:existence:abstract}, there exists a global solution $f_0$ to \eqref{hjb:existence:abstract:general_correlation:model_0} with $0 < f_0 \le \beta $.

				Next, consider a second modified model with drift $a + \frac{1-R}{R} \rho \lambda b - \frac{1-R}{R} \lambda b$ and correlation $1$.
				In this model, \cref{hjb:hjb:pre_transformation} reads \begin{equation}
 						\label{hjb:existence:abstract:general_correlation:model_1}
						L_0 f + R \eta f = R f^{1-\frac{1}{R}} + \frac{1}{2} \frac{1-R}{R} b^2 \frac{(f')^2}{f}
						\tag{ii}
				\end{equation} 
				and \cref{hjb:hjb:constant_correlation} reads 
				\begin{equation}
						\label{hjb:existence:abstract:general_correlation:model_1:distortion}
						L_0 v + \eta v = 1
						\tag{iii}
				.\end{equation}
				Since $f_0$ is a solution to \eqref{hjb:existence:abstract:general_correlation:model_0} and $R > 1$, we have \[
						L_0 f_0 + R \eta f_0 = R f_0^{1-\frac{1}{R}} \ge R f_0 ^{1-\frac{1}{R}} + \frac{1}{2} \frac{1-R}{R} b^2 \frac{(f_0')^2}{f_0}
				,\] i.e.\ $f_0$ is a supersolution to \eqref{hjb:existence:abstract:general_correlation:model_1}.
				Since the transformation from \cref{hjb:hjb:pre_transformation} to \cref{hjb:hjb:constant_correlation} preserves supersolutions, $f_0^{R}$ is a supersolution to \eqref{hjb:existence:abstract:general_correlation:model_1:distortion}.
				By \cref{hjb:existence:abstract}, there exists a global solution $v_1$ to \eqref{hjb:existence:abstract:general_correlation:model_1:distortion} with $0 < v_1 \le f_0^{R}$.
				Transforming back, $f_1 := v_1^{\frac{1}{R}}$ is a global solution to \eqref{hjb:existence:abstract:general_correlation:model_1} with $0 < f_1 \le f_0$.

				Finally, \cref{hjb:hjb:pre_transformation} in the original model reads 
				\begin{equation}
						\label{hjb:existence:abstract:general_correlation:original}
						L_0 f + R \eta f = R f^{1-\frac{1}{R}} + \frac{1}{2} \frac{R-1}{R} \rho^2 b^2 \frac{(f')^2}{f}
						\tag{iv}
				.\end{equation}
				Since $R > 1$ and $\rho^2 \in [0, 1]$, $f_0$ is a supersolution and $f_1$ a subsolution to \eqref{hjb:existence:abstract:general_correlation:original}.
				By \cref{global_existence:existence}, there exists a global solution $f$ to \eqref{hjb:existence:abstract:general_correlation:original} that satisfies $0 < f_1 \le f \le f_0 \le \beta$.
		\end{proof}

		In uniformly well-posed models, there exists a trivial supersolution to \cref{hjb:hjb:constant_correlation}. 
		Hence, there exists a global solution to the HJB equation.

		\begin{theorem}
				\label{hjb:existence:abstract:uniformly_wellposed}
				Assume that $R > 1$, and that the model is uniformly well-posed.
				Then there exists a global solution $f > 0$ to \cref{hjb:hjb:pre_transformation}.
		\end{theorem}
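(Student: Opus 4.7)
The plan is to apply \cref{hjb:existence:abstract:general_correlation} with an explicit constant supersolution. By uniform well-posedness, there exists $C > 0$ such that $\eta(y) \geq C$ for all $y \in E$, and I will take the candidate $\beta(y) \equiv C^{-R}$, motivated by the frozen Merton solution where $f = \eta^{-R}$ corresponds to optimal consumption rate $\eta$.

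First I would check that $\beta$ is a supersolution to $L_0 v + R\eta v = R v^{1-1/R}$. Since $\beta$ is constant, $L_0 \beta \equiv 0$, so
\[
L_0 \beta + R\eta \beta = R\eta\, C^{-R} \geq RC \cdot C^{-R} = R C^{1-R} = R \beta^{1 - 1/R},
\]
which verifies the required inequality; and clearly $\beta \in C^1(E,(0,\infty))$. Next, I would verify the principal eigenvalue hypothesis: on each approximating interval $E_m$ with Dirichlet boundary conditions, both $L_0 + R\eta$ and $L_0 + \eta$ have strictly positive principal eigenvalue. This reduces to two observations: (i) the principal eigenvalue of $L_0$ itself on a bounded interval with Dirichlet boundary conditions is strictly positive --- a standard fact for a uniformly elliptic operator with vanishing zero-order term, obtained by evaluating $L_0 \phi = \mu \phi$ at an interior maximum of the positive principal eigenfunction $\phi$, which forces $\mu \geq 0$, and ruling out $\mu = 0$ via the strong maximum principle for $L_0$; and (ii) adding the zero-order potential $R\eta$ (resp.\ $\eta$), which is bounded below by $RC > 0$ (resp.\ $C > 0$), only shifts the principal eigenvalue further upward.

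With both hypotheses of \cref{hjb:existence:abstract:general_correlation} verified, the corollary delivers a global solution $f$ to \eqref{hjb:hjb:pre_transformation} with $0 < f \leq C^{-R}$, completing the argument. There is no real obstacle in this proof beyond identifying the right candidate: once $\beta = C^{-R}$ is proposed, both the supersolution check and the eigenvalue check reduce to one-line calculations using $\eta \geq C$. The entire content of the theorem is that uniform well-posedness is the precise structural assumption making the constant trivially available as a supersolution.
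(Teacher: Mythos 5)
Your proof is correct and follows essentially the same route as the paper: take $\beta \equiv C^{-R}$ as a constant supersolution, verify the principal-eigenvalue hypothesis, and invoke \cref{hjb:existence:abstract:general_correlation}. The only cosmetic difference is in how you establish positivity of the principal eigenvalues: the paper delegates this to \cref{bounded_domain:uniformly_wellposed} (which tests against the constant function $1$ and cites a comparison principle from Du), whereas you argue directly that $\lambda_1(L_0)>0$ on a bounded interval under Dirichlet conditions and then use monotonicity of the principal eigenvalue in the zero-order coefficient; both are valid and lead to the same conclusion.
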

		\begin{proof}
				Since the model is uniformly well-posed, there exists some constant $C > 0$ with $\eta \ge C$.
				One easily sees that $\beta \equiv C^{-R}$ is a supersolution to the equation $L_0 v + R \eta v = Rv^{1-\frac{1}{R}}$, where $L_0$ is the operator from \cref{hjb:existence:abstract:general_correlation}.
                Moreover, the principal eigenvalues of the operators $L_0 + R\eta$ and $L_0 + \eta$ under Dirichlet boundary conditions are positive by \cref{bounded_domain:uniformly_wellposed}.
				Now, the existence follows from \cref{hjb:existence:abstract:general_correlation}.
		\end{proof}

        \subsection{Construction of a Candidate Solution in the General Case}

        If $R < 1$, the results of Section \ref{section:hjb:existence:R>1} are not applicable. In this case, we need a different approach. The idea is to use more explicit sub- and supersolutions. In return, we get stronger properties of the solution that are also very useful in the case $R >1$.
   
		To construct the explicit sub- and supersolutions, we define the operator \[
				\Psi g = 1 + \frac{\frac{1}{2} b^2 g'' + (a + \frac{1-R}{R} \rho \lambda b) g'}{g^2} - \frac{1}{2} b^2 ((1 - \rho^2) R + \rho^2 + 1) \frac{(g')^2}{g^3} 
		.\] 
        Note that if $u$ is a solution to \cref{hjb:hjb}, then $\Psi u = 2 - \frac{\eta}{u}$.

		\begin{theorem}
				\label{hjb:existence}
                Let $g_1, g_2 \in C^2(E)$ with $0 < g_1$ and $0 < g_2$.
				Assume that $g_1, g_2$ satisfy $g_1 \le \eta \le g_2$ and \begin{align*}
						C_1 &:= \inf_E \Psi g_1 > 0, \\
						C_2 &:= \sup_E \Psi g_2 < \infty
				.\end{align*}
				Then $C_1 g_1$ and $C_2 g_2$ are sub- and supersolutions to \cref{hjb:hjb}, respectively, and \cref{hjb:hjb} has a global solution $u$ that satisfies \[
						C_1 g_1 \le u \le C_2 g_2
				.\]
		\end{theorem}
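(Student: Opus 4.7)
The plan is to verify that $C_1 g_1$ is a subsolution and $C_2 g_2$ a supersolution to \cref{hjb:hjb}, establish the pointwise ordering $C_1 g_1 \le C_2 g_2$, and then invoke \cref{global_existence:existence}. The key preliminary observation, which I would record first, is the scaling identity
\[
    \Psi(c g) \;=\; 1 + \frac{\Psi g - 1}{c} \qquad \text{for any constant } c > 0,
\]
which follows directly from the homogeneity in $g$, $g'$, $g''$ of the rational expressions appearing in the definition of $\Psi$.

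Next, I would translate the sub-/supersolution conditions into the cleaner $\Psi$-form: dividing \cref{hjb:hjb} by $u^2$ shows that a positive $C^2$ function $w$ is a subsolution iff $\Psi w \ge 2 - \eta/w$ and a supersolution iff $\Psi w \le 2 - \eta/w$. Combined with the scaling identity, the subsolution inequality for $C_1 g_1$ reduces to $\Psi g_1 + \eta/g_1 - 1 \ge C_1$. Since $g_1 \le \eta$ forces $\eta/g_1 \ge 1$, and $\Psi g_1 \ge C_1$ by definition of $C_1$, this inequality is immediate. The analogous computation using $\eta \le g_2$ and $\Psi g_2 \le C_2$ shows that $C_2 g_2$ satisfies the supersolution inequality. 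The $C^2$-regularity of $g_1, g_2$ means there are no kinks to check.

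The most delicate step is establishing $C_1 g_1 \le C_2 g_2$ pointwise. My plan is a strong-maximum-principle argument applied to the ratio $\phi := (C_1 g_1)/(C_2 g_2)$. At any interior local maximum $y_0$ of $\phi$ with value $K := \phi(y_0) > 1$, the conditions $\phi'(y_0) = 0$ and $\phi''(y_0) \le 0$ translate into $(C_1 g_1)'(y_0) = K (C_2 g_2)'(y_0)$ and $(C_1 g_1)''(y_0) \le K (C_2 g_2)''(y_0)$. Substituting into the subsolution inequality for $C_1 g_1$ at $y_0$ and comparing with $K$ times the supersolution inequality for $C_2 g_2$ yields $(1 - K)(C_2 g_2)^2(y_0) \ge 0$, contradicting $K > 1$. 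The main technical obstacle is that $E$ is open and possibly unbounded, so $\phi$ need not attain its supremum; I expect to handle this by exhausting $E$ by compact subintervals $E_m \uparrow E$ and invoking the sub-/supersolution comparison machinery developed in \cref{section:global_existence}, analogously to the truncation procedure in the proof of \cref{hjb:existence:abstract}.

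With the ordering $0 < C_1 g_1 \le C_2 g_2$ and the sub-/supersolution properties in hand, a direct application of \cref{global_existence:existence} yields a global solution $u$ to \cref{hjb:hjb} satisfying $C_1 g_1 \le u \le C_2 g_2$, completing the proof.
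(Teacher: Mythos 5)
Your verification that $C_1 g_1$ and $C_2 g_2$ are sub- and supersolutions is correct. The paper establishes the same two inequalities by a direct chain of manipulations (first using $C_1 \le \Psi g_1$ to bound $\tfrac{1}{2}b^2 g_1''$ from below, then replacing $g_1$ by $\eta$ via $g_1 \le \eta$), and your $\Psi$-scaling identity $\Psi(cg) = 1 + (\Psi g - 1)/c$ together with the equivalence ``$w$ is a subsolution iff $\Psi w \ge 2 - \eta/w$'' is a clean and equivalent reformulation of that same calculation. You also correctly observe that the final invocation of \cref{global_existence:existence} requires the pointwise ordering $C_1 g_1 \le C_2 g_2$ as an input; notably, the paper's own proof neither establishes this ordering nor comments on it, so your instinct to address it is sound.

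The problem is that your proposed argument for the ordering does not close. The maximum-principle computation at an interior local maximum of $\phi := (C_1 g_1)/(C_2 g_2)$ with value $K>1$ is correct (the algebra does collapse to $1 \ge K$), but on an open and possibly unbounded interval $E$ the ratio $\phi$ need not attain its supremum at an interior critical point: $\sup_E \phi$ can be approached only along a sequence escaping to $\partial E$, where you have no information. Your plan to fix this by exhausting $E$ by compact subintervals $E_m$ and invoking ``the sub-/supersolution comparison machinery developed in \cref{section:global_existence}'' is circular: \cref{global_existence:existence}, \cref{global_existence:compact_existence} and the lemmas supporting them all take $\alpha \le \beta$ as a hypothesis, not a conclusion, so they cannot be used to prove the ordering. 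Nor does truncation help on its own, since on $[e^-_m, e^+_m]$ the maximum of $\phi$ can sit at an endpoint, where $\phi'' \le 0$ is not available and the contradiction evaporates. A comparison principle between a subsolution and a supersolution of a second-order ODE on an unbounded domain with no boundary data genuinely requires some additional structure near $\partial E$ (a common limit, a touching point, a one-sided monotonicity or growth condition), and the hypotheses $g_1 \le \eta \le g_2$, $C_1 > 0$, $C_2 < \infty$ do not obviously supply it. This is a real gap in your argument, and it should either be filled with such a structural input or the ordering $C_1 g_1 \le C_2 g_2$ should be added as an explicit hypothesis (as the paper also tacitly assumes).
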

        
		\begin{remark}
                \begin{enumerate}[label=(\alph*)]
                        \item Note that $g_1$ and $g_2$ can be chosen independently, in particular they may have completely different growth behaviour.
                        \item We have $C_i = 1$ if $g_i$ is constant, $i=1,2 $.
                \end{enumerate}
        \end{remark}
		\begin{proof}
				Set $\tilde{a} = a + \frac{1-R}{R} \rho \lambda b$, $d = \frac{1}{2} b^2 ((1 - \rho^2) R + \rho^2 + 1)$ and $\alpha = C_1 g_1$.
				By assumption, \[
						C_1 \le 1 + \frac{\frac{1}{2} b^2 g_1'' + \tilde{a} g_1'}{g_1^2} - d \frac{(g_1')^2}{g_1^3}
				.\]
				This implies that \begin{align*}
						\frac{1}{2}b^2 \alpha'' &=
						\frac{1}{2}b^2 C_1 g_1'' \ge
						C_1 \left( (C_1 - 1) g_1^2 + d\frac{(g_1')^2}{g_1} - \tilde{a}g_1' \right)  =\\&=
						-\tilde{a} \alpha' - g_1\alpha + \alpha^2 + d \frac{(\alpha')^2}{\alpha} \ge
						-\tilde{a} \alpha' - \eta \alpha + \alpha^2 + d \frac{(\alpha')^2}{\alpha}
				,\end{align*}
				so $\alpha$ is a subsolution to \cref{hjb:hjb}.

				Analogously, one sees that $\beta = C_2 g_2$ is a supersolution.
				The existence of a solution $u$ with $\alpha \le u \le \beta $ then follows from \cref{global_existence:existence}.
		\end{proof}

		In a uniformly well-posed model with $\sup_E \Psi \eta < \infty$, \cref{hjb:existence} guarantees the existence of a global positive solution that is lower-bounded by the same constant as the frozen consumption rate $\eta$.
		We now prove a stronger result which will be useful later as it will provide bounds on the solution that ensure that the solution has the same order of growth as $\eta$.
        Note that an analogous result holds for $y \to -\infty$ if $\eta$ is eventually decreasing.
        
		\begin{corollary}
				\label{hjb:existence:uniformly_wellposed}
				Assume that $E = (E_{-}, \infty)$ for some $E_{-} \in \{-\infty\} \cup \R$.
				Suppose that the model is uniformly well-posed, $\eta\in C^2(E)$, $C_2 := \sup_{E} \Psi \eta < \infty$, and $\Psi \eta \to 1$ as $y\to \infty$.
				Furthermore, assume that $\eta' > 0	$ on $[y_0, \infty)$ for some $y_0\in E$.
				Then \cref{hjb:hjb} has a global solution $u > 0$ with $\tilde{C}_1 \eta(y) \le u(y) \le C_2 \eta(y)$ for all $y \ge y_1$ for some $y_1 \ge y_0, \tilde{C}_1 > 0$.
		\end{corollary}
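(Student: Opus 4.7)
The plan is to apply \cref{global_existence:existence} to a carefully chosen pair of sub- and supersolutions. The supersolution comes essentially for free: since $\eta \le \eta$ trivially and $C_2 = \sup_E \Psi \eta < \infty$ by hypothesis, \cref{hjb:existence} (with $g_2 = \eta$) yields that $\beta := C_2 \eta$ is a supersolution to \cref{hjb:hjb}. Note that $C_2 \ge 1$ since $\Psi \eta \to 1$.

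For the subsolution, the idea is to glue a constant piece (to absorb the region where $\Psi \eta$ may fail to be positive) to a piece proportional to $\eta$ (capturing the desired growth at $+\infty$), allowing a single kink in between. Let $C_0 > 0$ be a uniform lower bound for $\eta$ from uniform well-posedness, and pick $y_1 \ge y_0$ large enough that $\Psi \eta(y) \ge C_0/\eta(y_1)$ for all $y \ge y_1$. This is possible because $\Psi \eta \to 1$, whereas the right-hand side is either bounded away from $1$ (if $\eta$ is bounded) or tends to $0$ (if $\eta \to \infty$) as $y_1 \to \infty$, while $\eta(y_1) \ge \eta(y_0) \ge C_0$ by monotonicity. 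Define
\begin{equation*}
\alpha(y) := \begin{cases} C_0, & y \in (E_-, y_1], \\ \dfrac{C_0}{\eta(y_1)}\, \eta(y), & y \in [y_1, \infty). \end{cases}
\end{equation*}
Then $\alpha$ is continuous at $y_1$, $C^2$ away from $y_1$, and has a kink at $y_1$ with $D^- \alpha(y_1) = 0 < \tfrac{C_0}{\eta(y_1)}\eta'(y_1) = D^+ \alpha(y_1)$ (using $\eta'(y_1) > 0$), satisfying the kink condition in the definition of a subsolution.

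The subsolution inequality is then verified on each open piece. On $(E_-, y_1)$, $\alpha \equiv C_0$ is constant, so the inequality reduces to $\eta C_0 - C_0^2 \ge 0$, which holds by uniform well-posedness. On $(y_1, \infty)$, substituting $\alpha = \kappa \eta$ with $\kappa := C_0/\eta(y_1) \le 1$ into the subsolution inequality and dividing through by $\kappa \eta^2 > 0$ reduces it to $\Psi \eta \ge \kappa$, which is exactly our choice of $y_1$. Moreover $\alpha \le \beta$ throughout: on $(E_-, y_1]$, $\alpha = C_0 \le \eta \le C_2 \eta$ (as $C_2 \ge 1$), while on $[y_1, \infty)$, $\alpha = \kappa \eta \le \eta \le C_2 \eta$. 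Applying \cref{global_existence:existence} to the pair $(\alpha, \beta)$ yields a global positive solution $u$ with $\alpha \le u \le \beta$, and restricting to $[y_1, \infty)$ gives $\tilde C_1 \eta \le u \le C_2 \eta$ with $\tilde C_1 := C_0/\eta(y_1) > 0$.

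The main obstacle is producing a subsolution whose lower bound genuinely scales like $\eta$ at infinity. The naive proportional choice $g_1 = \eta$ in \cref{hjb:existence} fails because $\Psi \eta$ need not be positive on the entire state space, while the trivial constant subsolution $g_1 \equiv C_0$ only yields the crude bound $u \ge C_0$. Allowing a single admissible kink resolves this dichotomy cleanly: the constant piece handles the possibly bad behaviour of $\Psi \eta$ on the left tail, the proportional piece captures the correct asymptotics on the right, and the kink direction $D^- < D^+$ is precisely the one compatible with the definition of a subsolution in the paper.
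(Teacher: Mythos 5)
Your proof is correct and follows essentially the same strategy as the paper's: take $C_2\eta$ as the supersolution from \cref{hjb:existence}, and glue a constant subsolution (from uniform well-posedness) to a proportional-to-$\eta$ subsolution at a point $y_1$ chosen so that $\Psi\eta\ge C_0/\eta(y_1)$ thereafter, the kink at $y_1$ having the correct orientation $D^-\alpha<D^+\alpha$ because $\eta'(y_1)>0$. The only cosmetic difference is that you verify the subsolution inequality directly on each piece rather than re-invoking \cref{hjb:existence} for the constant piece, which is harmless.
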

		\begin{proof}
				Since the model is uniformly well-posed, there exists a constant $C_1 > 0$ such that $\eta > C_1$.
				By \cref{hjb:existence}, $C_1$ and $C_2 \eta$ are a pair of global sub- and supersolutions for \cref{hjb:hjb}.
				As $\eta$ is increasing on $[y_0, \infty)$ and $\Psi \eta \to 1$ as $y\to \infty$, there exists some $y_1 \ge y_0$ such that $\Psi \eta(y) \ge \frac{C_1}{\eta(y_1)} =: \tilde{C}_1 < 1$ for all $y \ge y_1$.
				Hence, $\tilde{C}_1\eta$ is a subsolution on $(y_1, \infty)$ by \cref{hjb:existence}.
				Since $\eta'(y_1) > 0$, \[
						\alpha(y) := \begin{cases}
								C_1, &y \le y_1, \\
								\tilde{C}_1 \eta, & y \ge y_1
						\end{cases}
				\] is a global subsolution.

				Now, a global solution $u$ with $\alpha \le u \le C_2 \eta$ exists by \cref{global_existence:existence}.
		\end{proof}

		\subsection{Asymptotics, Uniqueness and Verification}

		Throughout this section, we assume that $E = (E_{-}, \infty)$ for some $E_{-} \in \{-\infty\} \cup \R$.

		We give a condition for uniqueness of the solution, and characterise the asymptotic behaviour of the solution and its log-derivative.
		These asymptotics then allow us to verify the assumptions of the verification theorem of \citet{Guasoni2020}.  
        Up to a sign change in all first-order terms, the results of this section also apply to the case $y \to -\infty$.
		
		Note that while the results in this section have many assumptions, almost all assumptions only concern the model coefficients and not the solution itself.
		For any given model, the coefficients are known, so the assumptions are easy to check.
        The models to keep in mind as running examples (and which will be studied in detail in \cref{section:examples}) are:
        \begin{itemize}
            \item Heston Model: $\lambda(y) = \lambda \sqrt{y}$, $\sigma(y) = \sqrt{y}$, $a(y) = -\kappa (y - \theta)$, $b(y) = \nu \sqrt{y}$
            \item Stochastic MPR Model: $\lambda(y) = y$, $a(y) = - \kappa (y - \theta)$, $b(y) = \nu$
            \item Vasicek Model: $r(y) = y$, $a(y) = - \kappa (y - \theta)$, $b(y) = \nu$
        \end{itemize}
        All coefficients not mentioned above are constants. For details on the parameters ranges, see \cref{section:examples}.

		We begin our analysis by showing that the asymptotic growth behaviour uniquely determines the solution.

		\begin{theorem}
				\label{hjb:uniqueness}
				Let $u, \tilde{u}$ be two global positive solutions to \cref{hjb:hjb} with $\frac{\tilde{u}}{u} \to 1$ as $y \to \partial E$.
				Then $u = \tilde{u}$.
		\end{theorem}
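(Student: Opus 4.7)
The plan is to rule out both $\sup_E (\tilde u/u) > 1$ and $\inf_E (\tilde u/u) < 1$ by a standard interior-extremum argument, exploiting the fact that the quadratic first-order nonlinearity in \eqref{hjb:hjb} is projectively invariant at critical points of the ratio $w := \tilde u/u$.

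First I would observe that since $u,\tilde u$ are positive and continuous on $E$ and $w \to 1$ as $y \to \partial E$, the function $w$ is bounded above and below by positive constants. Hence $\lambda^* := \sup_E w$ and $\lambda_* := \inf_E w$ both belong to $(0,\infty)$, and moreover any value strictly greater than $1$ (resp.\ strictly less than $1$) of $\lambda^*$ (resp.\ $\lambda_*$) must be attained at an interior point of $E$, because the boundary condition forces $w < (1+\lambda^*)/2$ (resp.\ $w > (1+\lambda_*)/2$) outside some compact subset of $E$.

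Next, suppose for contradiction that $\lambda^* > 1$ and let $y_0 \in E$ be an interior maximiser. Then $w'(y_0)=0$, so $\tilde u'(y_0) = \lambda^* u'(y_0)$, and $w''(y_0) \leq 0$. A direct computation using $\tilde u(y_0)=\lambda^* u(y_0)$ and $\tilde u'(y_0)=\lambda^* u'(y_0)$ gives
\[
    w''(y_0) = \frac{\tilde u''(y_0) - \lambda^* u''(y_0)}{u(y_0)} \le 0,
\]
so $\tilde u''(y_0) \le \lambda^* u''(y_0)$. Now write \eqref{hjb:hjb} for $\tilde u$ and subtract $\lambda^*$ times \eqref{hjb:hjb} for $u$, both evaluated at $y_0$. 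The crucial cancellations at $y_0$ are
\[
    \tilde a(\tilde u' - \lambda^* u') = 0, \qquad
    d\,\frac{(\tilde u')^2}{\tilde u} - \lambda^* d\,\frac{(u')^2}{u} = d\,\frac{(\lambda^* u')^2}{\lambda^* u} - \lambda^* d\,\frac{(u')^2}{u} = 0,
\]
and $\eta(\tilde u - \lambda^* u) = 0$. Only the second-order and quadratic-zero-order terms survive, giving
\[
    \tfrac{1}{2}b^2(y_0)\bigl(\tilde u''(y_0) - \lambda^* u''(y_0)\bigr) = \tilde u(y_0)^2 - \lambda^* u(y_0)^2 = \lambda^* u(y_0)^2 (\lambda^* - 1) > 0,
\]
contradicting $\tilde u''(y_0) \le \lambda^* u''(y_0)$. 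Hence $\lambda^* \leq 1$. The same calculation with an interior minimiser shows $\lambda_* \ge 1$, so $w \equiv 1$, i.e.\ $u = \tilde u$.

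The main (mild) obstacle is to check that all terms other than the $b^2 u''$ and the $u^2$ terms really do cancel at a critical point of $w$; this is a special algebraic feature of the equation \eqref{hjb:hjb} (equivalently, of the power transform $f=u^{-R}$) and is exactly why working with $w=\tilde u/u$ rather than, say, the difference $\tilde u - u$ is the right choice. Everything else is a routine maximum-principle argument; no global regularity beyond $C^2$ of the solutions and continuity of the coefficients is needed.
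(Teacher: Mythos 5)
Your proof is correct and follows essentially the same route as the paper: both set $w=\tilde u/u$, use the boundary behaviour $w\to 1$ to reduce to an interior extremum, and exploit the fact that at a critical point of $w$ all the linear and projectively invariant terms of \eqref{hjb:hjb} cancel, leaving only the $\frac{1}{2}b^2 w''$ term against the strictly signed $u^2 w(w-1)$ term. The only cosmetic difference is that the paper first derives the full second-order ODE satisfied by $w$ and then evaluates it at the extremum, whereas you perform the cancellation directly by subtracting $\lambda^*$ times the equation for $u$ from the equation for $\tilde u$ at the critical point; the two computations are algebraically identical.
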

		\begin{proof} Set $w = \frac{\tilde{u}}{u}$. Then
				$w$ is well-defined, positive, and converges to $1$ as $y \to \partial E$.
				Furthermore, $w$ satisfies the equation \[
						\frac{1}{2}b^2 u w'' + ((b^2 - 2d) u' + \tilde{a} u) w' - d u \frac{(w')^2}{w} = u^2 w (w-1)
				.\] 

				Now, assume that $w$ has an extremum with $w > 1$.
				At that extremum, $w' = 0$ and \[
						\frac{1}{2} b^2 u w'' = u^2 w (w - 1) > 0
				,\] so it is a minimum.
				Similarly, all extrema of $w$ with $w < 1$ are maxima.

				Since $w \to 1$ as $y \to \partial E$, $w$ attains its global maximum if there is a point at which $w > 1$.
				As there is no local maximum with $w > 1$, this means $w \le 1$ everywhere.
				Analogously, it follows that $w \ge 1$, so that in total $w \equiv 1$, i.e.\ $u \equiv \tilde{u}$.
		\end{proof}

		We now state an assumption on the model coefficients under which the solution $u$ to \cref{hjb:hjb} is asymptotically equivalent to the frozen consumption $\eta$ if it has the same order of growth as $\eta$ .
		
		\begin{assumption}
				\label{hjb:asymptotics:assumption} Let $y_0 \in E$.
				\begin{enumerate}[label={(A\arabic*)},ref={A\arabic*}]
						\item \label{hjb:asymptotics:assumption:positivity}
								$\eta > 0$ over $[y_0, \infty)$ and $\eta \in C^2([y_0, \infty))$.
						\item \label{hjb:asymptotics:assumption:boundedness}$
								\frac{b^2}{\eta}$, $\frac{\tilde{a}}{\eta}$, $\frac{\eta'}{\eta}$ are bounded over $[y_0, \infty)$.
						\item \label{hjb:asymptotics:assumption:convergence}
								$\Psi \eta = 1 + \frac{\frac{1}{2}b^2 \eta'' + \tilde{a} \eta'}{\eta^2} - d \frac{(\eta')^2}{\eta^3} \to 1$ as $y \to \infty$.
				\end{enumerate}
		\end{assumption}

		\begin{remark}
				Note that \eqref{hjb:asymptotics:assumption:boundedness} implies \eqref{hjb:asymptotics:assumption:convergence} if $\frac{\eta'}{\eta} \to 0$ and $\frac{\eta''}{\eta} \to 0$ as $y\to \infty$.
		\end{remark}

        \Cref{hjb:asymptotics:assumption} is satisfied for the Heston, Stochastic MPR, and Vasicek model.
		\begin{theorem}
				\label{hjb:asymptotics}
				Let $u$ be a solution to \cref{hjb:hjb} on $[y_0, \infty)$ with $C_1 \eta \le u \le C_2 \eta$ for some $0 < C_1 < C_2$.
				Moreover, assume that the model satisfies \cref{hjb:asymptotics:assumption}.
				Then $\frac{u}{\eta} \to 1$ as $y \to \infty$.
		\end{theorem}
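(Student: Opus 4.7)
The plan is to define $v(y) := u(y)/\eta(y)$, so that $C_1 \le v \le C_2$ on $[y_0,\infty)$ by hypothesis, and to show $\limsup_{y\to\infty} v = \liminf_{y\to\infty} v = 1$ via a maximum-principle reduction.

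Substituting $u = v\eta$ into \eqref{hjb:hjb} and dividing through by $\eta^2$, the coefficient of $v$ on the left-hand side turns out to equal $\Psi\eta - 1$ exactly, and the resulting ODE for $v$ takes the form
\begin{equation*}
\tfrac{1}{2} P(y)\,v'' + F(y)\,v' - D(y)\,\frac{(v')^2}{v} \;=\; v(v-1) - G(y)\,v,
\end{equation*}
where $P := b^2/\eta$, $D := d/\eta$, $F := (P-2D)\,\eta'/\eta + \tilde a/\eta$, and $G := \Psi\eta - 1$. By \eqref{hjb:asymptotics:assumption:boundedness} all of $P, D, F$ are bounded on $[y_0,\infty)$, with $P, D > 0$, and by \eqref{hjb:asymptotics:assumption:convergence} one has $G(y) \to 0$ as $y\to\infty$.

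The core argument is the following maximum-principle reduction. At any interior local maximum $z$ of $v$ one has $v'(z) = 0$ and $v''(z) \le 0$, so the ODE gives $v(z)(v(z)-1) \le G(z)\,v(z)$, i.e.\ $v(z) \le 1 + G(z)$. If $v$ has infinitely many interior local maxima $z_n \to \infty$, then on each interval $[z_n, z_{n+1}]$ the function $v$ cannot exceed $\max\{v(z_n), v(z_{n+1})\}$ (otherwise another local maximum would lie in between), so $\limsup_y v(y) \le \limsup_n v(z_n) \le 1$. Handling local minima symmetrically yields $\liminf v \ge 1$, and $v \to 1$ follows.

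The main obstacle is the case in which $v$ is eventually monotone (and therefore has no interior extrema), converging to some $L' \in [C_1, C_2]$. To handle this, I would first show that $v'$ is bounded on $[y_0,\infty)$: the ODE yields $v'' \approx (2D/P)(v')^2/v$ whenever $|v'|$ is large, and since $2D/P = (1-\rho^2)R + \rho^2 + 1$ is a positive constant, this Riccati-type balance forces $v'$ to blow up in finite $y$ if it ever becomes too large positive, and drives $v'$ back toward zero if it becomes too large negative. Consequently $\int_{y_0}^\infty |v'|\,\d y < \infty$ together with continuity of $v'$ gives $\liminf_{y\to\infty} |v'| = 0$; picking a sequence $y_n \to \infty$ with $v'(y_n) \to 0$ and substituting into the ODE yields $\tfrac{1}{2} P(y_n) v''(y_n) \to L'(L'-1)$. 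If $L' \ne 1$, the sign of $v''(y_n)$ is eventually constant and bounded away from zero, and the resulting local growth of $|v'|$ away from the subsequence $(y_n)$ contradicts either boundedness of $v'$ or monotonicity of $v$. Hence $L' = 1$, concluding the proof.
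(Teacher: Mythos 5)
Your decomposition $v = u/\eta$, the form of the quotient ODE, and the extrema argument (local maxima bounded by $1 + (\Psi\eta - 1)$, local minima from below) all mirror the paper's first steps. The divergence is in the eventually-monotone case. There, the paper invokes \cref{auxiliary:monotone_convergence}: for a $C^2$ function converging at infinity, one can extract a sequence $y_n \to \infty$ along which \emph{both} $v'(y_n) \to 0$ \emph{and} $v''(y_n) \to 0$ (taking local extrema of $v'$, at which $v''$ vanishes). Substituting this sequence into the ODE — and using that $b^2/\eta$, $\tilde a/\eta$, $\eta'/\eta$, $d/\eta$ are bounded — immediately gives $L'(L'-1) = 0$, with no sign analysis of $v''$ required.

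Your route instead obtains only $v'(y_n) \to 0$, and then appeals to a sign contradiction from $\tfrac12 P(y_n)v''(y_n) \to L'(L'-1) \ne 0$. The sentence claiming that "the resulting local growth of $|v'|$ away from the subsequence contradicts boundedness of $v'$ or monotonicity of $v$" is where a genuine gap sits: the fact that $v''(y_n)$ has eventually constant sign and is bounded away from zero \emph{along a discrete sequence} does not by itself yield a contradiction. What one actually needs is a pointwise statement: if $L' > 1$, then there exist $c,\delta>0$ such that for all large $y$, $v'(y) < \delta$ forces $v''(y) > c$ (using $|F|$ bounded, $D(v')^2/v \ge 0$, and $\Psi\eta\to1$). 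From this one concludes that $v'$ either stays below $\delta$ forever and grows linearly — impossible since $\int v' < \infty$ — or eventually exceeds $\delta$ and can never recross below it (at a downward crossing $v' = \delta$ one would have $v'$ decreasing while $v''>c>0$, a contradiction), so $\liminf v' \ge \delta > 0$, again contradicting $\int v' < \infty$. The case $L'<1$ is symmetric. This completion is in the spirit of what you sketch but is not the paper's argument, and the write-up as given omits the step that actually produces the contradiction. Two smaller remarks: the Riccati blow-up paragraph establishing boundedness of $v'$ is a detour you do not need — $\int|v'| < \infty$ follows directly from monotonicity and boundedness of $v$; and you should be careful not to assume $P(y_n)$ converges (A2 only gives boundedness of $b^2/\eta$, not a limit, and no lower bound).
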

        \begin{proof}
				Set $w = \frac{u}{\eta}$.
				We have $w \in [C_1, C_2]$, and $w$ satisfies the equation \begin{align}
						\label{hjb:asymptotics:ode}
						\frac{\frac{1}{2}b^2}{\eta} w'' + \left(\frac{\tilde{a}}{\eta} + (b^2 - 2d) \frac{\eta'}{\eta^2}\right) w' + (\Psi \eta - 1) w - \frac{d}{\eta} \frac{(w')^2}{w} = w(w - 1)
				.\end{align}
				Note that $\eta > 0$ on $[y_0, \infty)$ by \eqref{hjb:asymptotics:assumption:positivity} in \cref{hjb:asymptotics:assumption}.

				Let $\varepsilon > 0$, and consider an extremum with $w \ge 1 + \varepsilon$ and $ \left| \Psi \eta - 1 \right| < \frac{\varepsilon}{2}$.
				Note that the latter is the case for all $y$ large enough by \eqref{hjb:asymptotics:assumption:convergence} in \cref{hjb:asymptotics:assumption}.
				At that extremum, $w' = 0$ and \[
						\frac{\frac{1}{2} b^2}{\eta} w'' = w(w-1) - (\Psi \eta - 1) w \ge \left(w - 1 - \frac{\varepsilon}{2}\right) w \ge \frac{\varepsilon}{2} w > 0
				,\] so the extremum is a minimum.
				Similarly, all extrema with $w < 1 - \varepsilon$ and $y$ large enough are maxima.

				Together, this means that $w$ either eventually becomes monotone or oscillates in the channel around $1$ where maxima above $1$ and minima below $1$ are possible.
				In the latter case, $w$ converges to $1$ since the size of the channel goes to $0$.

				It remains to consider the case where $w$ is eventually monotone.
				Since $w$ is bounded, monotonicity implies that $w$ converges to some limit $L\in [C_1, C_2]$.
				By \cref{auxiliary:monotone_convergence}, there exists a sequence $y_n$ with $y_n \to \infty$, $w'(y_n) \to 0$, and $w''(y_n) \to 0$.
				Using \eqref{hjb:asymptotics:assumption:boundedness} and \eqref{hjb:asymptotics:assumption:convergence} in \cref{hjb:asymptotics:assumption}, taking limits in \cref{hjb:asymptotics:ode} along this sequence yields \[
						L (L - 1) = 0
				,\] i.e.\ $L=1$ as $L \ge C_1 > 0$.
		\end{proof}

		With the help of \cref{hjb:asymptotics} and under some additional growth conditions on the model coefficients, we are able to give a bound on the growth rate of the log-derivative of $u$, which is necessary to prove the verification theorem.

		\begin{proposition}
				\label{hjb:asymptotics:derivative}
				Let $u$ be a solution to \cref{hjb:hjb} on $[y_0, \infty)$ with $C_1 \eta \le u \le C_2 \eta$ for some $0 < C_1 < C_2$.
				Furthermore, assume that the model satisfies \cref{hjb:asymptotics:assumption}.

				Assume that $\frac{\tilde{a}}{\eta} + (b^2 + 2d) \frac{\eta'}{\eta^2} \sim K_1y^{-k}	$ and $\frac{d}{\eta} \sim K_2 y^{-2l}$ for some $l \ge k \ge 0$ and some $K_1, K_2 \neq 0 $.
				Then $\frac{u'}{u} \in \mathcal{O}(y^{2l - k})$.
		\end{proposition}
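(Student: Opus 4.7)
The strategy is to reduce the problem to a Riccati equation for the logarithmic derivative $\zeta := w'/w$ of $w := u/\eta$, and then exploit the fact that $u \in C^2([y_0,\infty);(0,\infty))$ forces $\zeta$ to be $C^1$ on $[y_0,\infty)$ (hence no finite-distance blowup), yielding the bound via a blowup-by-contradiction argument. Since $u'/u = \zeta + \eta'/\eta$ and $\eta'/\eta$ is bounded by \eqref{hjb:asymptotics:assumption:boundedness} in \cref{hjb:asymptotics:assumption}, while $2l - k \ge 0$, it suffices to show $\zeta = \mathcal{O}(y^{2l-k})$.

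Substituting $w'' = (\zeta' + \zeta^2) w$ into the $w$-equation derived in the proof of \cref{hjb:asymptotics} and dividing by $\tfrac{b^2 w}{2\eta}$ yields the Riccati
\[
    \zeta' = \gamma \zeta^2 - A(y)\, \zeta + B(y),
\]
where $\gamma := (1-\rho^2)R + \rho^2 > 0$ (using $2d - b^2 = \gamma b^2$), $A := \tfrac{2\eta}{b^2}\bigl(\tfrac{\tilde a}{\eta} + (b^2 - 2d)\tfrac{\eta'}{\eta^2}\bigr)$, and $B := \tfrac{2\eta}{b^2}(w - \Psi\eta)$. Since $d$ is a fixed positive multiple of $b^2$, the hypothesis $d/\eta \sim K_2 y^{-2l}$ gives $\eta/b^2 \sim C_0 y^{2l}$; the difference between the first-order coefficient $(b^2+2d)\tfrac{\eta'}{\eta^2}$ appearing in the proposition's hypothesis and $(b^2-2d)\tfrac{\eta'}{\eta^2}$ appearing in the derivation is $4d\eta'/\eta^2 = \mathcal{O}(y^{-2l}) \subseteq \mathcal{O}(y^{-k})$, so that $A(y) \sim C_A y^{2l-k}$ for some $C_A \neq 0$ either way. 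Moreover $w \to 1$ by \cref{hjb:asymptotics} and $\Psi\eta \to 1$ by \eqref{hjb:asymptotics:assumption:convergence} in \cref{hjb:asymptotics:assumption}, so $B(y) = o(y^{2l})$; the hypothesis $l \ge k$ upgrades this to $B(y) = o(y^{2(2l-k)})$.

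The core argument is a blowup-by-contradiction step. Fix $M > 0$ large and $y_1 \ge y_0$ large enough that for every $y \ge y_1$ and every $y' \in I_y := [y - \delta(y),\, y + \delta(y)]$ with $\delta(y) := y^{k-2l}$, the coefficients satisfy $|A(y')| \le \tfrac{\gamma M}{4}\, y^{2l-k}$ and $|B(y')| \le \tfrac{\gamma M^2}{4}\, y^{2(2l-k)}$. This is possible because $\delta(y)/y \to 0$, so $A$ and $B$ vary by a factor close to unity on $I_y$. If $\zeta(y^*) > M(y^*)^{2l-k}$ at some $y^* \ge y_1$, then as long as $\zeta \ge M(y^*)^{2l-k}$ on $[y^*, y^* + \delta(y^*)]$ we have $|A|\zeta \le \tfrac{\gamma}{4}\zeta^2$ and $|B| \le \tfrac{\gamma}{4}\zeta^2$, whence $\zeta' \ge \tfrac{\gamma}{2}\zeta^2$. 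Integrating $(1/\zeta)' \le -\gamma/2$ forces $\zeta \to +\infty$ within $\tfrac{2}{\gamma M}(y^*)^{k-2l} \le \delta(y^*)$ once $M\gamma \ge 2$, contradicting $\zeta \in C^1([y_0,\infty))$. Symmetrically, if $\zeta(y^*) < -M(y^*)^{2l-k}$, backward integration of the same Riccati from $y^*$ drives $\zeta$ to $-\infty$ within $[y^* - \delta(y^*), y^*) \subset [y_0, \infty)$, again a contradiction.

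Combining the two cases yields $|\zeta(y)| \le M y^{2l-k}$ for all $y \ge y_1$, and hence $u'/u = \zeta + \mathcal{O}(1) \in \mathcal{O}(y^{2l-k})$. The delicate point is the bookkeeping in the blowup window: one must verify that $A$ and $B$ are essentially constant on the short interval $I_y$ and that the quadratic term in the Riccati genuinely dominates the linear and constant ones. Both requirements rely crucially on $l \ge k$, which simultaneously ensures $B = o(A^2/\gamma)$ and $\delta(y)/y \to 0$.
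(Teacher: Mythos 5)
Your proof is correct, but it takes a genuinely different route from the paper. The paper's argument works directly with $w' = (u/\eta)'$: it invokes the auxiliary monotone-convergence lemma (\cref{auxiliary:monotone_convergence}) to obtain a special sequence $y_n\to\infty$ along which $w'(y_n)\to\limsup|w'|$ and $w''(y_n)\to 0$, then passes to the limit in the $w$-equation \eqref{hjb:asymptotics:ode} to obtain the algebraic balance $K_1 z_n - K_2 z_n^2 \to 0$ for $z_n = w'(y_n)/y_n^{2l-k}$, forcing $\limsup z_n \in \{0, K_1/K_2\}$. Your argument instead passes to $\zeta = w'/w$, derives the Riccati $\zeta' = \gamma\zeta^2 - A\zeta + B$, and shows that $|\zeta|$ cannot exceed $M y^{2l-k}$ without the quadratic term driving $\zeta$ to $\pm\infty$ within a short window $\delta(y) = y^{k-2l}$, contradicting the global smoothness of $\zeta$. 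The blowup-by-contradiction step is self-contained and delivers a pointwise bound $|\zeta(y)|\le M y^{2l-k}$ for all large $y$ directly, whereas the paper's sequence-based argument yields the bound at the special points $y_n$ and leaves the interpolation between them implicit; your version makes this step fully explicit. You also correctly note that the coefficient $(b^2+2d)\eta'/\eta^2$ in the statement versus $(b^2-2d)\eta'/\eta^2$ in the ODE differ by $4d\eta'/\eta^2 = \mathcal{O}(y^{-2l})\subseteq\mathcal{O}(y^{-k})$; one small imprecision is that this does not justify $C_A\neq 0$ in general (a cancellation could occur when $l=k=0$), but your blowup argument never actually uses $C_A\neq 0$, only $A=\mathcal{O}(y^{2l-k})$, so the proof is unaffected. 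Likewise, if $\rho$ is non-constant then $\gamma=2d/b^2-1$ is a function of $y$ rather than a constant, but it is uniformly bounded below by $\min(R,1)>0$, so replacing $\gamma$ by that lower bound in the blowup estimate preserves the argument.
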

        The assumptions of \cref{hjb:asymptotics:derivative} are satisfied in the Heston model with $k = 0$ and $l = 0$ (so $\frac{u'}{u}$ is bounded), the Stochastic MPR model with $k = 1$ and $l = 1$ (so $\frac{u'}{u}$ grows at most linearly), and the Vasicek model with $k = 0$ and $l = \frac{1}{2}$ (so $\frac{u'}{u}$ grows at most linearly).
		\begin{proof}
				Let $w = \frac{u}{\eta}$ as in \cref{hjb:asymptotics}.
				We have $\frac{u'}{u} = \frac{\eta'}{\eta} + \frac{w'}{w}$. 
				As $\frac{\eta'}{\eta}$ is bounded by \cref{hjb:asymptotics:assumption} and $w$ is bounded and bounded away from $0$, it is enough to consider $w'$.

				Let $(y_n)_{n\in \N}$ be a sequence with $y_n \to \infty$, $w'(y_n) \to \limsup_{y\to \infty}|w'(y)|$, and $w''(y_n) \to 0$ as $n \to \infty$.
				Such a sequence exists by \cref{auxiliary:monotone_convergence} as $w\to 1$ by \cref{hjb:asymptotics}.
				Using \cref{hjb:asymptotics:assumption}, taking limits along this sequence in \cref{hjb:asymptotics:ode} yields that \[
						\underbrace{\left( \frac{\tilde{a}}{\eta} + (b^2-2d) \frac{\eta'}{\eta^2} \right)}_{\sim K_1 y^{-k}} w'(y_n) - \underbrace{\frac{d}{\eta}}_{\sim K_2 y^{-2l}} w'(y_n)^2 \longrightarrow 0
				.\] 
				Thus, $z_n = \frac{w'(y_n)}{y^{2l-k}}$ satisfies \[
						y_n^{2(l-k)} (g_1 K_1 z_n - g_2 K_2 z_n^2) \longrightarrow 0
				\] for some $g_1, g_2$ with $g_1 \to 1$, $g_2 \to 1$.
				Since $l-k \ge 0$, $\limsup_{n\to \infty} z_n \in \{0, \frac{K_1}{K_2}\} $.
				Hence, we have $w' \in \mathcal{O}(y^{2l-k})$.
		\end{proof}

		If the log-derivative is bounded, we can show that it must in fact even converge to $0$.

		\begin{corollary}
				\label{hjb:asymptotics:derivative:vanishes}
				Suppose that the assumptions of \cref{hjb:asymptotics:derivative} hold for $k = l = 0$, and that $\frac{\eta'}{\eta}\to 0$.
				Then $\frac{u'}{u} \to 0$ as $y \to \infty$.
		\end{corollary}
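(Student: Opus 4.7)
The plan is to reduce the corollary to showing $w' \to 0$ for $w := u/\eta$, and then to deduce $w' \to 0$ from boundedness of $w''$ via a Barbalat-type argument.

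First observe that $\tfrac{u'}{u} = \tfrac{\eta'}{\eta} + \tfrac{w'}{w}$. The new hypothesis $\eta'/\eta \to 0$, together with \cref{hjb:asymptotics} (which gives $w \to 1$, hence $w$ is eventually bounded above and away from zero), reduces the claim to proving $w' \to 0$. From \cref{hjb:asymptotics:derivative} applied with $k = l = 0$ we already know that $w'$ is bounded.

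Next I would upgrade bounded $w'$ to bounded $w''$. Rearranging the ODE \eqref{hjb:asymptotics:ode} satisfied by $w$ yields
\[
\frac{b^2}{2\eta}\, w'' \;=\; w(w-1) \;-\; \Bigl(\tfrac{\tilde a}{\eta} + (b^2 - 2d)\tfrac{\eta'}{\eta^2}\Bigr) w' \;-\; (\Psi\eta - 1)\, w \;+\; \tfrac{d}{\eta}\,\tfrac{(w')^2}{w}.
\]
Each term on the right is bounded on some half-line $[y_1,\infty)$, using \cref{hjb:asymptotics:assumption} (which controls $b^2/\eta$, $\tilde a/\eta$ and $\eta'/\eta$), the hypothesis $d/\eta \sim K_2$, the boundedness of $w'$ and the lower bound $w \ge C_1/C_2 > 0$. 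Since $d = \tfrac12 b^2\bigl((1-\rho^2)R + \rho^2 + 1\bigr)$ is a strictly positive multiple of $b^2$, the hypothesis $d/\eta \sim K_2 \ne 0$ forces $K_2 > 0$ and produces a positive asymptotic lower bound on $b^2/\eta$. Dividing through, $w''$ is therefore bounded on some half-line $[y_2, \infty)$.

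The final step is a standard Landau/Barbalat argument: $w \to 1$ together with bounded $w''$ forces $w' \to 0$. Indeed, if $|w'(z_n)| \ge \varepsilon > 0$ along some sequence $z_n \to \infty$, then with $M := \sup_{[y_2,\infty)} |w''|$ the derivative $w'$ retains its sign and has magnitude at least $\varepsilon/2$ on each interval $[z_n - \varepsilon/(2M),\, z_n + \varepsilon/(2M)]$, so $w$ oscillates with amplitude at least $\varepsilon^2/(2M)$, contradicting the convergence of $w$. I expect the main obstacle to lie in the middle step: extracting a uniform positive lower bound on $b^2/\eta$ so that bounded $\tfrac{b^2}{\eta} w''$ upgrades to bounded $w''$. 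This is precisely where $l = 0$ combined with $K_2 \ne 0$ becomes essential; with $l > 0$ one would only get the weaker conclusion $w' \in \mathcal{O}(y^{2l-k})$ of \cref{hjb:asymptotics:derivative}.
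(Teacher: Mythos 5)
Your proposal is correct and follows essentially the same route as the paper: reduce to $w' \to 0$, use the hypothesis $d/\eta \sim K_2 \neq 0$ (together with $d$ being a positive multiple of $b^2$) to get $b^2/\eta$ bounded below, combine this with the bounded $w'$ from \cref{hjb:asymptotics:derivative} and the ODE \eqref{hjb:asymptotics:ode} to bound $w''$, and then apply Barbalat's lemma. The only cosmetic difference is that you spell out the Barbalat/Landau argument rather than citing it.
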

        The assumptions of \cref{hjb:asymptotics:derivative:vanishes} are satisfied in the Heston model.
		\begin{proof}
                First, notice that $\frac{d}{\eta} \sim K_2$ implies that $\frac{b^2}{\eta} \in [\tilde{K}^{-}_2, \tilde{K}^{+}_2]$ eventually for some constants $\tilde{K}^{\pm}_2 > 0$ since $d = \frac{1}{2} b^2 ((1-\rho^2) R + \rho^2 + 1)$.
				By \cref{hjb:asymptotics:derivative},  $w'$ is bounded.
				Hence, $w''$ is also bounded by \cref{hjb:asymptotics:ode} and $\frac{b^2}{\eta}$ being bounded away from $0$.
				Since $w \to 1$, Barbalat's lemma now yields $w' \to 0$.
				This means we have $\frac{u'}{u} = \frac{w'}{w} + \frac{\eta'}{\eta} \to 0$ as $y\to \infty$ since $\frac{\eta'}{\eta} \to 0$.
		\end{proof}	

		Now, we focus on models in which the stochastic factor is mean-reverting. 
		For these, we obtain stronger asymptotics for the log-derivative, and we show that under regularity assumptions the optimal consumption rate must eventually lie below the frozen consumption rate $\eta$.
		Note that mean reversion here refers to the dynamics under the minimal distortion measure, i.e.\ to the drift term $\tilde{a}$.
       
        While the Vasicek model is automatically mean-reverting under the minimal distortion measure, we have to assume that $\kappa > \frac{1-R}{R} \rho \lambda \nu$ in the Heston model, and $\kappa > \frac{1-R}{R} \rho \nu $ in the Stochastic MPR model to ensure that the models are mean-reverting.
        
		\begin{theorem}
				\label{hjb:asymptotics:mean_reverting}
				Let $u$ be a positive solution to \cref{hjb:hjb} on $[y_0, \infty)$ with $\frac{u}{\eta} \to 1$ as $y\to \infty$.
				Assume that $\bar{a} := \frac{\tilde{a}}{\eta} + (b^2-2d) \frac{\eta'}{\eta^2} \le 0$, $\eta > 0$, and $\Psi \eta \le 1$ on $[y_0, \infty)$.
				Then $u$ satisfies either $u > \eta$ eventually or $u \le \eta$ eventually.

				If additionally $\frac{\tilde{a}}{b^2} \le -C$ on $[y_0, \infty)$ for some $C > 0$, $\eta \to \infty$ and $\frac{\eta(y)}{e^{2C y}} \to 0$ as $y\to \infty$, then $u \le \eta$ eventually.
		\end{theorem}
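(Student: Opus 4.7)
Let $w := u/\eta$, which satisfies \eqref{hjb:asymptotics:ode}. At any interior critical point $y^{\ast}$ with $w'(y^{\ast})=0$ and $w(y^{\ast})>1$, this equation yields
\[
\tfrac{b^{2}}{2\eta}w''(y^{\ast}) \;=\; w(y^{\ast})(w(y^{\ast})-1) - (\Psi\eta(y^{\ast})-1)w(y^{\ast}) \;>\; 0,
\]
using $\Psi\eta \le 1$; so $y^{\ast}$ is a strict local minimum, and $w$ admits no interior local maximum with $w>1$. This forces the dichotomy: if neither alternative held eventually, the open set $\{w>1\}\subseteq[y_{0},\infty)$ would admit a connected component $I$ meeting every tail $[y_1,\infty)$. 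Whether $I$ is bounded with $w=1$ on $\partial I$, unbounded of the form $(a,\infty)$ with $w(a)=1$ and $w\to 1$ at infinity, or starts at $y_{0}$, continuity of $w$ and $w\to 1$ force the supremum of $w$ on $\overline{I}$ to be attained in the interior --- an interior local maximum with $w>1$, contradicting the above.

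For the second assertion, suppose for contradiction that $u>\eta$ eventually, so $w>1$ on some $[y_{1},\infty)$. The same extremum argument, now applied to minima (two interior minima with $w>1$ would force an intervening maximum, forbidden above), shows that $w$ has at most one interior extremum on $[y_{1},\infty)$; since $w\to 1^{+}$, $w$ is nonincreasing on some $[y_{2},\infty)$. Setting $\psi:=w-1$ gives $\psi>0$, $\psi'\le 0$ and $\psi\to 0$. From \eqref{hjb:asymptotics:ode}, using $\Psi\eta\le 1$ and the non-negativity of the $(w')^{2}/w$-term,
\[
\tfrac{b^{2}}{2\eta}\psi''+\bar{a}\psi'-\psi\;\ge\;\psi^{2}\;>\;0.
\]
Multiplying by $2\eta/b^{2}$ and using $b^{2}-2d=-b^{2}((1-\rho^{2})R+\rho^{2})\le 0$ together with $\eta'\ge 0$ eventually (a consequence of $\eta\to\infty$), one obtains $\psi''+A\psi'-B\psi\ge 0$, where $A := 2\eta\bar{a}/b^{2}\le -2C$ and $B := 2\eta/b^{2}>0$.

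With integrating factor $\mu(y):=\exp\bigl(\int_{y_{2}}^{y}A\bigr)\le e^{-2C(y-y_{2})}$, the inequality becomes $(\mu\psi')'\ge\mu B\psi\ge 0$, so $\mu\psi'$ is nondecreasing. Since $\psi'\le 0$, $\mu\psi'$ admits a limit $L\le 0$; ruling out $L<0$ (which would force $\psi\to-\infty$ via $\psi'\le L/(2\mu)$) yields $L=0$. Integrating from $y$ to $\infty$ gives $-\mu(y)\psi'(y)\ge\int_{y}^{\infty}\mu B\psi$, and a further integration over $[y_{2},\infty)$ together with Fubini produces
\[
\psi(y_{2})\;\ge\;2C\!\int_{y_{2}}^{\infty}\!\mu\psi\,\mathrm{d}y+\int_{y_{2}}^{\infty}\!\mu(t)B(t)\psi(t)(t-y_{2})\,\mathrm{d}t.
\]
The hypothesis $\eta/e^{2Cy}\to 0$ calibrates the exponential damping $\mu B\to 0$, while a Riccati-type lower bound for $\psi$ derived from the logarithmic derivative $\psi'/\psi$ in the inequality $\mathcal{L}\psi\ge\psi^{2}$ prevents $\psi$ from decaying faster than the critical exponential rate. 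Combined with $\eta\to\infty$, these compete to force the right-hand side above to diverge --- the desired contradiction.

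The main obstacle lies in the second part. The extremum analysis underlying the dichotomy and the monotonicity of $w$ is clean. In contrast, converting the differential inequality $\tfrac{b^{2}}{2\eta}\psi''+\bar{a}\psi'-\psi\ge\psi^{2}$ into a quantitative contradiction with $\psi\to 0$ requires carefully matching the exponential damping rate $2C$ coming from $\tilde{a}/b^{2}\le -C$ against the sub-exponential growth of $\eta$ encoded in $\eta/e^{2Cy}\to 0$, while simultaneously establishing a sharp lower bound on the decay of $\psi$. This balance is the crux of the argument.
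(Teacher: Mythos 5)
Your first assertion is fine, and it rests on the same key observation as the paper's proof: at any critical point of $w = u/\eta$ with $w > 1$, \cref{hjb:asymptotics:ode} gives $w'' > 0$ (using $\Psi\eta \le 1$), so such a point is a strict local minimum and $w$ has no interior local maximum above $1$. (The phrase ``a connected component meeting every tail'' is not quite right --- if neither alternative holds eventually, $\{w>1\}$ has infinitely many bounded components --- but your case analysis covers those, which is what matters.)

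Your second argument, however, has a genuine gap, and you acknowledge as much. You substitute $\psi = w-1$, reach $\psi'' + A\psi' - B\psi \ge 0$ with $A := 2\eta\bar a/b^2$, and try an integrating-factor contradiction. Two problems. First, the bound $A \le -2C$ requires $\eta' \ge 0$, which you claim is ``a consequence of $\eta\to\infty$''; this is false in general ($\eta(y)=y+\sin(y^2)$ tends to infinity with $\eta'$ negative on intervals arbitrarily far out), so the estimate $\mu(y)\le e^{-2C(y-y_2)}$ is not secure. Second, and more fundamentally, you never close the argument: the ``competition'' between the decay of $\mu B$ and a lower bound on the decay of $\psi$ is precisely what a proof would have to resolve, and you leave it as a description rather than a derivation.

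The paper avoids all of this by staying with $u$ itself. If $u>\eta$ eventually, then since $\eta\to\infty$ and $u/\eta\to 1$ we have $u\to\infty$, so there is $y_1$ in the tail (where $u>\eta$) with $u'(y_1)>0$. On that tail, \eqref{hjb:hjb} and $d>0$ give
\[
\tfrac{1}{2}b^2 u'' \;=\; -\tilde a\,u' + u(u-\eta) + d\tfrac{(u')^2}{u} \;\ge\; -\tilde a\,u' \;\ge\; C\,b^2\,u' \quad\text{wherever } u'\ge 0,
\]
that is, $u''\ge 2Cu'$. A standard comparison then keeps $u'$ positive and yields $u'(y)\ge u'(y_1)e^{2C(y-y_1)}$, so $u$ grows at least like $e^{2Cy}$, contradicting $u\sim\eta=o(e^{2Cy})$. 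The decisive simplification is that the nonlinearity $u(u-\eta)$ --- whose sign is exactly what the hypothesis $u>\eta$ controls --- can be \emph{discarded} to produce a first-order exponential lower bound for $u'$. Passing to $\psi=w-1$ destroys that structure: the corresponding nonnegative term $\psi^2$ is too weak, which is why you are pushed into a second-order analysis that your write-up does not complete.
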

        The assumptions of \cref{hjb:asymptotics:mean_reverting} are satisfied in the mean-reverting Heston model, the mean-reverting Stochastic MPR model, and the Vasicek model.
		\begin{proof}
				Set $w = \frac{u}{\eta}$ as in \cref{hjb:asymptotics}.
				Assume that $w > 1$, $w' \ge 0$ at some point, then \[
						\frac{\frac{1}{2}b^2}{\eta} w'' = w (w - \Psi \eta) - \bar{a} w' + \frac{d}{\eta} \frac{(w')^2}{w} > 0
				,\] as $\bar{a} \le 0$ and $w > 1 \ge \Psi \eta$, i.e.\ $w$ is convex at that point.
				Since $w$ and $w'$ are increasing, they will keep satisfying $w > 1, w' \ge 0$, so $w$ will stay convex, and hence increasing, from then onwards.
				This contradicts the convergence of $w$ to $1$.
				In particular, this means that $w$ can't cross into $(1, \infty)$ from below.
				Hence, we have either $w > 1$ or $w \le 1$ eventually, and thus also $u > \eta$ or $u \le \eta$ eventually.

				Now, assume that the additional hypotheses of the theorem hold, and suppose that $u > \eta$ eventually.
				Since $\eta \to \infty$ and $\frac{u}{\eta} \to 1$, there exists $y_1 \ge y_0$ with $u'(y_1) > 0$.
				As \[
						\frac{1}{2} b^2 u'' = -\tilde{a} u' + u (u - \eta) + d \frac{(u')^2}{u} \ge -\tilde{a} u'
				\] and $\frac{\tilde{a}}{b^2} \le -C$, $u'$ is a supersolution to the first-order ODE \[
						f' = 2C f, \quad f(y_1) = u'(y_1)
				.\] 
				Hence, $u'$ and thus also $u$ grow at least exponentially.
				Since $\frac{u}{\eta} \to 1$, this contradicts $\frac{\eta(y)}{e^{2Cy}} \to 0$, so we must have $u \le \eta$ eventually.
		\end{proof}

        Under the assumption that $\Psi \eta$ is eventually concave, we obtain the stronger bound $u \le \eta \Psi \eta \le \eta$.
        Note that the eventual concavity is satisfied if $\Psi \eta$ is a rational function that converges to $1$ from below.

		\begin{proposition}
				\label{hjb:asymptotics:mean_reverting:less_than_Psi_eta}
				Let $u$ be a positive solution to \cref{hjb:hjb} on $[y_0, \infty)$ with $\frac{u}{\eta} \to 1$ and $u \le \eta$ eventually.
				Assume that $\bar{a} := \frac{\tilde{a}}{\eta} + (b^2 - 2d) \frac{\eta'}{\eta^2} \le 0$ and $\eta > 0$ on $[y_0, \infty)$, and that $\Psi \eta$ is increasing and concave on $[y_0, \infty)$.
				Then $u \le \eta \Psi \eta$ eventually.
		\end{proposition}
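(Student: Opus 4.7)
\emph{Proof plan.} Set $w := u/\eta$ and $\phi := \Psi\eta$, and let $h := w - \phi$; the goal is to show $h \le 0$ eventually. A direct rearrangement of \eqref{hjb:asymptotics:ode} gives
\[
\frac{b^{2}}{2\eta}\,w'' \;=\; w(w-\phi) \;-\; \bar{a}\,w' \;+\; \frac{d}{\eta}\,\frac{(w')^{2}}{w}. \qquad (\star)
\]
Since $\phi$ is monotone, it has a limit in $(0,\infty]$; if the limit is $>1$ then eventually $\eta\Psi\eta > \eta \ge u$ and the conclusion is trivial, while a limit $<1$ would contradict $u/\eta \to 1$. So I may assume $\phi \to 1$ with $\phi \le 1$, whence $h \to 0$. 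The strategy is then a two-stage maximum-principle argument driven by the concavity of $\phi$.

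\emph{Local step.} At any interior local maximum $y^\star$ of $h$ with $h(y^\star) > 0$ one has $w'(y^\star) = \phi'(y^\star)$ and $w''(y^\star) \le \phi''(y^\star) \le 0$, the last inequality by concavity of $\phi$. Substituting $w' = \phi'$ into $(\star)$, the three terms on the right are strictly positive (as $w > \phi > 0$), nonnegative (as $\bar{a} \le 0$ and $\phi' \ge 0$), and nonnegative, respectively; hence $w''(y^\star) > 0$, a contradiction. So $h$ admits no positive-valued interior local maximum on $[y_{0},\infty)$.

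\emph{Global argument and main obstacle.} A bounded connected component of $\{h > 0\}$ would have $h \equiv 0$ at its endpoints and thus a positive interior maximum, which the local step excludes. Hence $\{h>0\}$ is contained in an unbounded tail $[y_{1},\infty)$, on which the absence of interior maxima combined with $h \to 0$ forces $h$ to decrease monotonically to $0$ (a local minimum would require $h$ to increase from a positive value, but then reaching $0$ would necessitate a subsequent local maximum, which is forbidden). The main obstacle is to rule out this tail case. To do so I plan to apply the analogous local step to $w$ itself: at an interior maximum of $w$ inside the tail, $w'(y^\star)=0$ and $(\star)$ reduces to $\tfrac{b^{2}}{2\eta}\,w''(y^\star) = w(y^\star)\,h(y^\star) > 0$, again contradicting $w''(y^\star) \le 0$. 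Thus $w$ also has no interior local maximum on the tail and must be eventually monotone. Combining with $w \le 1$ and $w \to 1$: a monotone non-increasing tail would give $w \equiv 1$ on a sub-tail, and then $(\star)$ forces $\phi \equiv 1$ there, contradicting $h > 0$. So $w$ is eventually non-decreasing; but then $w' \ge 0$ and $(\star)$ yields $w'' > 0$, making $w$ strictly convex. A bounded strictly convex function with a finite limit must have $w' \to 0$ strictly from below, hence $w' < 0$ eventually, contradicting $w' \ge 0$. Every branch of the case analysis produces a contradiction, so $h \le 0$ eventually, as required.
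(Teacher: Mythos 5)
Your proof is correct, and the two local ODE calculations you use (no positive interior local maximum of $h = w - \Psi\eta$; no interior local maximum of $w$ on a tail where $h > 0$) rest on exactly the convexity consequence of \eqref{hjb:asymptotics:ode} that drives the paper's argument. However the global strategy is genuinely different. The paper's proof rests on a single crossing lemma: $w$ cannot cross $\Psi\eta$ from below, because just after such a crossing one has $w > \Psi\eta$ and $w' > (\Psi\eta)' \ge 0$, which together with the convexity of $w$ (from the ODE) and the concavity of $\Psi\eta$ forces $w$ to stay above $\Psi\eta$ and remain convex forever, contradicting $w \to 1$. The paper then combines this with the existence of a point where $w \le \Psi\eta$, obtained from $w \le 1$, $w \to 1$ and the impossibility of eventual convexity, and concludes in two lines. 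You instead analyse the connected components of $\{w > \Psi\eta\}$, discarding bounded components via the interior-maximum argument for $h$ and discarding an unbounded tail via a second interior-maximum argument for $w$ followed by a case split on eventual monotonicity. Both routes are valid; the paper's is shorter and makes the concavity--convexity interplay more transparent, whereas yours is a more mechanical exhaustion of cases that is arguably easier to check locally. Two small remarks. First, your preliminary reduction to $\Psi\eta \to 1$ is not actually self-contained: the assertion that $\lim \Psi\eta < 1$ contradicts $u/\eta \to 1$ is really a consequence of your own tail argument (or of passing to limits in \eqref{hjb:asymptotics:ode}, which requires boundedness hypotheses from \cref{hjb:asymptotics:assumption} that are not assumed here). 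This does no harm, because the tail argument only uses $h > 0$, $w \le 1$ and $w \to 1$ and never $h \to 0$, so the preliminary step can simply be dropped. Second, the inference from the absence of interior local maxima of $w$ on the tail to eventual monotonicity of $w$ deserves a sentence of justification (between two consecutive local minima a continuous function must attain a local maximum, so $w$ has at most one local minimum on the tail, after which it is monotone).
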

        The assumptions of \cref{hjb:asymptotics:mean_reverting:less_than_Psi_eta} are satisfied in the Heston model with $\kappa > \frac{1-R}{R} \rho \lambda \nu$, the Stochastic MPR model with $\kappa > \frac{1-R}{R} \rho \nu$, and the Vasicek model.
 		\begin{proof}
				Set $w = \frac{u}{\eta}$ as in \cref{hjb:asymptotics}.
				If $w > \Psi \eta$ and $w' \ge 0$, then \[
						\frac{\frac{1}{2}b^2}{\eta} w'' = (w - \Psi \eta) w - \bar{a} w' + \frac{d}{\eta} \frac{(w')^2}{w} > 0
				,\] i.e.\ $w$ is convex and stays convex from then on as long as $w > \Psi \eta$.

				Assume that $w$ crosses $\Psi \eta$ from below. 
				Then there exists a point at which $w > \Psi \eta$ and $w' > (\Psi \eta)' \ge 0$.
				Since $\Psi \eta$ is concave, this means that $w > \Psi \eta$ from then on.
				Since $w \to 1$, $w$ can't be convex eventually, so $w$ can't cross $\Psi \eta$ from below.

				Finally, since $w \to 1$ and $w \le 1$, we have $\sup \{y : w'(y) \ge 0\} = \infty $. 
				Since $w$ can't be eventually convex, this means there exists a point with $w \le \Psi \eta$.
				Since $w$ can't cross $\Psi \eta$ from below, $w \le \Psi \eta$ from that point onwards, and so we have $u \le \eta \Psi \eta$ eventually.
		\end{proof}

		If the mean reversion rate of the stochastic factor is large compared to the frozen consumption rate, we obtain that the log-derivative vanishes at infinity.

		\begin{theorem}
				\label{hjb:asymptotics:derivative:mean_reverting}
				Let $u$ be a positive solution to \cref{hjb:hjb} on $[y_0, \infty)$ with $\frac{u}{\eta} \to 1$ as $y\to \infty$.
				Assume that $\eta > 0$, $\Psi \eta \le 1$, $\tilde{a} \le 0$, and $\bar{a} := \frac{\tilde{a}}{\eta} + (b^2 - 2d) \frac{\eta'}{\eta^2} \le -C$ on $[y_0, \infty)$ for some $C > 0$.
				Furthermore, assume that $\Psi \eta$ is strictly increasing on $[y_0, \infty)$, and that $\eta \to \infty$, $\Psi \eta \to 1$, and $\frac{\eta'}{\eta} \to 0$.
				Then $\frac{u'}{u} \to 0$ as $y\to \infty$.
		\end{theorem}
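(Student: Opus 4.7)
Set $w := u/\eta$. By hypothesis $w \to 1$, and since $u'/u = w'/w + \eta'/\eta$ with $\eta'/\eta \to 0$, the claim reduces to showing $w' \to 0$. From the derivation in the proof of \cref{hjb:asymptotics}, $w$ satisfies the ODE \eqref{hjb:asymptotics:ode}, i.e.,
\begin{equation*}
\frac{b^2}{2\eta} w'' + \bar a w' + (\Psi\eta - 1)w - \frac{d}{\eta}\frac{(w')^2}{w} = w(w-1).
\end{equation*}

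The plan mirrors the strategy of \cref{hjb:asymptotics:derivative,hjb:asymptotics:derivative:vanishes}. I first show $\limsup_{y\to\infty} w'(y) \le 0$ by contradiction. Supposing $L := \limsup w' > 0$, \cref{auxiliary:monotone_convergence} produces a sequence $y_n \to \infty$ with $w'(y_n) \to L$ and $w''(y_n) \to 0$, and (passing to a subsequence) so that $\bar a(y_n)$, $(d/\eta)(y_n)$, $\Psi\eta(y_n)$ and $(b^2/\eta)(y_n)$ each converge in $[-\infty, \infty]$. If $(b^2/\eta)(y_n)$ has a finite limit, then taking limits in the ODE (using $w \to 1$, $\Psi\eta \to 1$) yields the algebraic relation $\bar a_\infty L - d_\infty L^2 = 0$, which is incompatible with $\bar a_\infty \le -C < 0$, $d_\infty \ge 0$ and $L > 0$. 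If $(b^2/\eta)(y_n) \to \infty$, rescaling the ODE by $\eta/b^2$ first yields that $\tfrac{1}{2} w''(y_n)$ is forced to diverge (since $\bar a w' \to \bar a_\infty L < 0$ while the other terms stay bounded), contradicting $w''(y_n) \to 0$. Since a converging $w$ cannot have uniformly negative derivative, $\limsup w' \ge 0$ is automatic, so $\limsup w' = 0$.

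The complementary bound $\liminf w' \ge 0$ is immediate from the same argument in the regime $(b^2/\eta)(y_n) \to \infty$: the ODE again forces a sign contradicting $w''(y_n) \to 0$. In the regime where $(b^2/\eta)(y_n)$ has a finite positive limit (the Heston-type case), however, the algebraic limit $\bar a_\infty L_- - d_\infty L_-^2 = 0$ admits $L_- = \bar a_\infty/d_\infty < 0$, and further input is needed. Here I would invoke the dichotomy from \cref{hjb:asymptotics:mean_reverting} (whose hypotheses hold since $\bar a \le -C$ strengthens $\bar a \le 0$): if $w > 1$ eventually, then $w'$ is eventually non-positive and $w$ decreases monotonically to $1$, and finiteness of $b^2/\eta$ together with the ODE forces $w''$ to be bounded, so Barbalat's lemma yields $w' \to 0$; if $w \le 1$ eventually, any subsequential limit $L_- < 0$ combined with the already established $\limsup w' = 0$ would force $w$ to drop strictly below $1$ on intervals of non-shrinking total length, contradicting $w \to 1^-$.

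The main obstacle I expect is the interplay between the potentially unbounded coefficient $b^2/\eta$ and the second-order term in the ODE. Dividing through by either $\frac{b^2}{2\eta}$ or its reciprocal at appropriate moments is crucial, and the strong mean-reversion $\bar a \le -C$ is what ensures that in the unbounded-ratio regime one of these rescalings produces a contradiction to $w''(y_n) \to 0$ whenever the hypothesised subsequential limit of $w'$ is nonzero, with the bounded-ratio regime reducing to a Barbalat argument as in \cref{hjb:asymptotics:derivative:vanishes}.
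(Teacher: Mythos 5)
Your proposal takes a genuinely different route from the paper's. The paper invokes the dichotomy of \cref{hjb:asymptotics:mean_reverting} at the very start and then, in each branch, establishes that $w := u/\eta$ (or $u$ itself) is \emph{eventually monotone}: in the branch $u>\eta$ this comes from the fact that any critical point of $u$ with $u>\eta$ is a minimum, leading to eventual convexity and an application of \cref{auxiliary:convex:derivative}; in the branch $w\le 1$ it comes from the fact that at a critical point $\tfrac{1}{2}\tfrac{b^2}{\eta}w''=w(w-\Psi\eta)$, and since $\Psi\eta$ is \emph{strictly increasing} $w$ cannot have a local minimum after a local maximum, so $w$ is eventually increasing. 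Once monotonicity is in hand, $\liminf w'\ge 0$ is free, and only a convexity trap is needed to rule out $\limsup w'>0$. You instead pursue a subsequence-and-limit argument and only bring in the dichotomy in one sub-case.

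The genuine gap is in your proof of $\liminf w'\ge 0$. In the regime $(b^2/\eta)(y_n)\to\infty$ with $w'(y_n)\to L_-<0$, multiplying the ODE by $2\eta/b^2$ yields $w''(y_n)\to -2TL_- + \tfrac{2d}{b^2}L_-^2$, where $T:=\lim(\tilde a/b^2)(y_n)\le 0$. These two terms have \emph{opposite} signs ($-2TL_-\le 0$ and $\tfrac{2d}{b^2}L_-^2>0$), so the limit can be zero, and no contradiction with $w''(y_n)\to 0$ follows. (The same computation also shows that your explanation for $\limsup w'\le 0$ in this regime is misdirected: after rescaling, what produces the contradiction when $L>0$ is that $\tfrac{2\eta}{b^2}\cdot\tfrac{d}{\eta}\tfrac{(w')^2}{w}\to\tfrac{2d}{b^2}L^2>0$, a fixed positive constant, while $-\tfrac{2\eta}{b^2}\bar a w'$ contributes a nonnegative quantity; the term $\bar a w'$ alone does not force $w''$ to diverge.) In the bounded-$(b^2/\eta)$ regime with $w\le 1$, the claim that $L_-<0$ forces $w$ to drop below $1$ "on intervals of non-shrinking total length" is not substantiated: the intervals on which $w'$ is near $L_-$ could shrink fast enough for $w$ to still converge. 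Finally, in your $w>1$ branch, the assertion that $w$ decreases monotonically to $1$ does not follow from $\limsup w'\le 0$ alone (which only gives $w'\le\varepsilon$ eventually), and the Barbalat step requires $w''$ bounded, which in turn needs $b^2/\eta$ bounded away from $0$ and $\bar a, w'$ bounded -- none of which you establish. The ingredient you are missing throughout is precisely the eventual monotonicity of $w$, which the hypothesis ``$\Psi\eta$ strictly increasing'' is there to deliver; without it the $\liminf$ direction does not close.
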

        The assumptions of \cref{hjb:asymptotics:derivative:mean_reverting} are satisfied in the Heston model with $\kappa > \frac{1-R}{R} \rho \lambda \nu$ and the Vasicek model, but not in the Stochastic MPR model.
		\begin{proof}
				Let $w = \frac{u}{\eta}$.
				Since  $\frac{u'}{u} = \frac{w'}{w} + \frac{\eta'}{\eta}$, $w\to 1$ and $\frac{\eta'}{\eta} \to 0$, it is sufficient to show that $w' \to 0$.

				By \cref{hjb:asymptotics:mean_reverting}, we have either $u > \eta$ or $u \le \eta$ eventually.

				Assume that $u > \eta$ eventually. 
				If $u$ has an extremum with $u > \eta$, then \[
						\frac{1}{2}b^2 u'' = u(u-\eta) > 0
				,\] so the extremum is a minimum.
				This means that $u$ can't oscillate and must be eventually monotone.
				Since $\eta \to \infty$ and $\frac{u}{\eta} \to 1$, $u$ must be eventually increasing.
				Now,\[
						\frac{1}{2}b^2 u'' = -\tilde{a} u' + u(u-\eta) + d \frac{(u')^2}{u} > 0
				\] eventually, so $u$ is eventually convex. 
				By \cref{auxiliary:convex:derivative}, we then have $\frac{u'}{u} \to 0$.

				Now, consider the case $u \le \eta$, i.e.\ $w \le 1$, eventually.
				At any extremum, $w$ satisfies \[
						\frac{\frac{1}{2}b^2}{\eta} w'' = w (w - \Psi \eta) 
				,\] so the extremum is a maximum when $w < \Psi \eta$ and a minimum when $w > \Psi \eta$.
				Since $\Psi \eta$ is strictly increasing, $w$ can't have a minimum once it has had a maximum.
				Hence, $w$ can't oscillate, and must be eventually monotone.
				Since $w \to 1$, $w$ is eventually increasing.
				Thus, $w' \ge 0$ eventually and $\liminf_{y\to \infty} w' = 0$.

				Let $\varepsilon > 0$.
				For all $y$ large enough, we have $|w - \Psi \eta| < C \varepsilon$ since $w \to 1$ and $\Psi \eta \to 1$. 
				At any point with $w' \ge \varepsilon$, we then have \[
						\frac{\frac{1}{2} b^2}{\eta} w'' = w (w - \Psi \eta) - \bar{a} w' + \frac{d}{\eta} \frac{(w')^2}{w} > - C \varepsilon + C \varepsilon + 0 = 0 
				,\] so $w$ is convex and stays convex from that point onwards.
				This contradicts $w \to 1$, so we must have $w' \le \varepsilon$ eventually.
				Since $\varepsilon$ was arbitrary, this means that $\limsup_{y\to \infty} w' \le 0$, and so $w' \to 0$.
		\end{proof}

		For convenience, we state the verification theorem from \citet{Guasoni2020} in our setting.
		\begin{theorem}[{\citet[][Thm.~3.3]{Guasoni2020}}]
				\label{verification}
				Let $u\in C^2(E)$ be a positive solution to \cref{hjb:hjb} and assume that 
				\begin{enumerate}[(\roman*),ref=\roman*]
						\item \label{verification:martingale_problem} 
								There is a unique solution $\hat{\mathbb{P}}$ to be martingale problem on $\R \times E$ for 
								\begin{align*}
										\hat{L} &= \frac{1}{2} \sum_{i,j = 1}^{2} \tilde{A}_{i,j}(x_2) \frac{\partial^2}{\partial x_i \partial x_j} + \sum_{i=1}^{2} \hat{b}_i(x_2) \frac{\partial}{\partial x_i}, \\
										\tilde{A}(x_2) &= \begin{pmatrix} 
												\sigma^2 & \rho \sigma b \\ 
												\rho \sigma b & b^2
										\end{pmatrix}, \\			
                                            \hat{b}(x_2) &= \begin{pmatrix} 
														\frac{\lambda \sigma}{R} - \rho \sigma b \frac{u'}{u} \\
														a + \frac{1-R}{R} \rho \lambda b - b^2 ((1-\rho^2) R + \rho^2) \frac{u'}{u}
												\end{pmatrix} 
								.\end{align*}
							In the above, all functions with omitted arguments are evaluated at $x_2$ (which corresponds to the stochastic factor $Y$).
					\item \label{verification:infinite_consumption}
							$\int_{0}^{\infty} u(Y_t) \d t = \infty $ $\hat{\mathbb{P}}$-a.s.
				\end{enumerate}
				Then the controls \[
						\hat{\pi} = \frac{1}{R \sigma} \left(\lambda - R \rho b \frac{u'}{u}\right), \quad \hat{\xi} = u
				\] are optimal for the control problem \eqref{setting:control_problem}, and its value function is $V(x,y) = \frac{x^{1-R}}{1-R} u(y)^{-R}$.
		\end{theorem}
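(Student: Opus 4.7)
The plan is to carry out a classical verification argument built around the supermartingale property induced by the HJB equation. Define the candidate value function $\hat V(x,y) := \frac{x^{1-R}}{1-R} u(y)^{-R}$ and, for any admissible strategy $(\Pi,\Xi)\in\mathcal{A}$, introduce the process
\begin{equation*}
    M^{(\Pi,\Xi)}_t := \int_0^t e^{-\int_0^s \delta(Y_r)\d r} \frac{(\Xi_s X^{\Pi,\Xi}_s)^{1-R}}{1-R}\d s + e^{-\int_0^t \delta(Y_r)\d r}\hat V(X^{\Pi,\Xi}_t, Y_t).
\end{equation*}
A direct application of It\^o's formula shows that the finite-variation part of $M^{(\Pi,\Xi)}$ coincides, up to the discount factor, with the expression inside the supremum in the HJB equation evaluated at the controls $(\Pi_t,\Xi_t)$. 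By \eqref{hjb:hjb:pre_transformation} this drift is non-positive for every admissible pair, and vanishes identically at $(\hat\Pi,\hat\Xi)$.

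For the upper bound, I would localise $M^{(\Pi,\Xi)}$ along a sequence $\tau_n\uparrow\infty$ making it a genuine supermartingale, take expectations, and then let $n\to\infty$ followed by $t\to\infty$. After this limit one obtains
\begin{equation*}
    \hat V(x,y) \ge \E\!\left[\int_0^\infty e^{-\int_0^s \delta(Y_r)\d r}\frac{(\Xi_s X^{\Pi,\Xi}_s)^{1-R}}{1-R}\d s\right] + \liminf_{t\to\infty}\E\!\left[e^{-\int_0^t \delta(Y_r)\d r}\hat V(X^{\Pi,\Xi}_t, Y_t)\right],
\end{equation*}
where the second term is of the sign that can be dropped in the upper bound direction. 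This gives $\hat V \ge V$ pointwise.

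For the matching lower bound, under the candidate optimal controls the drift of $M^{(\hat\Pi,\hat\Xi)}$ is identically zero, so the process is a local martingale. Assumption (\ref{verification:martingale_problem}) identifies the closed-loop law $\hat{\mathbb P}$ of $(X^{\hat\Pi,\hat\Xi},Y)$ as the unique solution of the martingale problem for $\hat L$ and so fixes the dynamics under the optimal feedback. Combined with the explicit wealth representation $X^{\hat\Pi,\hat\Xi}_t = x\,\mathcal{E}\!\big(\tfrac{\lambda-R\rho b u'/u}{R}\cdot W\big)_t\exp\!\big(\int_0^t(r + \tfrac{\lambda-R\rho b u'/u}{R}\lambda - u)(Y_s)\d s\big)$, assumption (\ref{verification:infinite_consumption}) forces $X^{\hat\Pi,\hat\Xi}_t\to 0$ $\hat{\mathbb P}$-a.s., and then the growth estimates on $u$ and $u'/u$ established in \cref{hjb:asymptotics,hjb:asymptotics:derivative,hjb:asymptotics:derivative:vanishes,hjb:asymptotics:derivative:mean_reverting} yield the transversality $\lim_{t\to\infty}\E[e^{-\int_0^t\delta(Y_r)\d r}\hat V(X^{\hat\Pi,\hat\Xi}_t,Y_t)]=0$. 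This closes the loop.

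The main obstacle lies in the $t\to\infty$ limit for an \emph{arbitrary} admissible $(\Pi,\Xi)$ when $R<1$: since \eqref{setting:admissiblity} is only a local integrability condition, there is no a priori control on the growth of $X^{\Pi,\Xi}$ or on $\E[e^{-\int_0^t\delta}\hat V(X_t,Y_t)]$, and the naive application of Fatou can leave a positive liminf term that spoils the inequality. The standard remedy is a concavity/perturbation argument in the spirit of \cite{Herdegen2021}, showing that on the event where this liminf is strictly positive the running utility integral must itself be $+\infty$, rendering the inequality vacuous. The case $R>1$ is simpler because $\hat V$ is non-positive and the boundary term is on the favourable side throughout. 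Once the abstract verification is in place, the real work for any concrete model is checking (\ref{verification:martingale_problem}) and (\ref{verification:infinite_consumption}); the asymptotic results of the preceding subsection are tailored precisely to enable this.
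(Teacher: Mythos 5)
This theorem is \emph{not} proved in the paper: it is quoted verbatim (adapted to the present notation) from \citet[Thm.~3.3]{Guasoni2020}, and the authors' contribution is to check hypotheses \eqref{verification:martingale_problem} and \eqref{verification:infinite_consumption} in concrete models using the asymptotic results of the preceding subsection. So there is no paper proof to compare against; your proposal is an independent attempt to reprove Guasoni--Wang's result.

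On its own terms, the sketch has concrete gaps. First, the case distinction is inverted. For $R<1$ we have $\hat V\ge 0$, so the terminal term in $M^{(\Pi,\Xi)}$ can simply be discarded when proving $\hat V\ge V$, and monotone convergence of the running utility integral closes the upper bound without any transversality argument. The genuinely delicate case is $R>1$, where $\hat V\le 0$ and the terminal term has the \emph{wrong} sign for an arbitrary admissible $(\Pi,\Xi)$ --- precisely the gap the paper itself points out in \citet{Sotomayor2009}, see the footnote preceding \cref{finite_state_space:criterion}. Declaring $R>1$ the easy case because ``the boundary term is on the favourable side'' is backwards. Second, the inference from assumption \eqref{verification:infinite_consumption} to $X^{\hat\Pi,\hat\Xi}_t\to 0$ is unjustified: the exponent in your wealth representation also carries the deterministic integral of $r+\hat\pi\lambda\sigma-\tfrac12\hat\pi^2\sigma^2$ and a stochastic exponential, either of which can offset $-\int_0^t u(Y_s)\d s$; divergence of the consumption integral alone does not give pathwise decay of wealth. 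In Guasoni--Wang, \eqref{verification:infinite_consumption} enters a measure-change/duality argument rather than a pathwise transversality claim. Third, you invoke \cref{hjb:asymptotics,hjb:asymptotics:derivative,hjb:asymptotics:derivative:vanishes,hjb:asymptotics:derivative:mean_reverting} to obtain the transversality, but those are not hypotheses of the present theorem and only hold under the model-specific conditions of \cref{section:diffusion}; they may be used to \emph{verify} \eqref{verification:martingale_problem}--\eqref{verification:infinite_consumption} for a concrete model, not in a proof of the theorem as stated.
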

		\begin{remark}
				By \citet[Thm.~10.2.2]{Stroock1997}, \eqref{verification:martingale_problem} is satisfied if the entries of $\tilde{A}$ grow at most quadratically, and the entries of $\hat{b}$ grow at most linearly.
				If the model is uniformly well-posed, $u$ is bounded away from $0$, so \eqref{verification:infinite_consumption} is trivially satisfied.
		\end{remark}

		\section{Numerical Approximation and Convergence}
		\label{section:numerics}

		In this section, we propose a numerical scheme for computing the optimal consumption rate (and thus the value function) for the optimal investment and consumption problem \eqref{setting:control_problem}.

		First, we show that in the finite-regime setting of \cref{section:finite_state_space} the solution to the matrix HJB equation \eqref{finite_state_space:hjb} (and thus also the value function and optimal policies) can be efficiently computed using a fixed point iteration if $R > \frac{1}{2}$.

		\begin{proposition}
				\label{numerics:fixed_point_iteration}
				Let $A \in \R^{N \times N}$ be a non-singular $M$-matrix, $p \in (-1,1)$, and let $x_* \in \R^{N}$ with $x_* > 0$ be the unique positive solution to the equation $Ax = x^{p}$.
				Set \[
						C_{min} = \min_{i=1,\ldots,N} (A^{-1} \one)_i, \quad C_{max} = \max_{i=1,\ldots,N} (A^{-1} \one)_i
				\] and 
				\begin{align*}
						m = \begin{cases}
								C_{min}^{\frac{1}{1-p}}, & \text{ if } p \in [0, 1), \\
								(C_{min} C_{max}^{p})^{\frac{1}{1-p^2}}, & \text{ if } p \in (-1, 0),
						\end{cases} \quad 
						M = \begin{cases}
								C_{max}^{\frac{1}{1-p}}, & \text{ if } p \in [0, 1), \\
								(C_{min}^{p} C_{max})^{\frac{1}{1-p^2}}, & \text{ if } p \in (-1, 0).
						\end{cases}
				\end{align*}
				Fix $x_1 \in \R^{N}$ with $m \one \le x_1 \le M \one$, and define the sequence $(x_n)_{n\in \N}$ by \[
						x_{n+1} = A^{-1} x_n^{p}, \quad n \ge 1
				.\] 
				Then $x_n \to x_*$ as $n\to \infty$.
				The convergence is geometric, with rate of convergence of at most $|p|$.
		\end{proposition}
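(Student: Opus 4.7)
The plan is to realize $T(x) := A^{-1} x^p$ as a strict contraction in the Thompson part metric $d$ on the positive cone $(0,\infty)^N$, where $d(x,y) := \log\max(\alpha(x,y), \alpha(y,x))$ and $\alpha(x,y) := \inf\{t > 0 : x \leq ty\}$; combined with forward invariance of a suitable box, Banach's fixed-point theorem then yields geometric convergence at rate $|p|$.

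First, I would verify forward invariance of $[m\one, M\one]$, for which the explicit choice of $m$ and $M$ in the statement is designed. For $p \in [0,1)$, monotonicity of $x \mapsto x^p$ together with $A^{-1} \geq 0$ gives $m^p C_{min}\one \leq T(x) \leq M^p C_{max}\one$ whenever $m\one \leq x \leq M\one$, and the identities $m^p C_{min} = m$ and $M^p C_{max} = M$ (which follow from $m = C_{min}^{1/(1-p)}$, $M = C_{max}^{1/(1-p)}$) close the loop. For $p \in (-1,0)$, the map $x \mapsto x^p$ is order-reversing, yielding the bounds $M^p C_{min}\one \leq T(x) \leq m^p C_{max}\one$; a direct calculation from the stated formulas confirms $M^p C_{min} = m$ and $m^p C_{max} = M$, so the box is still preserved.

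Next, I would show that $T$ is $|p|$-Lipschitz in the Thompson metric. Two standard estimates suffice: (i) for all $x,y \in (0,\infty)^N$, $d(x^p, y^p) = |p|\, d(x,y)$, since $x \leq t y$ is equivalent to $x^p \leq t^{|p|} y^p$ when $p > 0$ and to $y^p \leq t^{|p|} x^p$ when $p < 0$; (ii) any non-negative linear map is non-expansive for $d$, since $x \leq t y$ implies $A^{-1}x \leq t A^{-1} y$. Composing these gives $d(T(x), T(y)) \leq |p|\, d(x,y)$, so $T$ is a strict contraction with ratio $|p|$.

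To conclude, since $((0,\infty)^N, d)$ is complete and the box $[m\one, M\one]$ is $d$-closed, Banach's fixed-point theorem produces a unique fixed point of $T$ in the box with $d(x_n, x_*) \leq |p|^{n-1} d(x_1, x_*)$, giving the claimed geometric convergence at rate at most $|p|$. Any positive fixed point of $T$ solves $Ax = x^p$, so by uniqueness (\cref{finite_state_space:z_matrix:uniqueness}) it must equal $x_*$. The main subtlety is the case $p < 0$, where $T$ is order-reversing and standard monotone-iteration arguments would need to be applied to $T^2$; the Thompson-metric framework sidesteps this by furnishing contraction without any monotonicity assumption on $T$ itself.
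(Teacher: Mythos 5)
Your proposal is correct and takes essentially the same approach as the paper: the Thompson part metric you invoke coincides with the paper's metric $d(x,y)=\|\log x - \log y\|_\infty$ on the box $[m\mathbf{1}, M\mathbf{1}]$, and both arguments establish forward invariance of that box and a contraction constant of $|p|$ before applying Banach's fixed-point theorem. The only step the paper spells out that you leave implicit is the passage from convergence in the log-metric to convergence in sup-norm, which is immediate since the two are uniformly equivalent on the compact box.
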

		\begin{proof}
				Let $X = \{x\in \R^{N}: m \one \le x \le M \one\} $, and denote $Tx = A^{-1} x^{p}$.
				As $A^{-1} \ge 0$ by condition \eqref{finite_state_space:$M$-matrix:properties:inverse_nonnegative} of \cref{finite_state_space:$M$-matrix:properties}, we have for any $x\in X$
				\begin{align*}
						m \one = C_{min} m^{p} \one \le& Tx \le C_{max} M^{p} \one = M \one && \text{ if } p \in [0, 1), \\
						m \one = C_{min} M^{p} \one \le& Tx \le C_{max} m^{p} \one = M \one && \text{ if } p \in (-1, 0)
				,\end{align*}
				i.e., $T$ maps $X$ into itself.
		
				Consider the metric $d(x, y) = \|\log x - \log y\|_{\infty}$ on $X$.
				As \begin{align*}
						Tx &= A^{-1} x^{p} \le A^{-1} \left(e^{d(x,y)} y\right)^{p} = e^{p d(x, y)} Ty && \text{ if } p \in (0,1), \\
						Tx &= A^{-1} x^{p} \le A^{-1} \left( e^{-d(x,y)} y \right)^{p} = e^{-p d(x, y)} Ty && \text{ if } p \in (-1, 0)
			         \end{align*} for any $x, y \in X$, we have $d(Tx, Ty) \le |p| d(x, y)$.
                This means that $T$ is a contraction w.r.t.\ $d$, and by Banach's fixed point theorem and uniqueness of the solution, we have $x_* \in X$ and $d(x_n, x_*) \le |p|^{n} d(x_0, x_*)$.

				Since the map $x \mapsto \frac{|x-1|}{|\log x|}$ is bounded over $[\frac{m}{M}, \frac{M}{m}]$ by $C := \frac{\frac{M}{m} - 1}{|\log \frac{M}{m}|}$, we have \[
						\|x_n - x_*\|_{\infty} \le \|x_*\|_{\infty} \left\Vert\frac{x_n}{x_*} - \one\right\Vert_{\infty} \le C \|x_*\|_{\infty} d(x_n, x_*) \le C \|x_*\|_{\infty} d(x_0, x_*) |p|^{n}
				,\] so $x \to x_*$ geometrically with rate $|p|$ in norm as well.
		\end{proof}

		\begin{remark}
				\label{numerics:fixed_point_iteration:number_of_iterations}
				Further bounding the constants that depend on (the a priori unknown) $x_*$ yields that at most \[
						\frac{\log \varepsilon - \log\left( \frac{M^2}{m} - M \right) }{\log |p|}
				\] iterations are required to compute $x_*$ with an error of $\varepsilon$.
				In particular, notice that this does not depend on the number of states $N$.
		\end{remark}

        Now, we consider the diffusion setting of \cref{section:diffusion}.
		Let $E_m := [e^{-}_m, e^{+}_m] \uparrow E$ be an approximating sequence for the state space $E$.
		We approximate the problem \eqref{setting:control_problem} by replacing the stochastic factor $Y$ by the corresponding reflected diffusion $Y^{m}$ with state space $E_m$.
		Note that $Y$ and $Y^{m}$ coincide until $\tau_m := \inf \{t \ge 0: Y_t \not\in E_m\} $.
		We expect the approximation to be good for large $m$, especially when the dynamics of $Y$ are mean-reverting.
		Now, we discretise the reflected diffusion $Y^{m}$ into a continuous time Markov chain with finite state space. 
		The well-posedness of the optimal investment and consumption problem for such a stochastic factor is fully characterised by \cref{finite_state_space:criterion}, and its value function can be computed efficiently through the fixed point iteration of \cref{numerics:fixed_point_iteration}.

		In more detail, consider the diffusion process \[
				dY_t = a(Y_t) dt + b(Y_t) d \tilde{W}_t
		\] with state space $E \subseteq \R$, where $\tilde{W}$ is a Brownian motion independent of $W$.
		Note that the assumption of independence is not a large limitation as all models with constant correlation can be viewed as models with an independent Brownian motion and adjusted drift and risk aversion, see \cref{hjb:hjb:constant_correlation}.

		We approximate $Y$ by a reflected diffusion process $Y^{m}$ with state space $[e^{-}_m, e^{+}_m] \subset E$, and discretise this as a continuous time Markov chain with state space $\{y_0, \ldots, y_N\} $: 
        For $N \in \N$, set $h = \frac{e^{+}_m - e^{-}_m}{N}$ and $y_{i} = e^{-}_m + ih$ for $i \in \{0, \ldots, N\}$. 
        Define the tridiagonal Q-matrix $Q^h =(Q^h_{i, j \in \{0, \ldots, N\}})$ by
		\begin{alignat*}{2}
				& Q^h_{0,0} := -\frac{1}{2h^2} b(y_0)^2 - \frac{1}{h} a(y_0)^{+}, \quad && Q^h_{0,1} := \frac{1}{2h^2} b(y_0)^2 + \frac{1}{h} a(y_0)^{+} \\
				& Q^h_{i,i-1} := \frac{1}{2h^2} b(y_{i})^2 + \frac{1}{h} a(y_{i})^{-}, \quad && Q^h_{ii} := - \frac{1}{h^2} b(y_{i})^2 - \frac{1}{h} |a(y_{i})|, \\ & && Q^h_{i, i+1} = \frac{1}{2h^2} b(y_{i})^2 + \frac{1}{h} a(y_{i})^{+}, \\
				& Q^h_{N-1,N} := \frac{1}{2h^2} b(y_N)^2 + \frac{1}{h} a(y_N)^{-}, \quad && Q^h_{N,N} := -\frac{1}{2h^2} b(y_N)^2 - \frac{1}{h} a(y_N)^{-}
		,\end{alignat*}
        and set 
        \begin{equation}
                \label{numerics:discretised_A}
                A_h = \diag(\eta) - \frac{1}{R} Q^h.
        \end{equation}
		Note that $Q^h$ corresponds to an upwind finite difference matrix for the generator of $Y^{m}$.
        For more detail on how to construct approximating Markov chains for diffusion processes, see e.g.\ \cite[Chapter~5]{Kushner2001}.
        On the relation between the stability of upwind finite difference discretisations and M-matrices in the case of linear ODEs, see \cite[Chapter~3]{Stynes2018}.
		
		As $h \to 0$, we will show that the solutions to the discretised HJB equation $A_h x_h = x_h^{p}$ converge to the solution of the Neumann problem \[
				Lu := \eta u - \frac{1}{2R} b^2 u'' - \frac{1}{R} a u' = u^{1-\frac{1}{R}}, \quad u'(e^{-}_m) = 0 = u'(e^{+}_m)
		.\] 
		This Neumann problem is precisely the HJB equation of the optimal investment and consumption problem \eqref{setting:control_problem} with stochastic factor $Y^{m}$.
		By \cref{bounded_domain:criterion:neumann}, a solution to the Neumann problem exists if and only if the principal eigenvalue of $L$ is positive.
		This mirrors the behaviour in the case of a finite state space.

		As $A_h$ is tridiagonal, the amount of time needed to compute each step in the fixed point iteration is linear in the number of states of the chain when using the tridiagonal matrix algorithm to compute $A_h^{-1} x^{p}$.
		By \cref{numerics:fixed_point_iteration:number_of_iterations}, the number of iterations needed in the fixed point iteration to achieve a fixed error $\varepsilon$ is bounded as $h\to 0$ since $A_h^{-1} \one$ converges to the solution of $L w = 1$.
		Together, this means that the time needed to solve the discrete HJB equation $A_h x_h = x_h^{p}$ is linear in the number of states, i.e. of order $\mathcal{O}(h^{-1})$.

        The following result presents the exact convergence result. 
        Note that for each $h > 0$ (and corresponding $N_h \in \N$), we tacitly identify a function $u: E_m \to \R$ with the vector $(u(y_0), \ldots, u(y_{N_h}))^\top \in \R^{N_h+1}.$
        Notice that unlike in \cref{numerics:fixed_point_iteration} we do not assume that $p > -1$ (i.e. $R > \frac{1}{2}$) here.

		\begin{theorem}
				Let $m \in \N$, $E_m := (e^{-}_m, e^{+}_m)$, $p < 1$ and $u \in C^3(\bar E_m)$ be a positive solution to the Neumann problem $Lu = u^{p}$ over $E_m$. 
                Then for each $h$ small enough, the upwind finite difference matrix $A_h$ defined in \eqref{numerics:discretised_A} is a non-singular $M$-matrix, and the unique positive solution $x_h$ to $A_h x_h = x_h^{p}$ satisfies $\|u - x_h\|_\infty \to 0$ as $h\to 0$, with $\|u - x_h\|_\infty \in \mathcal{O}(h)$.
		\end{theorem}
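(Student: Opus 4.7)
The plan is a consistency-plus-stability argument in the $M$-matrix framework of \cref{section:finite_state_space}, with comparison provided by discrete sub- and supersolutions in the spirit of \cref{finite_state_space:z_matrix:uniqueness}.

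First I would apply $A_h$ to the grid restriction of $u \in C^3(\bar E_m)$ and use Taylor expansion. At an interior grid point $y_i$, the definition of $Q^h$ as an upwind discretization yields $A_h u(y_i) = u(y_i)^p + \mathcal{O}(h)$, with the error coming from the first-order accuracy of the upwinded drift term. At the two boundary grid points $y_0, y_N$, pointwise consistency is weaker due to the half-cell structure of the first and last rows of $Q^h$ (combined with the Neumann condition $u'(e_m^{\pm}) = 0$), but this degeneration affects only two grid points. Since $u > 0$ on the compact set $\bar E_m$, $u^p$ is bounded below by a positive constant and hence $A_h u > 0$ for all sufficiently small $h$. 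Criterion \eqref{finite_state_space:$M$-matrix:properties:positive_image} in \cref{finite_state_space:$M$-matrix:properties} then gives that $A_h$ is a non-singular $M$-matrix, and \cref{finite_state_space:z_matrix:criterion,finite_state_space:z_matrix:uniqueness} yield existence and uniqueness of a positive $x_h$ solving $A_h x_h = x_h^p$.

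To bound $\|x_h - u\|_\infty$, I would construct perturbations $\bar u_h := (1 + Ch)\, u + h\, g$ and $\underline u_h := (1 - Ch)\, u - h\, g$, where $C > 0$ is a large constant and $g \in C^1(\bar E_m)$ is a fixed positive function with $g'(e_m^-) < 0$ and $g'(e_m^+) > 0$ of large enough magnitude. In the interior, the multiplicative part gives by Taylor expansion and the strict positivity $1 - p > 0$ an $\mathcal{O}(h)$-term of the correct sign, namely $A_h\bigl((1+Ch)u\bigr) - \bigl((1+Ch)u\bigr)^p = (1-p)\, u^p\, Ch + \mathcal{O}(h^2)$, which for $C$ large enough dominates the $\mathcal{O}(h)$ interior residual. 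At the boundary, the $\tfrac{1}{2h}$-coefficient of $Q^h$ in its first and last rows means that $h A_h g$ contributes an $\mathcal{O}(1)$ amount at $y_0, y_N$ of the correct sign to absorb the boundary residual. Thus $\bar u_h$ is a discrete supersolution and $\underline u_h$ a subsolution of $A_h x = x^p$ for all sufficiently small $h$, and \cref{finite_state_space:z_matrix:uniqueness} yields $\underline u_h \le x_h \le \bar u_h$. Since $\|\bar u_h - u\|_\infty, \|\underline u_h - u\|_\infty = \mathcal{O}(h)$, we conclude $\|x_h - u\|_\infty = \mathcal{O}(h)$.

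The main technical obstacle is the boundary behaviour of $Q^h$: pointwise consistency degenerates at $y_0, y_N$ compared to the interior, and the role of the $h g$-correction in the comparison is precisely to exploit the $\tfrac{1}{2h}$-coefficients in the boundary rows of $Q^h$ to neutralise this mismatch while only perturbing the solution at order $h$.
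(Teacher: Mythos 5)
Your proposal takes a genuinely different route from the paper. Both proofs begin identically — Taylor expansion of $A_h u$ and the $M$-matrix property of $A_h$ via \cref{finite_state_space:$M$-matrix:properties}\eqref{finite_state_space:$M$-matrix:properties:positive_image} — but for the error bound the paper linearizes: it writes $B_h e_h = \tau_h$ with $B_h := A_h - \diag(p\xi_h^{p-1})$ for some $\xi_h$ between $u$ and $x_h$, and then controls $\|B_h^{-1}\|_\infty$ via the $M$-matrix bound in \cref{finite_state_space:$M$-matrix:properties}. This forces a case split: $p\le 0$ is immediate, while $p\in(0,1)$ requires several extra steps (an a priori bound on $\|x_h\|_\infty$ via $A_h^{-1}\one$, a second linearisation $\tilde B_h := A_h - \diag(p x_h^{p-1})$, a one-sided bound on $e_h^+$, and a lower bound on $\xi_h$). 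Your comparison-based argument with $\bar u_h = (1+Ch)u + hg$, $\underline u_h = (1-Ch)u - hg$ and \cref{finite_state_space:z_matrix:uniqueness} treats every $p<1$ in one stroke, and — importantly — it explicitly handles the boundary consistency degeneration via the $hg$-term, exploiting the half-cell structure of $Q^h$. The paper simply asserts $\tau_h \in \mathcal{O}(h)$ in sup norm, but, as you note, at $y_0, y_N$ the pointwise truncation error is $\mathcal{O}(1)$ (one computes $\tau_h(y_0) = \tfrac{b(y_0)^2}{4R}u''(y_0) + \mathcal{O}(h)$), so the paper's estimate $\|e_h\|_\infty \le \|B_h^{-1}\|_\infty\|\tau_h\|_\infty$ as written only yields $\mathcal{O}(1)$; your $hg$ correction salvages the $\mathcal{O}(h)$ rate transparently.

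Two points to tighten. First, $g$ needs to be at least $C^2$ (say $C^3$) rather than $C^1$ on $\bar E_m$: with only $C^1$ regularity the interior contribution of $h A_h g$ is merely $o(1)$, not $\mathcal{O}(h)$, and could then fail to be dominated by the $(1-p)Ch\,u^p$ term in the interior; since $g$ is chosen freely this costs nothing. Second, your claim that $A_h u > 0$ for $h$ small is not secured at the boundary for the same reason the consistency degenerates: using $Lu(y_0)=u(y_0)^p$ one finds $A_h u(y_0) = \tfrac12 u(y_0)^p + \tfrac12 \eta(y_0)u(y_0) + \mathcal{O}(h)$, which is positive only if $\eta(y_0)\ge 0$, a hypothesis the theorem does not impose. (The paper's proof has the identical gap, since it assumes $\|\tau_h\|_\infty$ small.) The clean repair within your framework is to feed your own supersolution $\bar u_h$ into \cref{finite_state_space:$M$-matrix:properties}\eqref{finite_state_space:$M$-matrix:properties:positive_image}: it satisfies $A_h\bar u_h \ge \bar u_h^p > 0$ once $C$ and $|g'(e_m^\pm)|$ are chosen large enough, so there is no circularity.
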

		\begin{proof}
				Since $u\in C^3$ and $A_h$ is an upwind finite difference matrix, a standard Taylor expansion yields that $A_h u = u^{p} + \tau_h$ with truncation error $\tau_h \in \mathcal{O}(h)$.
				Notice that $u$ is bounded and bounded away from $0$ over $E_m$.
				Since $A_h u = u^{p} + \tau_h \ge \frac{1}{2} u^{p} > 0$ for $\tau_h$ small enough, $A_h$ is a non-singular $M$-matrix for all $h$ small enough by condition \eqref{finite_state_space:$M$-matrix:properties:positive_image} of \cref{finite_state_space:$M$-matrix:properties}.
				By \cref{finite_state_space:z_matrix:criterion}, $x_h$ is thus well-defined.

				By the mean value theorem, the error $e_h = u - x_h$ satisfies \[
						A_h e_h = \tau_h + u^{p} - x_h^{p} = \tau_h + \diag(p \xi_h^{p-1}) e_h
				\] for some $\xi_h$ between $u$ and $x_h$, which is in particular positive as $u$ and $x_h$ are. 
                Setting $B_h := A_h - \diag(p \xi_h^{p-1})$ and rearranging, we obtain \[
						B_h e_h = \tau_h
				.\]
                We aim to show that for all $h > 0$ small enough, there exists a constant $c > 0$ independent of $h$ such that \[
                        \min_{j=1,\ldots,N_h} (B_h u)_j \geq c
                .\]
                Then $B_h$ is an invertible $m$-matrix by condition \eqref{finite_state_space:$M$-matrix:properties:positive_image} of \cref{finite_state_space:$M$-matrix:properties}. 
                Moreover, \[
						\|B_h^{-1}\|_{\infty} \le \frac{\|u\|_{\infty}}{\min_{j=1,\ldots,N_h} (B_h u)_j} \leq \frac{\|u\|_{\infty}}{c}
				\]
                by \cref{finite_state_space:$M$-matrix:properties}, and the result follows from \[
						\|u - x_h\|_\infty = \|e_h\|_\infty = \|B_h^{-1} \tau_h \|_\infty \le \frac{\|u\|_{\infty}}{c} \|\tau_h \|_\infty \to 0 \text{ as } h \to 0
				.\]
                
				Consider first the case $p \le 0$.
				Since $p \le 0$ and $u > 0$, for all $h$ small enough \[
                        B_h u \ge A_h u \geq \frac{1}{2} u^{p} \geq \frac{1}{2}\inf_{E_m}  u^p =: c > 0
                .\]
                
				Now, consider the case $p \in (0, 1)$.
				By \cref{finite_state_space:$M$-matrix:properties}, we have \[
						\|A_h^{-1}\|_\infty \le \frac{\|u\|_\infty}{\min_{j=1,\ldots,N_h} (A_h u)_j} \le \frac{\|u\|_\infty}{\frac{1}{2} \inf_{E_m} u^{p}}
				\] for $h$ small enough.
				This implies that $A_h^{-1} \one$ converges to the solution to $L w = 1$ as $h \to 0$.
				By the bounds in \cref{numerics:fixed_point_iteration}, this implies that there exist constants $C_x > c_x > 0$ such that $c_x \leq \Vert x_h \Vert_\infty \leq C_x$ for all $h$ small enough.

                Set $\tilde{B}_h := A_h - \diag(p x_h^{p-1})$. 
                Then $\tilde{B}_h x_h = (1-p) x_h^{p} \geq  (1- p) c_x^p > 0$ for all $h > 0$ small enough. 
                By the same argument as above, $\tilde{B}_h$ is an invertible $M$-matrix and satisfies \[
						\|\tilde{B}_h^{-1}\|_{\infty} \le \frac{\|x_h\|_\infty}{\min_{j=1,\ldots,N_h} (\tilde{B}_h x_h)_j}  \leq \frac{C_x}{ (1- p) c_x^p} =: \tilde C
				.\]
                Since $\diag(p (\xi_h^{p-1} - x_h^{p-1})) e_h \le 0$ by definition of $\xi_h$ and $e_h$, it follows that
                   \[
						\tilde B_h e_h = B_h e_h + \diag(p (\xi_h^{p-1} - x_h^{p-1})) e_h \leq \tau_h
				.\]
                Now using that $\tilde{B}_h^{-1} \ge 0$ by \cref{finite_state_space:$M$-matrix:properties} \eqref{finite_state_space:$M$-matrix:properties:inverse_nonnegative}, we obtain \[
						\|e_h^{+}\|_\infty \le \|\tilde{B}_h^{-1} \tau_h \|_\infty \le \tilde{C} \|\tau_h\|_\infty \to 0 \text{ as } h\to 0
				,\] where $e_h^{+}$ denotes the positive part of $e_h$. 
                Hence, using that $u$ is uniformly bounded from below, for all  $h$ small enough, we have $\xi_h \ge (\frac{1+p}{2p})^{\frac{1}{p-1}} u $.
		          This gives \[
						B_h u = u^{p} + \tau_h - p \diag(\xi_h^{p-1}) u \ge \frac{1-p}{2} u^{p} + \tau_h \ge \frac{1-p}{4} u^{p} \geq \frac{1-p}{4}\inf_{E_m}  u^p =: c > 0
                \]
				for all $h > 0$ small enough.
		\end{proof}

        \begin{remark}
                If $h < h_* := \frac{\inf_{E_m} b^2}{\|a\|_\infty}$, a central finite difference discretisation of the generator of $Y^{m}$ with grid size $h$ will also yield a Q-matrix $\tilde{Q}^h$.
                Denote $ \tilde{A}_h := \diag(\eta) - \frac{1}{R} \tilde{Q}^h$.
                As $h \to 0$, the solutions $\tilde{x}_h$ to $\tilde{A}_h \tilde{x}_h = \tilde{x}_h^p $ also satisfy $\| u - \tilde{x}_h \|_\infty \to 0$, but with faster convergence speed $\mathcal{O}(h^2)$.
                However, when $E_m$ is large, $h_*$ can be intractably small in models with high mean-reversion or models in which $b$ vanishes at $\partial E$.
        \end{remark}

		\section{Global Existence for Second Order Problems on Open Domains}
		\label{section:global_existence}

        In this section, we provide a general method for constructing global solutions to non-linear second-order ordinary differential equations.\footnote{An analogous result also holds for semi-linear elliptic partial differential equations; this is part of forthcoming work.}
		To this end, we extend the theory of sub- and supersolutions for second-order boundary value problems on bounded domains to problems without boundary values on general open domains.
        The main difficulty of this setting is the absence of boundary conditions.

		Consider the equation
		\begin{align}
				\label{global_existence:ode}
				u'' = f(y, u, u')
		.\end{align} 
        Let $E_1$ and $E_2$ be open intervals.
        As we sometimes want to work with solutions that take values in $\partial E_2$ (for example when $E_2 = (0, \infty)$, and we want solutions to be allowed to take the value $0$), we assume that solutions are $\tilde{E}_2$-valued, where $E_2 \subseteq \tilde{E}_2 \subseteq \bar{E}_2$.
        We assume that the right-handside $f: E_1 \times \tilde{E}_2 \times \R \to \R$ is continuous, and that $f$ is locally Lipschitz-continuous on $E_1 \times E_2 \times \R$.
        To simplify the notation, we define for each $C \in (0, \infty]$ the truncated function $f_C: E_1 \times \tilde{E_2} \times \R$ by
        \begin{equation}
                f_C(y, u, v):= f(y, u, -C \vee v \wedge C).
        \end{equation}
        Note that $f = f_\infty$.
        We assume that the family $(f_C)_{C \in (0, \infty]}$ satisfies the following Nagumo condition:
		For any $K \subset E_1 \times \tilde{E}_2$ compact, there exists some function $\varphi: [0, \infty) \to \R$ (depending on $K)$ satisfying the growth-condition
        \[
				\int_{r}^{\infty} \frac{y}{\varphi(y)} \d y = \infty \quad \text{ for all } \quad r > 0
		\]
        such that for all $C \in (0, \infty]$ 
		\[
				|f_C(y, u, v)| \le \varphi(|v|) \quad \text{ for all } (y, u, v) \in K \times \R
		.\]
        
		Note that this condition is e.g. satisfied for $f(y, u, v) = f_0(y, u) + f_1(y, u) v + f_2(y, u) v^2$, with $f_i: E_1 \times \tilde{E}_2 \to \R$ locally Lipschitz-continuous for $i=0,1,2$.

		We will also consider the auxiliary truncated equation 
		\begin{align}
				\label{global_existence:ode_trunc}
				u''  = f_C(y, u, u'), \quad C \in (0,\infty]
		.\end{align}

		Let $[e^{-}_m, e^{+}_m] \uparrow E_1$ be an approximating sequence for $E_1$, where $(e^{-}_m)_{m \in \N}$ is nonincreasing and $(e^{-}_m)_{m \in \N}$ is nondecreasing and $e^-_1 < e^+_1$.
		For functions $\alpha, \beta: E_1 \to \tilde{E}_2$ with $\alpha \le \beta$, we denote the regions enclosed by the graphs of $\alpha, \beta$ on $[e^{-}_m, e^{+}_m]$ and $\tilde{E_2}$ by
		\begin{align*}
				G_m &:= \{(y, u, v) \in [e^{-}_m, e^{+}_m] \times \tilde{E}_2 \times \R: \alpha(y) \le u \le \beta(y) \}, \\
				G &:= \bigcup_{m\in \N} G_m = \{(y, u, v) \in E_1 \times \tilde{E}_2 \times \R: \alpha(y) \le u \le \beta(y) \} 
		.\end{align*}
		The dependence on $\alpha$ and $\beta$ is omitted in the notation as it is clear from the context.

		We start by providing an a priori estimate on the derivative of the solution to \cref{global_existence:ode}.
		\begin{lemma}
				\label{global_existence:nagumo}
				Fix $m \in \N$, $\alpha, \beta\in C([e^{-}_m, e^{+}_m], \tilde{E}_2)$ with $\alpha \le \beta$.
				Then there exists some constant $R_m$ (independent of $C$) such that every solution $u$ of the $C$-truncated equation \eqref{global_existence:ode_trunc} on $[e^{-}_m, e^{+}_m]$ with $\alpha \le u \le \beta$ satisfies \[
						\|u'\|_{L^{\infty}([e^{-}_m, e^{+}_m])} \le R_m
				.\] 
		\end{lemma}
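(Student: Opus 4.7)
The plan is to give a classical Nagumo-style a priori derivative bound, taking care that the bound depends only on $m$ (through $\alpha$, $\beta$ on $[e^{-}_m,e^{+}_m]$ and the Nagumo function $\varphi$) and in particular not on the truncation parameter $C$.

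First, I would introduce the compact set
\[
K_m := \{(y,u) \in [e^{-}_m, e^{+}_m] \times \tilde E_2 : \alpha(y) \le u \le \beta(y)\},
\]
which is compact by continuity of $\alpha,\beta$, and extract from the hypothesis a single function $\varphi:[0,\infty) \to (0,\infty)$ with $\int^{\infty} s/\varphi(s)\d s = \infty$ such that $|f_C(y,u,v)| \le \varphi(|v|)$ on $K_m \times \R$ \emph{for every} $C \in (0,\infty]$. This is the step that delivers independence of $R_m$ from $C$. Next I set $D := \sup_{[e^{-}_m, e^{+}_m]} \beta - \inf_{[e^{-}_m, e^{+}_m]} \alpha$ (finite by compactness and continuity), and use the mean value theorem to find $y_0 \in [e^{-}_m,e^{+}_m]$ with $|u'(y_0)| \le M_0 := D/(e^{+}_m - e^{-}_m)$ for any candidate solution $u$ with $\alpha \le u \le \beta$. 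Finally, by the Nagumo divergence condition I choose $R_m > M_0$ such that
\[
\int_{M_0}^{R_m} \frac{s}{\varphi(s)} \d s \;>\; D.
\]
The bound $R_m$ depends only on $K_m$ and $\varphi$, hence on $m$ alone.

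Now I argue by contradiction: suppose $|u'(y^\ast)| > R_m$ for some $y^\ast$. Since $|u'(y_0)| \le M_0 < R_m$, continuity of $u'$ produces an interval $[y_1,y_2] \subset [e^-_m, e^+_m]$ (possibly after reflection if $y^\ast < y_0$) on which $|u'|$ travels from $M_0$ to $R_m$, with $|u'| \ge M_0$ throughout. By splitting into the sign cases $u' \ge 0$ and $u' \le 0$ and restricting further if necessary, I reduce to the case $u' \ge 0$ on $[y_1,y_2]$ with $u'(y_1) = M_0$, $u'(y_2) = R_m$. Using $u'' = f_C(y,u,u')$ and the uniform Nagumo bound $|u''| \le \varphi(u')$, together with the chain-rule identity $\frac{d}{dy}\Phi(u'(y)) = u''(y) u'(y)/\varphi(u'(y))$ for $\Phi(t) := \int_0^t s/\varphi(s) \d s$, I get
\[
\int_{M_0}^{R_m} \frac{s}{\varphi(s)} \d s \;=\; \Phi(u'(y_2)) - \Phi(u'(y_1)) \;\le\; \int_{y_1}^{y_2} \frac{u'(y)\,|u''(y)|}{\varphi(u'(y))} \d y \;\le\; \int_{y_1}^{y_2} u'(y) \d y \;=\; u(y_2) - u(y_1) \;\le\; D,
\]
which contradicts the choice of $R_m$. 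Hence $\|u'\|_{L^\infty([e^-_m,e^+_m])} \le R_m$, uniformly in $C$.

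The main obstacle (and the only subtle point) is ensuring the bound is genuinely uniform in $C$: this is entirely a matter of insisting that the Nagumo majorant $\varphi$ is the \emph{same} for the whole family $(f_C)_{C \in (0,\infty]}$, which is precisely the form of the Nagumo hypothesis stated before the lemma. A minor technicality is the selection of the monotone sub-interval $[y_1,y_2]$ when $u'$ is not monotone on the natural interval; this is handled by taking $y_2$ to be the first point after $y_0$ where $|u'|$ reaches $R_m$ and $y_1 \in [y_0,y_2]$ to be the last point before $y_2$ at which $|u'| = M_0$, and then choosing the appropriate sign.
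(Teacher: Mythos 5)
Your proof is correct and takes essentially the same approach as the paper: choose a single Nagumo majorant $\varphi$ valid for all $C$, use the mean value theorem to find a point of small slope, and pick $R_m$ via the divergence of $\int s/\varphi(s)\,\mathrm{d}s$. The only difference is that the paper invokes \cite[Prop.~4.1]{DeCoster2001} for the resulting a priori bound, whereas you unpack that classical Nagumo argument (the $\Phi(u')$ change-of-variables estimate on a monotone subinterval) explicitly; the content is identical.
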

		\begin{proof}
				By the Nagumo condition, there exists some function $\varphi_m: [0, \infty) \to \R$ such that $|f_C(y, u, v)| \le \varphi_m(|v|)$ for all $(y, u, v) \in G_m$ and all $C \in (0, \infty]$.
				Denote \[ 
                        r_m = \max \left\{ \frac{\beta(e^{+}_m) - \alpha(e^{-}_m)}{e^{+}_m - e^{-}_m}, \frac{\alpha(e^{+}_m) - \beta(e^{-}_m)}{e^{+}_m - e^{-}_m}\right\}
                ,\] and choose $R_m > r_m$ large enough that
				\[
						\int_{r_m}^{R_m} \frac{s}{\varphi_m(s)} \d s > \max_{y\in [-m, m]} \beta(y) - \min_{y\in [-m, m]} \alpha(y)
				.\]
				This is possible since $\int_{r_m}^{\infty} \frac{s}{\varphi_m(s)} \d s = \infty $ and $\alpha, \beta $ are bounded over $[e^{-}_m,e^{+}_m]$ by continuity.
				By \cite[Prop.~4.1]{DeCoster2001}, it now follows that \[
						\|u'\|_{L^{\infty}([e^{-}_m,e^{+}_m])} \le R_m
				. \qedhere \] 
		\end{proof}

		In particular, this implies that any solution to the $C$-truncated equation \eqref{global_existence:ode_trunc} on $[e^{-}_m, e^{+}_m]$ is also a solution to the original equation \eqref{global_existence:ode} on $[e^{-}_m, e^{+}_m]$ provided that $C$ is large enough.
        
		\begin{corollary}
				\label{global_existence:trunc_is_sol}
				If $C\ge R_m$, every solution $u$ to the $C$-truncated equation \eqref{global_existence:ode_trunc} on $[e^{-}_m, e^{+}_m]$ with $\alpha \le u \le \beta$ is a solution to \cref{global_existence:ode}.
		\end{corollary}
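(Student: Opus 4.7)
The plan is to show that under the hypothesis $C \ge R_m$, the truncation in $f_C$ never activates along the trajectory of any solution $u$ that lies between $\alpha$ and $\beta$, so $f_C(y,u(y),u'(y))$ collapses to $f(y,u(y),u'(y))$ and the $C$-truncated equation coincides with \eqref{global_existence:ode} along $u$.

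More concretely, let $u$ be a solution of \eqref{global_existence:ode_trunc} on $[e^-_m, e^+_m]$ with $\alpha \le u \le \beta$. First I would invoke \cref{global_existence:nagumo}: the constant $R_m$ produced there is constructed purely from $\alpha$, $\beta$, $e^{\pm}_m$, and the Nagumo bound $\varphi_m$ governing the whole family $(f_C)_{C \in (0,\infty]}$, so it applies to solutions of \eqref{global_existence:ode_trunc} for any $C$. Hence $\|u'\|_{L^\infty([e^-_m,e^+_m])} \le R_m \le C$.

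Given this uniform derivative bound, for every $y \in [e^-_m,e^+_m]$ we have $-C \le u'(y) \le C$, so that $-C \vee u'(y) \wedge C = u'(y)$ and therefore $f_C(y,u(y),u'(y)) = f(y,u(y),u'(y))$. Substituting into \eqref{global_existence:ode_trunc} yields $u''(y) = f(y,u(y),u'(y))$ pointwise on $[e^-_m,e^+_m]$, i.e.\ $u$ solves \eqref{global_existence:ode}. There is really no obstacle here; the content of the statement is simply that the a priori estimate of \cref{global_existence:nagumo} is uniform in the truncation parameter $C$, which is why I stated the Nagumo lemma with an estimate that does not depend on $C$ in the first place.
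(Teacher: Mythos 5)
Your proof is correct and is exactly the argument the paper has in mind (the paper states the corollary immediately after \cref{global_existence:nagumo} without a separate proof, precisely because the $C$-uniform a priori bound makes the truncation inactive). No issues.
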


		The construction of a solution to the original equation \eqref{global_existence:ode} on $E_1$ now proceeds in three steps: 
		First, we solve a boundary value problem on $[e^{-}_m, e^{+}_m]$. 
		Next, we show that these solutions converge at a reference point as $m\to \infty$.
		Lastly, we show that the initial value problem started at that limit is a global solution.
		
		We define sub- and supersolutions broadly as in \cite[Def.~2.1]{DeCoster2001}, but for simplicity use a slightly less general notion than the one in \cite[Def.~2.1]{DeCoster2001}. Note that this definition still includes the notions used in \cref{section:diffusion}: any subsolution in the sense of \cref{section:diffusion} is a subsolution in the sense of \cref{global_existence:definition:sub_supersolution}.
        For $y \in E_1$ and a sufficiently regular function $\gamma: E_1 \to \tilde{E}_2$, we denote by $D^{-} \gamma(y)$ and $D^{+} \gamma(y)$ the left and right derivative of $\gamma$ at $y$, respectively.

        \begin{definition}
                \label{global_existence:definition:sub_supersolution}
				A function $\alpha \in C(E_1, \tilde {E}_2)$ is called a subsolution to  $u'' = f(y, u, u')$ if for all $y\in E_1$ one of the following two conditions is satisfied
				\begin{itemize}
						\item $\alpha$ is $C^2$ in some neighbourhood of $y$, and $\alpha''(y) \ge f(y, \alpha(y), \alpha'(y))$,
						\item $\alpha$ admits finite left-and right derivatives at $y$ and $D^{-}\alpha(y) < D^{+} \alpha(y)$.
				\end{itemize}
				Analogously, $\beta\in C(E_1, \tilde{E}_2)$ is called a supersolution to the equation $u'' = f(y, u, u')$ if for all $y\in E_1$ one of the following two conditions is satisfied
				\begin{itemize}
						\item $\beta$ is $C^2$ in some neighbourhood of $y$, and $\beta''(y) \le f(y, \beta(y), \beta'(y))$,
						\item $\beta$ admits finite left-and right derivatives at $y$ and $D^{-} \beta(y) > D^{+} \beta(y)$.
				\end{itemize}
		\end{definition}

		As we want to pass to the truncated equation, we also need a boundedness assumption on the derivative of the sub- and supersolution.\footnote{See \cref{global_existence:relaxation} for conditions under which this assumption is not needed.}

        \begin{definition}
				\label{global_existence:assumption}
				We denote \[
						\mathcal{X} = \{\gamma \in C(E_1, \tilde{E}_2): \|\gamma'\|_{L^{\infty}(D(\gamma) \cap K)} < \infty \text{ for all } K \subset E_1 \text{ compact}\} 
				,\] where $D(\gamma) = \{y \in E_1: \gamma \text{ is differentiable at } y\} $.
		\end{definition}

		Note that this condition is satisfied when $\gamma \in C^{1}$, or when $\gamma \in C^{1}$ apart from isolated kinks $y_0$ at which $\lim_{y \uparrow y_0} \gamma'(y), \lim_{y \downarrow y_0} \gamma'(y)$ exist.\footnote{Note that \cref{hjb:existence:abstract,hjb:existence:abstract:general_correlation} hold more generally for $\beta \in \mathcal{X}$, not just $\beta \in C^1$.}
        
		Now, we solve the boundary value problem on $[e^{-}_m, e^{+}_m]$.
    
		\begin{lemma}
				\label{global_existence:compact_existence}
				Let $\alpha, \beta \in \mathcal{X}$ with $\alpha\le \beta$ be sub- and supersolutions to \cref{global_existence:ode}, respectively.
				For any $m\in \N$, there exists a solution $u_m$ to \cref{global_existence:ode} on $[e^{-}_m, e^{+}_m]$ that satisfies $\alpha \le u_m \le \beta$.
		\end{lemma}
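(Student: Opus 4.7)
The plan is to reduce the problem to a well-behaved Dirichlet boundary value problem on the compact interval $[e^-_m, e^+_m]$ for which a classical existence argument via sub- and supersolutions applies, and then to invoke \cref{global_existence:trunc_is_sol} to conclude. First, I would fix $m$ and apply \cref{global_existence:nagumo} to obtain the Nagumo constant $R_m$; since $\alpha, \beta \in \mathcal{X}$, their one-sided derivatives are bounded on $[e^-_m, e^+_m]$, so I would enlarge $R_m$ (without loss of generality) to also dominate the left and right derivatives of $\alpha$ and $\beta$ on $[e^-_m, e^+_m]$. Choose boundary values $\xi^\pm$ with $\alpha(e^\pm_m) \le \xi^\pm \le \beta(e^\pm_m)$ (e.g.\ $\xi^\pm = \alpha(e^\pm_m)$), and consider the modified Dirichlet problem
\[
    u'' = f_{R_m}\bigl(y, T(y,u), u'\bigr) + u - T(y,u), \qquad u(e^-_m) = \xi^-, \; u(e^+_m) = \xi^+,
\]
where $T(y,u) := \max\{\alpha(y), \min\{\beta(y), u\}\}$ projects the state onto the strip $[\alpha(y), \beta(y)]$.

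Next, I would establish existence of a solution $u_m \in C^2([e^-_m, e^+_m])$ to this modified problem. Since $f_{R_m}$ is continuous and bounded on the compact set $\{(y,u,v) : y \in [e^-_m,e^+_m],\, \alpha(y) \le u \le \beta(y),\, |v| \le R_m\}$, and the penalty term $u - T(y,u)$ is Lipschitz in $u$, an a priori $L^\infty$-bound on any solution of the modified problem follows by a standard maximum principle applied to the linear part; combined with the derivative bound coming from the truncation at $R_m$ one gets a compact set in $C^1$ to which Schauder's fixed point theorem can be applied to the solution operator of the associated linear Dirichlet problem. This produces a solution $u_m$ of the modified Dirichlet problem.

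Finally, I would show that $\alpha \le u_m \le \beta$ on $[e^-_m, e^+_m]$, so that the projection $T$ and the penalty are both inactive and $u_m$ actually solves the truncated equation $u'' = f_{R_m}(y,u,u')$; \cref{global_existence:trunc_is_sol} then upgrades this to a solution of \eqref{global_existence:ode}. Suppose for contradiction that $u_m - \alpha$ attains a strict negative minimum at some interior point $y_0$ (the boundary values are handled by the choice of $\xi^\pm$). If $\alpha$ is $C^2$ near $y_0$, then $u_m'(y_0) = \alpha'(y_0)$ and $u_m''(y_0) \ge \alpha''(y_0) \ge f(y_0, \alpha(y_0), \alpha'(y_0))$; since $|\alpha'(y_0)| \le R_m$, the truncation is inactive there, and the modified equation together with $T(y_0, u_m(y_0)) = \alpha(y_0)$ yields
\[
    u_m''(y_0) = f\bigl(y_0, \alpha(y_0), \alpha'(y_0)\bigr) + \bigl(u_m(y_0) - \alpha(y_0)\bigr) < f\bigl(y_0, \alpha(y_0), \alpha'(y_0)\bigr),
\]
a contradiction. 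The main obstacle is the case in which $y_0$ is a kink of $\alpha$: here $\alpha$ is not twice differentiable, but the tangency condition $u_m(y_0) = \alpha(y_0)$ together with $u_m \ge \alpha$ in a neighbourhood of $y_0$ would force $D^-\alpha(y_0) \le u_m'(y_0) \le D^+\alpha(y_0)$, contradicting the strict inequality $D^-\alpha(y_0) < D^+\alpha(y_0)$ in the wrong direction; this rules out tangency at kinks. The upper bound $u_m \le \beta$ follows by a symmetric argument using the supersolution property of $\beta$ and its kinks.
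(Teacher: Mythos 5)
Your proof follows the same reduction as the paper up to the truncated equation: invoke \cref{global_existence:nagumo} for the Nagumo constant, enlarge the truncation level to dominate the one-sided derivatives of $\alpha$ and $\beta$ on $[e^-_m, e^+_m]$, and note that it then suffices to solve the truncated problem and apply \cref{global_existence:trunc_is_sol}. Where you diverge is in the last step: the paper simply cites \citet[Thm.~2.6]{DeCoster2001}, which asserts the existence of a (in fact \emph{maximal}) solution of the truncated problem between the ordered sub- and supersolution on a compact interval, whereas you re-derive that classical theorem from scratch via the standard modified-problem technique (project onto $[\alpha,\beta]$, add a coercive penalty, get an a priori $C^1$ bound, apply Schauder, then show by a comparison argument that the modifications are inactive). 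The trade-off: your argument is self-contained and makes visible exactly where the Nagumo bound and the kink sign-condition are used, but it is longer and does not produce the \emph{maximal} solution, which the paper later exploits in \cref{global_existence:existence:family} to identify the local solutions coming from different subsolutions.

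One caveat: your discussion of the kink case is garbled. You postulate a strict negative interior minimum $y_0$ of $u_m - \alpha$, so $u_m(y_0) < \alpha(y_0)$ and $u_m - \alpha \geq u_m(y_0) - \alpha(y_0)$ locally; this is not ``tangency $u_m(y_0)=\alpha(y_0)$ with $u_m \geq \alpha$ nearby'' as you write. More importantly the chain of inequalities you quote is reversed. At a local minimum of $g := u_m - \alpha$ one has $D^- g(y_0) \leq 0 \leq D^+ g(y_0)$, i.e. $u_m'(y_0) - D^-\alpha(y_0) \leq 0 \leq u_m'(y_0) - D^+\alpha(y_0)$, hence $D^+\alpha(y_0) \leq u_m'(y_0) \leq D^-\alpha(y_0)$ and so $D^+\alpha(y_0) \leq D^-\alpha(y_0)$, which is the desired contradiction with the subsolution kink condition $D^-\alpha(y_0) < D^+\alpha(y_0)$ from \cref{global_existence:definition:sub_supersolution}. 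Your written inequality $D^-\alpha(y_0) \leq u_m'(y_0) \leq D^+\alpha(y_0)$ is what holds at a local \emph{maximum} and does not yield a contradiction. The underlying idea is right and the slip is easily fixed, but as written the kink case is not established.
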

		\begin{proof}
				By \cref{global_existence:trunc_is_sol}, it is sufficient to find a solution of the $C$-truncated equation \eqref{global_existence:ode_trunc} for some $C \ge R_m$.
				Fix \[
                        C := \max \{R_m, \|\alpha'\|_{L^{\infty}([e^{-}_m, e^{+}_m] \cap D(\alpha)}, \|\beta'\|_{L^{\infty}([e^{-}_m,e^{+}_m] \cap D(\beta))} \} < \infty
                .\]
				$\alpha$ and $\beta$ are then sub- and supersolutions to the $C$-truncated equation.

				Consider now the truncated problem $u''=f_C(y,u,u')$.
				Notice that $f_C$ is continuous and bounded over $G_m$.
				By \cite[Thm.~2.6]{DeCoster2001}, there exists a maximal solution $u_m$ over $[e^{-}_m, e^{+}_m]$ that satisfies $\alpha \le u_m \le \beta$.
		\end{proof}

		Next, we show that the solutions on bounded intervals converge at a reference point. 
        Note that here and in the subsequent result, we do not require $\alpha$ or $\beta$ to be a sub- or supersolutions, respectively.
		\begin{lemma}
				\label{global_existence:accumulation}
				Fix $y_0 \in [e^{-}_1, e^{+}_1]$ and $\alpha, \beta \in C(E_1, \tilde{E}_2)$ with $\alpha \le \beta$.
				Let $(u_m)_{m\in \N}$ be a sequence of solutions to \cref{global_existence:ode} over the interval $[e^{-}_m, e^{+}_m]$ with $\alpha \le u_m \le \beta$.
				Then the sequence $(u_m(y_0), u_m'(y_0)_{m\in \N})$ has an accumulation point.
		\end{lemma}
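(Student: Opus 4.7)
The plan is to reduce the existence of an accumulation point to a boundedness argument in $\R^2$, and then invoke Bolzano-Weierstrass. The key observation is that although the $u_m$ are defined on growing intervals $[e^-_m, e^+_m]$, we only need uniform control at the single reference point $y_0 \in [e^-_1, e^+_1] \subseteq [e^-_m, e^+_m]$, so in particular each $u_m$ restricts to a solution of \eqref{global_existence:ode} on the fixed bounded interval $[e^-_1, e^+_1]$.

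First I would bound $(u_m(y_0))_{m \in \N}$. Since $\alpha, \beta \in C(E_1, \tilde{E}_2)$ with $\alpha \le \beta$, both are bounded on the compact interval $[e^-_1, e^+_1]$, and the sandwich $\alpha(y_0) \le u_m(y_0) \le \beta(y_0)$ immediately gives boundedness of $(u_m(y_0))_{m \in \N}$ in $\tilde{E}_2 \subseteq \R$.

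Next I would bound $(u_m'(y_0))_{m \in \N}$ using the Nagumo estimate from \cref{global_existence:nagumo}. Each $u_m$ is a solution of \eqref{global_existence:ode} (i.e.\ the $C$-truncated equation with $C = \infty$) on $[e^-_1, e^+_1]$ satisfying $\alpha \le u_m \le \beta$ on that interval. Applying \cref{global_existence:nagumo} with $m = 1$ yields a constant $R_1$, independent of the particular solution, such that $\|u_m'\|_{L^\infty([e^-_1, e^+_1])} \le R_1$ for every $m \in \N$. In particular $|u_m'(y_0)| \le R_1$ for all $m$.

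Combining the two bounds, the sequence $(u_m(y_0), u_m'(y_0))_{m \in \N}$ lies in the compact set $[\min_{[e^-_1,e^+_1]} \alpha,\, \max_{[e^-_1,e^+_1]} \beta] \times [-R_1, R_1] \subset \R^2$, and therefore admits a convergent subsequence by Bolzano-Weierstrass. I do not anticipate any real obstacle: the only subtle point is ensuring the Nagumo bound applies uniformly in $m$, which is immediate because $R_1$ in \cref{global_existence:nagumo} depends only on $\alpha, \beta$ and $f$ restricted to $G_1$, not on the individual solution.
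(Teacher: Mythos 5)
Your proof is correct and follows exactly the same approach as the paper: bound $u_m(y_0)$ via the sandwich $\alpha(y_0) \le u_m(y_0) \le \beta(y_0)$, bound $u_m'(y_0)$ by restricting each $u_m$ to $[e^-_1, e^+_1]$ and applying the Nagumo estimate from \cref{global_existence:nagumo} with $m=1$, then invoke Bolzano--Weierstrass. The observation that $R_1$ is independent of $m$ because it depends only on $\alpha$, $\beta$, and $f$ on $G_1$ is precisely the key point the paper's proof relies on.
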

		\begin{proof}
				We will show that both components are bounded.
				The existence of an accumulation point then follows from Bolzano-Weierstraß.

				It is clear that $(u_m(y_0))_{m\in \N}$ is bounded as $\alpha(y_0) \le u_m(y_0) \le \beta(y_0)$ by assumption.

				As for the second component, notice that each $u_m$ is in particular also a solution to \cref{global_existence:ode} over $[e^{-}_1, e^{+}_1]$.
				By \cref{global_existence:nagumo}, there exists some $R_1>0$ (independent of $m$) such that \[
						|u_m'(y_0)| \le \|u_m'\|_{L^{\infty}([e^{-}_1, e^{+}_1])} \le R_1
				\] for all $m\in \N$.
		\end{proof}

		Finally, we conclude by showing that the solution of the initial value problem started at the limiting value we obtained is a global solution.
        Note that unlike in \cref{global_existence:compact_existence,global_existence:accumulation}, we require $\alpha, \beta $ to be $E_2$-valued here.
		\begin{lemma}
				\label{global_existence:convergence}
				Fix $\alpha, \beta \in C(E_1, E_2)$ with $\alpha \le \beta$ and $y_0 \in [e^{-}_1, e^{+}_1]$.
				Let $(u_m)_{m\in \N}$ be a sequence of solutions to \cref{global_existence:ode} over the interval $[e^{-}_m, e^{+}_m]$ with $\alpha \le u_m \le \beta$ s.t.\ $(u_m(y_0), u_m'(y_0)) \to (u(y_0), u'(y_0)) $ as $m \to \infty$ for some $(u(y_0), u'(y_0)) \in \R^2$.
                Then the solution $u$ to the initial value problem \eqref{global_existence:ode} started at $((u(y_0), u'(y_0))$ is a global solution to \cref{global_existence:ode} and satisfies $\alpha \le u \le \beta$. 
				Furthermore, $(u_m)_{m\in \N}$ converges to $u$ uniformly on compact sets.
		\end{lemma}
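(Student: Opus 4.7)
The plan is to extract, on each compact subinterval of $E_1$, a uniformly convergent subsequence of $(u_m)$ via Arzelà--Ascoli, show that the limit solves the ODE with the prescribed initial data at $y_0$, and then identify it with $u$ by Picard--Lindelöf uniqueness. Since every subsequential limit must coincide with $u$, the whole sequence converges, and the IVP solution $u$ extends to a global solution.

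Fix $m_0 \in \N$. The set $K := \{(y,z) : y \in [e_{m_0}^-, e_{m_0}^+],\ \alpha(y) \le z \le \beta(y)\}$ is a compact subset of $E_1 \times E_2$ since $\alpha, \beta \in C(E_1, E_2)$ and $[e_{m_0}^-, e_{m_0}^+]$ is compact. For $m \ge m_0$ we have $(y, u_m(y)) \in K$ on $[e_{m_0}^-, e_{m_0}^+]$, and by \cref{global_existence:nagumo} $\|u_m'\|_{L^\infty([e_{m_0}^-, e_{m_0}^+])} \le R_{m_0}$ uniformly in $m$. Continuity of $f$ on the compact set $K \times [-R_{m_0}, R_{m_0}]$ then yields a uniform bound on $u_m'' = f(\cdot, u_m, u_m')$, so $(u_m)_{m \ge m_0}$ is bounded in $C^1$ with equicontinuous derivative. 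By Arzelà--Ascoli, a subsequence (not relabelled) satisfies $u_m \to \tilde u$ and $u_m' \to \tilde u'$ uniformly on $[e_{m_0}^-, e_{m_0}^+]$. Passing to the limit in the ODE, using continuity of $f$, gives $\tilde u'' = f(\cdot, \tilde u, \tilde u')$ with $\alpha \le \tilde u \le \beta$, while the hypothesis at $y_0$ yields $(\tilde u(y_0), \tilde u'(y_0)) = (u(y_0), u'(y_0))$.

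Because $f$ is locally Lipschitz on $E_1 \times E_2 \times \R$ and $\tilde u$ stays in the compact set $K \subset E_1 \times E_2$ throughout $[e_{m_0}^-, e_{m_0}^+]$ (so $f$ is uniformly Lipschitz along the curve), Picard--Lindelöf uniqueness together with a standard continuation argument from $y_0$ in both directions shows that the IVP solution $u$ exists on all of $[e_{m_0}^-, e_{m_0}^+]$ and agrees with $\tilde u$ there. Since any $C^1$-accumulation point of $(u_m)$ on this interval must solve the same IVP and therefore equal $u$, the full sequence converges to $u$ uniformly on $[e_{m_0}^-, e_{m_0}^+]$. Letting $m_0 \to \infty$ and noting that every compact subset of $E_1$ lies in some $[e_{m_0}^-, e_{m_0}^+]$ yields the three claims of the lemma. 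The main subtlety is that uniqueness must propagate from $y_0$ all the way to the endpoints of $[e_{m_0}^-, e_{m_0}^+]$; this requires $\tilde u$ to remain in the interior of $E_2$ on the entire interval, which is guaranteed precisely by the hypothesis that $\alpha, \beta$ take values in $E_2$ rather than merely in $\tilde E_2$.
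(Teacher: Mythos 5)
Your argument is correct but takes a genuinely different route from the paper. You extract a $C^1$-convergent subsequence on each $[e^-_{m_0}, e^+_{m_0}]$ via Arzel\`a--Ascoli (using the Nagumo bound on $u_m'$ to get a uniform bound on $u_m''$ and hence equicontinuity of the derivatives), pass to the limit in the ODE, and then invoke Picard--Lindel\"of uniqueness plus a continuation argument to identify the limit with $u$; convergence of the full sequence follows because every subsequential limit must coincide with $u$. The paper instead works directly on the maximal interval of existence $I$ of $u$: it builds a relatively compact box $\tilde G = J_1^{\mathrm{o}} \times J_2 \times (-2C, 2C)$ containing the graphs of $u$ and all $u_k$ (for $k$ large), uses the local Lipschitz constant $L$ of $f$ on $\tilde G$ to get a Gr\"onwall-type continuity-in-initial-data estimate $|u(y_1) - u_k(y_1)| \le |(u_k, u_k')(y_0) - (u, u')(y_0)|\,e^{L|y_1 - y_0|}$, and deduces simultaneously that $\alpha \le u \le \beta$ on all of $I$ (hence $I = E_1$, since $u$ cannot reach $\partial E_2$) and that $u_m \to u$ uniformly on compacts. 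The paper's route yields an explicit quantitative convergence rate (useful for the remark on numerical implementation) and avoids subsequence extraction; your route is the more standard compactness argument and is shorter conceptually, at the cost of losing the explicit bound. One small presentational gap in your write-up: when you argue that $f$ is ``uniformly Lipschitz along the curve'', the relevant compact set must live in $E_1 \times E_2 \times \R$, not just in $E_1 \times E_2$; you should cite the Nagumo bound $|\tilde u'| \le R_{m_0}$ again at this point so that the Lipschitz constant is taken over $K \times [-R_{m_0}, R_{m_0}]$. Your final observation about why $\alpha, \beta$ must be $E_2$-valued (rather than merely $\tilde E_2$-valued) is exactly the crux, and matches the paper's reasoning that $u$ is trapped away from $\partial E_2$ and so cannot blow up.
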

		\begin{proof}
				Let $I \subseteq E_1$ be the maximal domain of existence of $u$.
				Fix a closed interval $J_1 \subset I$ with $y_0 \in J_1^{\mathrm{o}}$, and fix $m\in \N$ such that $J_1 \subset [e^{-}_m, e^{+}_m]$.
				Let $R_m$ be the corresponding constant from \cref{global_existence:nagumo}.
                
				Let $J_2$ be a bounded open interval with $[\min_{J_1} (u \wedge \alpha), \max_{J_1} (u \vee \beta)] \subset J_2 \subset E_2$, and denote $C = R_m \vee \max_{J_1} |u'|$.
				Denote \[
						\tilde{G} = J_1^{\mathrm{o}} \times J_2 \times (-2C, 2C)
				.\] 
				Notice that $u$ is a solution to \cref{global_existence:ode} over the domain $\tilde{G}$.
				For any $k \ge m$, $u_{k}$ satisfies $\|u'\|_{L^{\infty}([e^{-}_m, e^{+}_m])} \le R_m \le C$ by \cref{global_existence:nagumo}, so $u_{k}$ is also a solution to \cref{global_existence:ode} over $\tilde{G}$.

				Since $\tilde{G}$ is relatively compact, $f$ is Lipschitz-continuous over $\tilde{G}$ with Lipschitz constant $L$.
				Fix $y_1 \in J_1$ and let $\varepsilon > 0$ be arbitrary.
				Fix $K$ such that \[
						\max \{|u_{k}(y_0) - u(y_0)|, |u_{k}'(y_0) - u'(y_0)|\} < \varepsilon e^{-L|y_1 - y_0|}
				\] for all $k \ge K$.
				By continuity in the initial data (see e.g.\ \cite[Thm.~1.2.1]{Coddington2012}), we get \begin{align}
						\label{global_existence:convergence:bound}
						\tag{$*$}
						|u(y_1) - u_{k}(y_1)| \le \left(\varepsilon e^{-L|y_1 - y_0|}\right) e^{L|y_1 - y_0|} = \varepsilon
				\end{align} for all $k \ge K$.
				In particular, this implies that $\alpha(y_1) - \varepsilon \le u(y_1) \le \beta(y_1) + \varepsilon$.
				Since $\varepsilon$, $y_1$, and $J_1$ were arbitrary, we have $\alpha \le u \le \beta$ over the entire interval of existence $I$.
				As $u$ can thus never reach $\partial E_2$ in finite time, $u$ is a global solution.
				The uniform convergence on compact sets follows from \eqref{global_existence:convergence:bound}.
		\end{proof}

        \begin{remark}
				Due to the uniform convergence on compact sets, \cref{global_existence:convergence} is amenable to numerical implementation.
                If $u$ is the unique global solution of \cref{global_existence:ode} that lies between $\alpha$ and $\beta$, there is no need to pass to a subsequence in \cref{global_existence:accumulation}.
		\end{remark}

        \Cref{global_existence:convergence} extends straightforwardly to the case where $\alpha$ and $\beta$ are only semi-continuous.

        \begin{corollary}
				\label{global_existence:convergence:semicontinuous}
				Let $\alpha, \beta: E_1 \to E_2$ be lower and upper semi-continuous, respectively.
                Assume that $\alpha \le \beta$, and fix $y_0 \in [e^{-}_1, e^{+}_1]$.
				Let $(u_m)_{m\in \N}$ be a sequence of solutions to \cref{global_existence:ode} over the interval $[e^{-}_m, e^{+}_m]$ with $\alpha \le u_m \le \beta$ s.t.\ $(u_m(y_0), u_m'(y_0)) \to (u(y_0), u'(y_0)) $ as $m \to \infty$ for some $(u(y_0), u'(y_0)) \in \R^2$.
                Then the solution $u$ to the initial value problem \eqref{global_existence:ode} started at $((u(y_0), u'(y_0))$ is a global solution to \cref{global_existence:ode} and satisfies $\alpha \le u \le \beta$. 
				Furthermore, $(u_m)_{m\in \N}$ converges to $u$ uniformly on compact sets.
		\end{corollary}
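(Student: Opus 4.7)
The plan is to show that the proof of \cref{global_existence:convergence} goes through essentially unchanged, with the role of continuity of $\alpha$ and $\beta$ replaced at the relevant places by the corresponding semi-continuity properties. Concretely, I would fix a closed interval $J_1 \subset I$ with $y_0 \in J_1^{\mathrm{o}}$, and an index $m$ large enough that $J_1 \subset [e^{-}_m,e^{+}_m]$. The first step where continuity enters the original proof is in the choice of the bounded open interval $J_2 \subset E_2$ containing the range $[\min_{J_1}(u \wedge \alpha),\max_{J_1}(u \vee \beta)]$. In our setting, $u$ is continuous, so $u \wedge \alpha$ is lower semi-continuous and $u \vee \beta$ is upper semi-continuous on the compact set $J_1$; hence each extremum is attained at some point of $J_1$. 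Since at any such attaining point the values of $\alpha$, $\beta$, and $u$ lie in $E_2$ by hypothesis, both extrema lie in $E_2$, and a bounded open interval $J_2 \subset E_2$ containing them can be chosen exactly as in the proof of \cref{global_existence:convergence}.

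Next, I would set $\tilde G := J_1^{\mathrm{o}} \times J_2 \times (-2C,2C)$ with $C := R_m \vee \max_{J_1}|u'|$ (finite since $u$ is a $C^1$ solution on $J_1 \subset I$). As before, $f$ is Lipschitz on the relatively compact set $\tilde G$, and \cref{global_existence:nagumo} ensures $\|u_k'\|_{L^{\infty}([e^{-}_m,e^{+}_m])} \le R_m \le C$ for all $k \ge m$, so both $u$ and the tail of $(u_k)$ solve \cref{global_existence:ode} inside $\tilde G$. Continuous dependence on initial data (\cite[Thm.~1.2.1]{Coddington2012}) then yields, for each $y_1 \in J_1$ and $\varepsilon > 0$, the bound $|u(y_1)-u_k(y_1)| \le \varepsilon$ for all $k$ large enough, and in particular uniform convergence $u_k \to u$ on $J_1$.

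Finally, taking pointwise limits in the inequalities $\alpha(y) \le u_k(y) \le \beta(y)$, which hold for every $y \in [e^{-}_k,e^{+}_k]$ by hypothesis, yields $\alpha(y) \le u(y) \le \beta(y)$ for every $y \in J_1$. Since $J_1$ was an arbitrary closed subinterval of the maximal interval of existence $I$, we obtain $\alpha \le u \le \beta$ on $I$, and hence $u$ cannot reach $\partial E_2$ in finite time; by the maximal extension argument from \cref{global_existence:convergence}, $u$ is therefore a global solution. The uniform convergence on compact sets is inherited from the bound established in the Lipschitz step.

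The only genuinely new point to check is thus the well-definedness of $J_2$: this is the single place where continuity was used in the original proof, and it is precisely the step where one has to verify that the infimum and supremum of $u \wedge \alpha$ and $u \vee \beta$ over the compact set $J_1$ are attained in $E_2$. All remaining arguments are purely local and depend only on $u$ and the ODE, not on the regularity of $\alpha$ and $\beta$, so no further modification is required.
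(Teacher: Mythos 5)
Your proof is correct, but it takes a genuinely different route from the paper. The paper's proof invokes Baire's theorem for semi-continuous functions to produce sequences $(\alpha_n)_{n\in\N},(\beta_n)_{n\in\N}\subset C(E_1,E_2)$ with $\alpha_n\uparrow\alpha$ and $\beta_n\downarrow\beta$, applies \cref{global_existence:convergence} to each pair $(\alpha_n,\beta_n)$ to obtain $\alpha_n\le u\le\beta_n$, and then passes to the limit in $n$. Your approach instead re-examines the proof of \cref{global_existence:convergence} directly and observes that continuity of $\alpha,\beta$ is used only to guarantee that $\min_{J_1}(u\wedge\alpha)$ and $\max_{J_1}(u\vee\beta)$ are attained and lie in $E_2$, which you correctly recover from lower/upper semi-continuity on the compact set $J_1$; all remaining steps (Nagumo estimate, Lipschitz continuity of $f$ on $\tilde G$, continuous dependence on initial data, passage to the pointwise limit in the sandwich inequality, and the maximal-extension argument) depend only on $u$, $u'$, and the ODE, not on any regularity of $\alpha,\beta$ beyond the attainment of extrema. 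Your route is somewhat more elementary — it avoids Baire's theorem — and it sidesteps a small subtlety in the paper's proof: the Baire approximants $\alpha_n,\beta_n$ need not be $E_2$-valued (e.g.\ $\alpha_n$ may dip below $\inf E_2$), so strictly speaking one must either truncate them appropriately or note that the only conclusion of the lemma that is needed is the sandwich inequality, which the proof of the lemma establishes without ever using $E_2$-valuedness of $\alpha_n$ and $\beta_n$.
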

        \begin{proof}
                By Baire's theorem for semi-continuous functions, there exist sequences of functions $(\alpha_n)_{n \in \N}, (\beta_n)_{n \in \N} \subset C(E_1, E_2)$ s.t.\ $\alpha_n \uparrow \alpha$ and $\beta_n \downarrow \beta$.
                Note that in particular we have $\alpha_n \le u_m \le \beta_n$ for all $n, m \in \N$.
                By \cref{global_existence:convergence}, $u$ is a global solution to \cref{global_existence:ode} with $\alpha_n \le u \le \beta_n$ for all $n \in \N$.
                Since $\alpha_n \uparrow \alpha$ and $\beta_n \downarrow \beta$ as $n \to \infty$, this means that $u$ is a global solution with $\alpha \le u \le \beta$.
        \end{proof}

		\begin{theorem}
				\label{global_existence:existence}
				Let $\alpha, \beta \in C(E_1, E_2) $ with $\alpha, \beta \in \mathcal{X}$ and $\alpha \le \beta$ be sub- and supersolutions to \cref{global_existence:ode}, respectively. 
				Then there exists a globally defined solution $u$ to \cref{global_existence:ode} that satisfies $\alpha \le u \le \beta$.
		\end{theorem}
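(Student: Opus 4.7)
The proof will essentially be a direct assembly of the three preceding lemmas (\cref{global_existence:compact_existence,global_existence:accumulation,global_existence:convergence}), which together form a diagonal-type construction.

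First I would fix a reference point $y_0 \in [e^{-}_1, e^{+}_1]$ (for instance $y_0 = (e^{-}_1 + e^{+}_1)/2$). By \cref{global_existence:compact_existence}, for every $m \in \N$ there exists a solution $u_m$ of \cref{global_existence:ode} on the compact interval $[e^{-}_m, e^{+}_m]$ satisfying $\alpha \le u_m \le \beta$ on that interval. This step uses the sub- and supersolution hypothesis together with $\alpha, \beta \in \mathcal{X}$ to pass to the truncated equation and to apply the classical existence result on bounded intervals.

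Next, by \cref{global_existence:accumulation}, the bounded sequence $(u_m(y_0), u_m'(y_0))_{m\in \N}$ admits an accumulation point $(u_0, v_0) \in \R^2$; after passing to a subsequence (still indexed by $m$ for ease of notation) we may assume $(u_m(y_0), u_m'(y_0)) \to (u_0, v_0)$. Now let $u$ denote the solution to the initial value problem associated with \cref{global_existence:ode} with data $u(y_0) = u_0$, $u'(y_0) = v_0$, defined on its maximal interval of existence. By \cref{global_existence:convergence} applied to this subsequence, $u$ extends to a global solution on $E_1$ and satisfies $\alpha \le u \le \beta$.

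There is essentially no further obstacle: the only slightly subtle point is checking that the initial values $(u_0, v_0)$ are compatible with a solution that stays in $E_2$, but this is precisely what \cref{global_existence:convergence} guarantees, since it shows that $u$ cannot reach $\partial E_2$ in finite time because it is trapped between $\alpha$ and $\beta$ (both of which take values in the open interval $E_2$). Hence $u$ is a global solution to \cref{global_existence:ode} on $E_1$ with $\alpha \le u \le \beta$, which is the desired conclusion.
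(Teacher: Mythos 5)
Your proof is correct and is exactly the argument the paper has in mind: the paper's own proof is the one-line statement ``Direct consequence of \cref{global_existence:compact_existence,global_existence:accumulation,global_existence:convergence},'' and you have spelled out precisely that chain (existence on compacts, accumulation at $y_0$, passage along a subsequence to a global solution trapped between $\alpha$ and $\beta$). No gaps.
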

		\begin{proof}
				Direct consequence of \cref{global_existence:compact_existence,global_existence:accumulation,global_existence:convergence}.
		\end{proof}

        \begin{remark}
        \label{global_existence:relaxation}
            The restriction $\alpha, \beta \in \mathcal{X}$ is only needed to pass to the truncated equation in \cref{global_existence:compact_existence}. 
            If the dependence of the right-hand side on $u'$ is bounded, i.e.\ $f$ is bounded over $K \times \R$ for all $K \subset E_1 \times \tilde{E}_2$ compact, the truncation is not necessary, and so \cref{global_existence:existence} holds for any $E_2$-valued sub- and supersolution.
            In particular, this is the case when $f$ does not depend on $u'$.
        \end{remark}

        While the maximum of a finite number of subsolutions is again a subsolution, the same is not true for the supremum of a general family of subsolutions.
        However, we can still prove the existence of a global solution to \cref{global_existence:ode} that lies between the supremum and a fixed supersolution in the case of a general family.
        In particular, this enables us to work with subsolutions that are $\tilde{E}_2$-valued, rather than only $E_2$-valued subsolutions as in \cref{global_existence:existence}.

        \begin{corollary}
                \label{global_existence:existence:family}
				Let $\beta \in C(E_1, E_2)$ with $\beta \in \mathcal{X}$ be a supersolution, and let $\{\alpha_i\}_{i \in I} \subset C(E_1, \tilde{E_2})$ be a family of  subsolutions with $\alpha_i \in \mathcal{X}$ and $\alpha_i \le \beta$ for all $i \in I$.
                Assume that $\sup_{i \in I} \alpha_i $ is $E_2$-valued. 
				Then there exists a globally defined $E_2$-valued solution $u$ to \cref{global_existence:ode} that satisfies $\alpha_i \le u \le \beta$ for all $i \in I$.
        \end{corollary}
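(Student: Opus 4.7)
The plan is to reduce to \cref{global_existence:existence} by approximating the pointwise supremum $\alpha := \sup_{i \in I} \alpha_i$ from below by a pointwise-increasing sequence of continuous subsolutions lying in $\mathcal{X}$. Since $E_1$ is an open interval, $C(E_1, \R)$ equipped with the topology of uniform convergence on compact sets is separable, so the family $\{\alpha_i\}_{i \in I}$ admits a countable dense subfamily $\{\alpha_{i_n}\}_{n \in \N}$. An elementary $\varepsilon$-argument using density then shows $\sup_n \alpha_{i_n} = \alpha$ pointwise on $E_1$.

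First, I would verify that for each $n \in \N$ the finite maximum $\tilde\alpha_n := \max_{1 \le k \le n} \alpha_{i_k}$ is again a subsolution in $\mathcal{X}$ with $\tilde\alpha_n \le \beta$: on each open region where one $\alpha_{i_k}$ strictly dominates, $\tilde\alpha_n$ is $C^2$ and inherits the subsolution inequality, while at crossing points $y_0$ the one-sided derivatives satisfy $D^{-}\tilde\alpha_n(y_0) \le D^{+}\tilde\alpha_n(y_0)$, verifying the kink condition. The sequence $(\tilde\alpha_n)_{n \in \N}$ is thus a pointwise-increasing sequence of subsolutions with $\sup_n \tilde\alpha_n = \alpha$.

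Next, for each pair $(m,n)$, \cref{global_existence:compact_existence} applied to $(\tilde\alpha_n, \beta)$ gives a solution $u_n^m$ to \cref{global_existence:ode} on $[e^{-}_m, e^{+}_m]$ with $\tilde\alpha_n \le u_n^m \le \beta$. For fixed $m$, \cref{global_existence:nagumo} bounds $|(u_n^m)'|$ uniformly in $n$; the ODE together with continuity of $f$ on a relatively compact set yields a uniform bound on $|(u_n^m)''|$, so Arzelà-Ascoli furnishes a subsequence converging in $C^1([e^{-}_m, e^{+}_m])$ to some $u_m \in C^2$ solving \cref{global_existence:ode} on $[e^{-}_m, e^{+}_m]$. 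For every fixed $n$ the bound $u_{n_k}^m \ge \tilde\alpha_n$ for $n_k \ge n$ persists in the limit, giving $u_m \ge \sup_n \tilde\alpha_n = \alpha$; similarly $u_m \le \beta$.

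Finally, \cref{global_existence:accumulation} produces an accumulation point $(u(y_0), u'(y_0))$ of $(u_m(y_0), u_m'(y_0))_{m \in \N}$ at some reference $y_0 \in [e^{-}_1, e^{+}_1]$, and \cref{global_existence:convergence:semicontinuous}, applied with the lower semi-continuous bound $\alpha$ (which is $E_2$-valued by hypothesis) and the continuous bound $\beta$, yields a global $E_2$-valued solution $u$ to \cref{global_existence:ode} with $\alpha \le u \le \beta$, whence $\alpha_i \le u$ for every $i \in I$. The main obstacle is the very first step: the pointwise supremum of an arbitrary family of subsolutions is only lower semi-continuous and need not itself be a subsolution, so separability of $C(E_1, \R)$ is essential in order to approximate $\alpha$ from below by a countable pointwise-increasing sequence of genuine subsolutions to which the existing sub-/supersolution theory applies.
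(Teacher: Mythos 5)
Your argument is correct, but it takes a genuinely different route from the paper's. The paper sidesteps separability and compactness entirely: it lets $u_m^i$ denote the maximal solution on $[e^-_m, e^+_m]$ produced by \cref{global_existence:compact_existence} from the pair $(\alpha_i, \beta)$, and observes that for any two indices $i, j$, the maximal solution built from the subsolution $\max\{\alpha_i, \alpha_j\}$ (which is again a subsolution by [Prop.~2.2, DeCoster2001] and lies in $\mathcal{X}$) must coincide with both $u_m^i$ and $u_m^j$; hence all $u_m^i$ agree, giving a single sequence $(u_m)_{m\in\N}$ that already dominates $\sup_{i\in I}\alpha_i$ without any limiting argument. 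You instead extract a countable subfamily whose pointwise supremum equals $\alpha$, form the increasing chain of finite maxima $\tilde\alpha_n$, and then run an Arzel\`a--Ascoli argument (using the Nagumo bound with the fixed pair $(\tilde\alpha_1,\beta)$ to get $n$-uniform $C^1$ bounds, and the ODE for $C^2$ bounds) to produce $u_m$ on each $[e^-_m,e^+_m]$. Both proofs then feed the resulting sequence into \cref{global_existence:accumulation} and \cref{global_existence:convergence:semicontinuous}. Your route is self-contained and avoids exploiting maximality of the solutions from \cref{global_existence:compact_existence}, at the cost of the separability reduction and a compactness step; the paper's route is shorter and requires no diagonal/extraction argument, but leans on the extremal-solution structure from the DeCoster theory. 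One small point worth making precise in your write-up: at a crossing of two branches the inequality $D^-\tilde\alpha_n \le D^+\tilde\alpha_n$ is non-strict, and in the equality case $\tilde\alpha_n$ need not be $C^2$ near the crossing, so to conclude that $\tilde\alpha_n$ is a subsolution you should appeal (as the paper does) to the more general notion in [Prop.~2.2, DeCoster2001] rather than to \cref{global_existence:definition:sub_supersolution} directly.
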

        \begin{proof}
				For $i \in I$, let $(u_m^{i})_{m\in \N}$ be the sequence of local solutions from \cref{global_existence:compact_existence}. 
				Let $i, j \in I$ be arbitrary, and fix $m\in \N$.
				By \cite[Prop.~2.2]{DeCoster2001}, $\max \{\alpha_i, \alpha_j\} $ is a subsolution to \cref{global_existence:ode} over $[e^{-}_m, e^{+}_m]$.
				Moreover, we have $\max \{\alpha_i, \alpha_j\} \in \mathcal{X}$. 
				Hence, there exists a solution to \cref{global_existence:ode} over $[e^{-}_m, e^{+}_m]$ that lies between $\max \{\alpha_i, \alpha_j\} $ and $\beta$ with boundary values $u(e^{\pm}_m) = \beta(e^{\pm}_m)$.
				As $u_m^{i}$ and $u_m^{j}$ were defined to be the maximal solutions to \cref{global_existence:ode} with $\alpha_{k} \le u_m^{k} \le \beta$ for $k \in \{i, j\}$, this means that $u_m^{i}$ and $u_m^{j}$ coincide.
                As $i, j$, and $m$ were arbitrary, all sequences $(u^{i}_m)_{m\in \N}$ coincide.
				Denote the sequence by $(u_m)_{m\in \N}$.
				In particular, we have $\sup_{i\in I} \alpha_i \le u_m \le \beta$ over $[e^{-}_m, e^{+}_m]$ for all $m\in \N$.

                Now, let $(u(y_0), u'(y_0)) $ be the accumulation point from \cref{global_existence:accumulation} with approximating subsequence $(u_{m_k})_{k\in \N}$.
				Let $u$ be the solution of the initial value problem \eqref{global_existence:ode} started at  $(u(y_0), u'(y_0))$.
                Note that $\sup_{i \in I} \alpha_i$ is lower semi-continuous.
				By \cref{global_existence:convergence:semicontinuous}, $u$ is a global solution to \cref{global_existence:ode} with $\sup_{i \in I} \alpha_i \le u \le \beta$.
        \end{proof}
        
		\section{Examples}
		\label{section:examples}

        In this section, we illustrate our results by considering four (classes of) examples of a stochastic factor with diffusion dynamics.
		
        \subsection{Bounded Coefficients}

		Consider a model with bounded coefficients.
		Assume that the frozen consumption rate $\eta$ satisfies $0 < C_1 \le \eta \le C_2$ for some constants $C_1, C_2 \in \R$, and that the volatility of the stochastic factor $b$ is bounded away from $0$.

		By \cref{hjb:existence}, there exists a global solution $u$ to \cref{hjb:hjb} with $C_1 \le u \le C_2$.
		Furthermore, for all $(y, u, v) \in \R \times [C_1, C_2] \times \R$, we have \[
				\frac{2}{b^2} \left|\tilde{a} v + \eta u - u^2 - d\frac{v^2}{u}\right| \le C_3 + C_4 |v| + C_5 |v|^2
		\] for some $C_3, C_4, C_5 > 0$.
		For all $m \ge 1$, we have $r_m = \frac{C_2 - C_1}{2m} \le r_1 $.
		Thus, the constant $R_m \equiv R_{1}$ in \cref{global_existence:nagumo} can be chosen independently of $m \ge 1$, and so $u'$ is bounded over $\R$.

		Since the model is uniformly well-posed, and all model coefficients as well as $\frac{u'}{u}$ are bounded, it follows by \cref{verification} that $u$ is the optimal consumption rate for the control problem \eqref{setting:control_problem}.

		\subsection{Stochastic Market Price of Risk}
		\label{subsection:mpr}

		Consider a model in which the market price of risk is driven by an Ornstein-Uhlenbeck process:
		\begin{align*}
				dS_t &= S_t((r + \sigma Y_t) + \sigma dW_t), \\
				dY_t &= -\kappa (Y_t - \theta) dt + \nu d\tilde{W}_t
		.\end{align*}
		Here, $\sigma, \delta, \nu > 0$, $r, \kappa \ge 0$, $\theta\in \R$, and $\rho \in [-1,1]$ are constants.
		Assume that $R > 1$.
		We have \[
				\eta(y) = \frac{1}{R} \left( \delta - (1-R) \left( r + \frac{y^2}{2R} \right)  \right) \ge \frac{\delta}{R} > 0
		,\]
		so the model is uniformly well-posed.
		In this model, $\Psi \eta$ is a rational function that is bounded from above and converges to $1$ as $|y| \to \infty$ since the denominators are bounded away from $0$ and of higher degree than the numerators.
		One easily sees that the other conditions of \cref{hjb:asymptotics:assumption} are also satisfied.

		\Cref{hjb:existence:uniformly_wellposed} implies that there exists a global positive solution $u$ to the HJB equation \eqref{hjb:hjb}, and that it satisfies $C_1 \eta \le u \le C_2 \eta$ for all $|y|$ large enough.
		By \cref{hjb:asymptotics}, $\frac{u}{\eta} \to 1$ as $|y| \to \infty$, and by \cref{hjb:asymptotics:derivative}, $\frac{u'}{u}$ grows at most linearly.
		By \cref{hjb:uniqueness}, $u$ is the unique solution with $\frac{u}{\eta} \to 1$ as $|y| \to \infty$.

		If $\kappa > \frac{1-R}{R} \rho \nu$, the model is mean-reverting under the minimal distortion measure and we have $\Psi \eta \le 1$ for all $|y|$ large enough, so that $u \le \eta \Psi \eta \le \eta$ eventually as $|y| \to \infty$ by \cref{hjb:asymptotics:mean_reverting,hjb:asymptotics:mean_reverting:less_than_Psi_eta}. 

		As the matrix $\tilde{A}$ in the Verification \cref{verification} is constant in this model, and the entries of $\hat{b}$ grow at most linearly, condition \eqref{verification:martingale_problem} is satisfied.
		Since the model is uniformly well-posed, condition \eqref{verification:infinite_consumption} is trivially satisfied.
		In total, \cref{verification} yields that $u$ is indeed the value function of the optimal control problem \eqref{setting:control_problem}.

		\begin{table}[htb]
				\centering
				\begin{tabular}{cccccccc}
						\toprule
						$R$   & $\delta$ & $r$    & $\sigma$ & $\kappa$ & $\theta$ & $\nu$ & $\rho$ \\
						\midrule
						$1.5$ & $0.05$   & $0.02$ & $0.2$    & $0.3$    & $0.5$    & $0.6$ & $-0.2$ \\
						\bottomrule
				\end{tabular}
				\caption{Parameters of the Stochastic MPR model.}
				\label{mpr:table:parameters}
		\end{table}

		\begin{figure}[htb]
				\centering

				\begin{minipage}[t]{.45\textwidth}
						\includegraphics[width=0.95\textwidth]{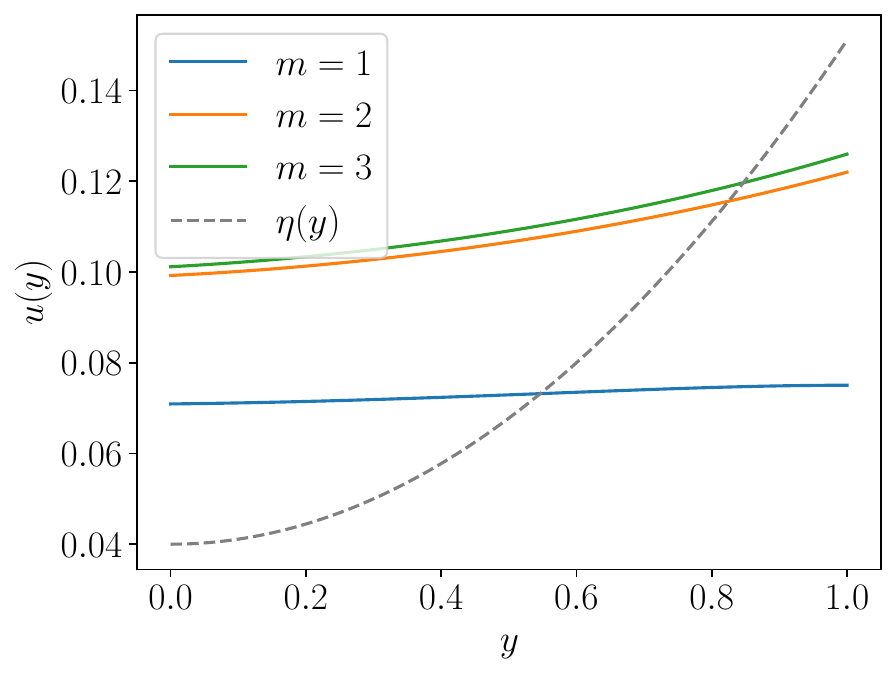}
						\caption{Convergence of solutions over $[-m, m]$ with parameters from \cref{mpr:table:parameters}.}
						\label{mpr:fig:convergence}		
				\end{minipage}
				\begin{minipage}[t]{.45\textwidth}
						\includegraphics[width=0.95\textwidth]{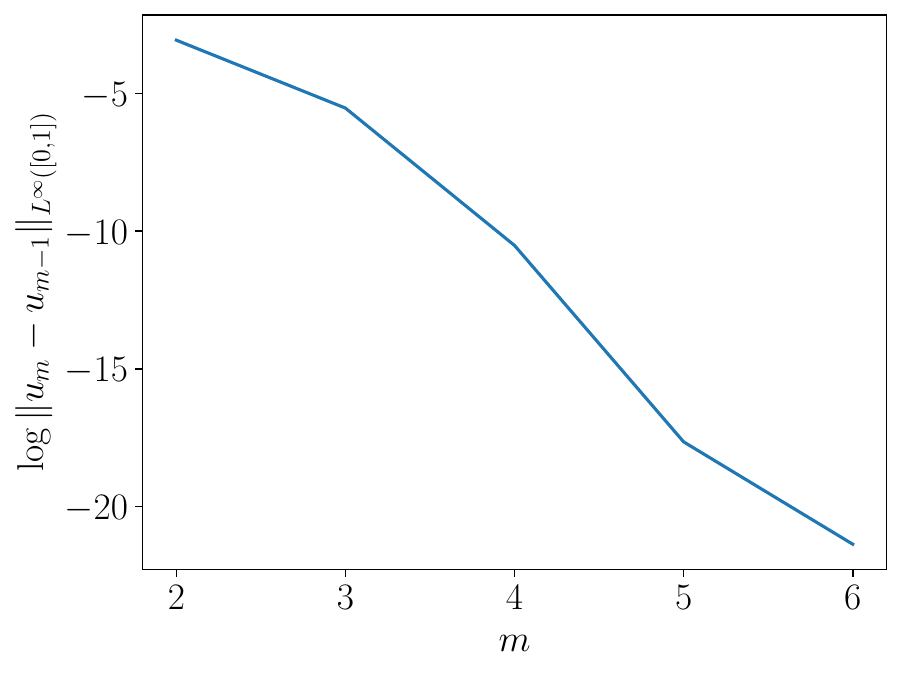}
						\caption{Difference between consecutive solutions.}
						\label{mpr:fig:convergence-rate}		
				\end{minipage}
		\end{figure}

		\Cref{mpr:fig:convergence} shows the convergence of the solutions to the discretised problem over $[-m, m]$ to the global solution using the schem from \cref{section:numerics} with grid size $10^{-3}$, with model parameters as in \cref{mpr:table:parameters}. 
		Over the interval $[0, 1]$, the solutions are visually indistinguishable from $m=3$ onwards.
		\Cref{mpr:fig:convergence-rate} shows the maximal difference of two consecutive solutions over the interval $[0, 1]$. 
		Since the differences are approximately log-linear with slope $-\frac{17.5}{4}$, the convergence is geometric with rate $\exp(-\frac{17.5}{4}) \approx 0.014$.
        From the figure we see that the optimal consumption rate (and hence the value function) is larger compared to a market in which the factor is frozen when the market price of risk is small, but smaller when it is large. This is due to the mean reversion of $Y$.

		\subsection{Heston Model}

		Consider the Heston model 
		\begin{align*}
				dS_t &= S_t ((r + \lambda Y_t) dt + \sqrt{Y_t} dW_t), \\
				dY_t &= -\kappa (Y_t - \theta) dt + \nu \sqrt{Y_t} d\tilde{W}_t
		,\end{align*} 
		i.e.\ $\sigma(y) = \sqrt{y}, \lambda(y) = \lambda \sqrt{y} $.
		Assume that the constants satisfy $R>1$, $\theta, \nu > 0$, $r, \kappa \ge 0$, and $\rho \in [-1, 1]$.
		Furthermore, assume that the Feller condition $\kappa \theta \ge \frac{\nu^2}{2}$ is satisfied, so $Y$ stays positive, i.e., E=$(0,\infty)$.
		In this model, we have \[
				\eta(y) = \frac{1}{R} \left(\delta - (1 - R) \left(r + \frac{\lambda^2 y}{2R}\right)\right) \ge \frac{\delta}{R}
		\]

        Suppose first that $\delta>0$. Then $\eta(y) > \delta/R>0$ so that
		the model is uniformly well-posed. 
		$\Psi \eta$ is a rational function that is bounded from above and converges to $1$.
		One can easily check that \cref{hjb:asymptotics:assumption} is satisfied in this model.

		By \cref{hjb:existence:uniformly_wellposed} there exists a global positive solution $u$ to the HJB equation \eqref{hjb:hjb} that satisfies $C_1 \eta \le u \le C_2 \eta$ for $y$ large enough.
		Hence, \cref{hjb:asymptotics} shows that $u$ satisfies $\frac{u}{\eta} \to 1$ as $y\to \infty$.
		If $\kappa > \frac{1-R}{R} \rho \lambda \nu$, the stochastic factor is mean-reverting under the minimal distortion measure and $\Psi \eta$ converges to $1$ from below.
        \Cref{hjb:asymptotics:mean_reverting,hjb:asymptotics:mean_reverting:less_than_Psi_eta} yield that in this case $u \le \eta \Psi \eta \le \eta$ eventually as $y \to \infty$.

		By \cref{heston:limit_at_0} below, it follows that $\frac{u'}{u} = -\frac{R}{\varphi} \frac{v'}{v}$ is bounded around $0$, where $v = u^{-\frac{R}{\varphi}}$.
		Furthermore, $\frac{u'}{u} \to 0$ as $y\to \infty$ by \cref{hjb:asymptotics:derivative,hjb:asymptotics:derivative:vanishes} (or by \cref{hjb:asymptotics:derivative:mean_reverting} if $\kappa > \frac{1-R}{R} \rho \lambda \nu$, i.e.\ the stochastic factor is mean-reverting).
		By continuity, $\frac{u'}{u}$ is then bounded over $(0,\infty)$.
		It follows from \cref{hjb:uniqueness} that $u$ is the unique solution to the HJB equation with $u \to u(0)$ as $y\to 0$ and $\frac{u}{\eta} \to 1$ as $y \to \infty$.

		As the components of both the matrix $\tilde{A}$ and the vector $\hat{b}$ in the Verification \cref{verification} grow at most linearly in this model, condition \eqref{verification:martingale_problem} is satisfied.
		Since the model is uniformly well-posed, condition \eqref{verification:infinite_consumption} is trivially satisfied.
		In total, \cref{verification} shows that $u$ is indeed the optimal consumption fraction for the control problem \eqref{setting:control_problem}.

		Now, we want to extend this to the case $\delta < 0$.\footnote{
            In the formulation of problem \eqref{setting:control_problem}, we discount utility, and so it is natural to pick $\delta \ge 0$.
            One could also formulate the problem differently and move the discount factor into the utility function to directly discount consumption.
            The two approaches are equivalent up to multiplying the discount rate by $(1-R)$.
            When $R \in (0, 1)$, the discount rate of utility and the equivalent discount rate of consumption have the same sign, and it is enough to consider $\delta \ge 0$.
            But when $R > 1$, the signs differ, and a negative discount rate of utility corresponds to a positive discount rate of consumption, so the case $\delta < 0$ should be considered as well.
        }
		In this case, the model is no longer necessarily uniformly well-posed, or even well-posed everywhere.
		Let $\delta_+$ be a discount rate such that a solution $u_+ > 0$ to \cref{hjb:hjb} for $\delta_+$ exists that is bounded and bounded away from $0$ at $0$ and satisfies $\frac{u_+}{\eta} \to 1$ as $y\to \infty$.
		In particular, any $\delta_+ > 0$ satisfies this.

		Let $\delta \in (\delta_+ - R \inf_{E} u_+, \delta_+)$.
		Since the frozen consumption rate $\eta$ is monotone in $\delta$, $u_+$ is a supersolution to \cref{hjb:hjb} in the model with discount rate $\delta$.
		Setting $K = 1 - \frac{\delta_+ - \delta}{R \inf_E u_+}$, we have $K\in (0,1)$.
        Denoting by $\eta_+$ the frozen consumption rate in the model with discount rate $\delta_+$, we have \begin{multline*}
            \frac{1}{2} b^2 K u_+'' + \tilde{a} K u_+' + \eta K u_+ - K^2 u_+^2 - d K \frac{(u_+')^2}{u_+} =
            K (- \eta_+ u_+ + u_+^2 + \eta u_+ - Ku_+^2) =\\=
            K u_+ ((1-K) u_+ + \frac{1}{R} (\delta - \delta_+)) \ge
            0
        ,\end{multline*}
        so $Ku_+$ is a subsolution to \cref{hjb:hjb} with discount rate $\delta$.
		By \cref{global_existence:existence}, there exists a solution $0 < Ku_+ \le u \le u_+$ to \cref{hjb:hjb}.
		By the assumptions on $u_+$, $u$ is bounded and bounded away from $0$ around $0$ and satisfies $\frac{u}{\eta}\to 1$ by \cref{hjb:asymptotics}.
		As in the case $\delta > 0$, this together with \cref{hjb:asymptotics:derivative:vanishes,heston:limit_at_0,verification} implies that $u$ is indeed the value function.
		
		Note that this process can be iterated by setting $\delta_+ = \delta$ to get existence for further values of $\delta$.
		For fixed $\delta$, a similar procedure can be used to extend the existence to some negative interest rates $r < 0$.

		\begin{table}[htb]
				\centering
				\begin{tabular}{cccccccc}
						\toprule
						$R$   & $\delta$ & $r$    & $\lambda$ & $\kappa$ & $\theta$ & $\nu$ & $\rho$ \\
						\midrule
						$2$ & $0.02$   & $0.013$ & $1.66$    & $0.088$    & $0.035$    & $0.031$ & $-0.84$ \\
						\bottomrule
				\end{tabular}
				\caption{Parameters of the Heston model, taken from \cite{Guasoni2020}.}
				\label{heston:table:parameters}
		\end{table}

		\begin{figure}[htb]
				\centering

				\begin{minipage}[t]{.45\textwidth}
						\includegraphics[width=0.95\textwidth]{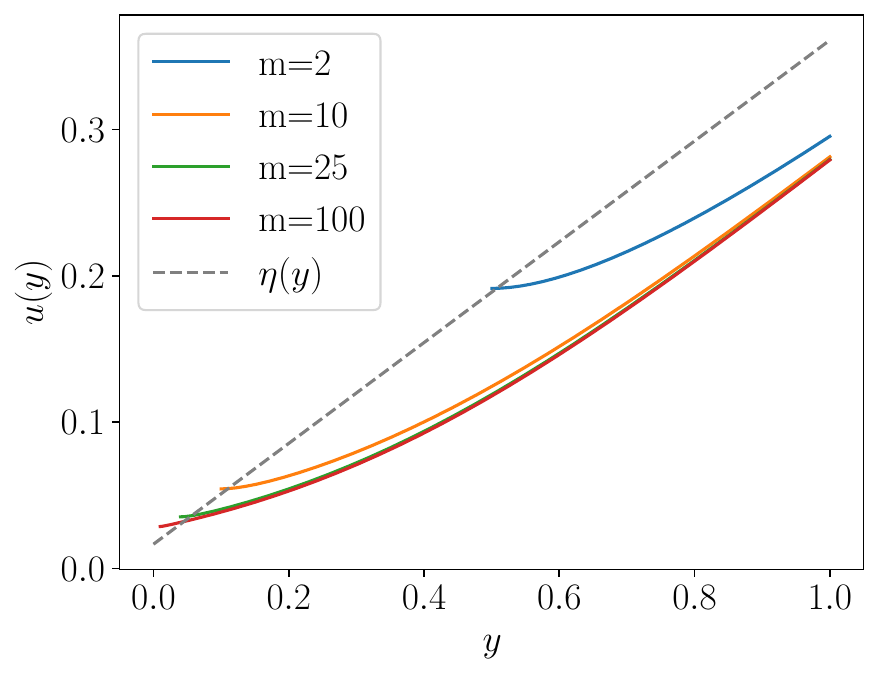}
						\caption{Convergence of solutions over $[\frac{1}{m}, \sqrt{m} ]$ with parameters from \cref{heston:table:parameters}.}
						\label{heston:fig:convergence}		
				\end{minipage}
				\begin{minipage}[t]{.45\textwidth}
						\includegraphics[width=0.95\textwidth]{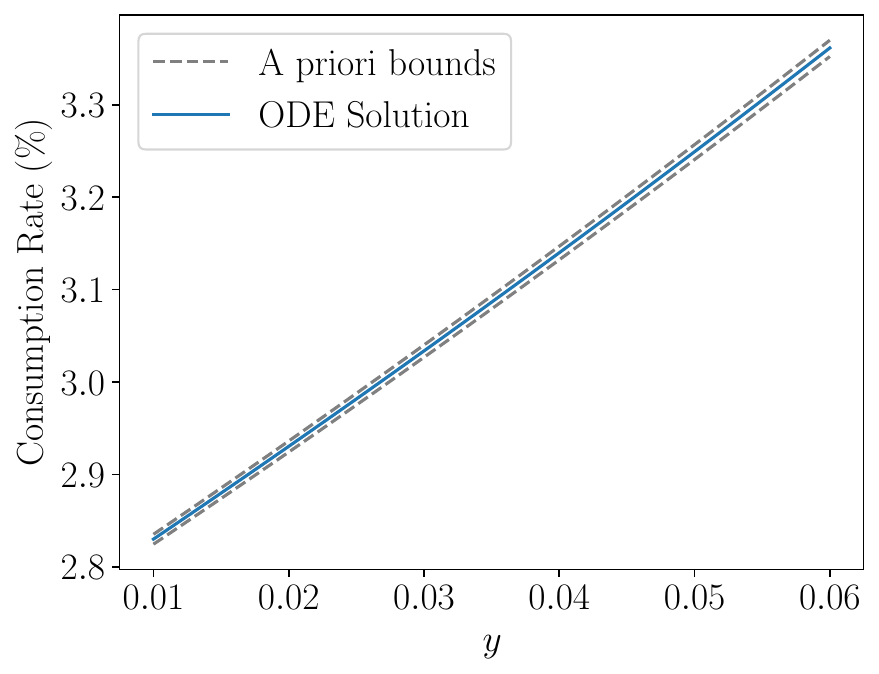}
						\caption{Optimal consumption rate with parameters from \cref{heston:table:parameters}.}
						\label{heston:fig:consumption}		
				\end{minipage}
		\end{figure}
		
		\Cref{heston:fig:convergence} shows the convergence of the scheme from \cref{section:numerics} over $[\frac{1}{m}, \sqrt{m}]$ with grid size $10^{-4}$ and model parameters from \cref{heston:table:parameters} to the true solution. 
		Due to the mean reversion, the optimal consumption rate is higher than the frozen consumption rate when the volatility is low, and lower when the volatility is high.
		\Cref{heston:fig:consumption} shows the optimal consumption rate around the long-term mean volatility together with the tight a priori bounds obtained by \citet{Guasoni2020}.

		\begin{lemma}
				\label{heston:limit_at_0}
				Consider the Heston model with parameters as above.
				Let $v$ be a positive solution to \cref{hjb:hjb:constant_correlation} over $(0, \infty)$ that is bounded and bounded away from $0$ around $0$.
				Then $v(0) = \lim_{y \downarrow 0} v(y) $ and $v'(0) = \lim_{y \downarrow 0} v'(y)$ exist and satisfy \[
						v'(0) = \frac{\frac{R}{\varphi} \left(\eta(0) v(0) - v(0)^{1 - \frac{\varphi}{R}}\right)}{\kappa \theta}
				.\] 
		\end{lemma}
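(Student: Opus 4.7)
The plan is to treat $y = 0$ as a regular singular point of the ODE and do a Frobenius-style analysis, exploiting the Feller condition to rule out the singular branch.

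First I would cast the HJB equation \eqref{hjb:hjb:constant_correlation} in Frobenius form at $y = 0$. Using $b(y)^2 = \nu^2 y$ and $\tilde{a}(y) = \kappa\theta + \tilde{a}_1 y$ with $\tilde{a}_1 := \frac{1-R}{R}\rho\lambda\nu - \kappa$, I divide by $\tfrac{1}{2}\nu^2 y$ to obtain
\[
 v''(y) + \frac{c + d_1 y}{y}\, v'(y) = \frac{H(y,v(y))}{y},
\]
where $c := 2\kappa\theta/\nu^2$, $d_1 := 2\tilde{a}_1/\nu^2$, and $H(y,w) := \frac{2R}{\varphi\nu^2}\bigl(\eta(y)w - w^{1-\varphi/R}\bigr)$. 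The Feller condition $\kappa\theta \ge \nu^2/2$ gives $c \ge 1$. Multiplying by the integrating factor $y^c e^{d_1 y}$ turns the equation into
\[
 \tfrac{d}{dy}\bigl(y^c e^{d_1 y} v'(y)\bigr) = y^{c-1} e^{d_1 y} H(y,v(y)).
\]
Because $v$ is bounded and bounded away from $0$ near $y = 0$, the map $s \mapsto H(s, v(s))$ is bounded on a right neighbourhood of $0$, so the right-hand side is integrable near $0$ and integration yields
\[
 y^c e^{d_1 y} v'(y) = C + \int_0^{y} s^{c-1} e^{d_1 s} H(s,v(s))\,ds
\]
for some constant $C \in \R$.

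Next I would show $C = 0$ by arguing that the singular Frobenius branch is incompatible with boundedness of $v$. Since the integral above is $O(y^c)$ as $y \downarrow 0$, a nonzero $C$ would force $v'(y) \sim C\, y^{-c}$ near $0$. But $c \ge 1$ makes $\int_0^{y_0} s^{-c}\,ds$ divergent, so the identity $v(y_0) - v(y_1) = \int_{y_1}^{y_0} v'(s)\,ds$ would push $v$ to $\pm \infty$ as $y_1 \downarrow 0$, contradicting the assumption that $v$ is bounded near $0$. Hence $C = 0$, and
\[
 v'(y) = y^{-c} e^{-d_1 y} \int_0^{y} s^{c-1} e^{d_1 s} H(s,v(s))\,ds.
\]
A direct estimate then gives $|v'(y)| \le \tfrac{\|H\|_\infty}{c}\, e^{(|d_1| - d_1)y}$ near $0$, so $v'$ is bounded there. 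Consequently $v$ is Lipschitz on a right neighbourhood of $0$, so $v(0) := \lim_{y \downarrow 0} v(y)$ exists and is strictly positive.

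Finally, with $v(0^+)$ established and $v(0^+) > 0$, joint continuity of $H$ makes $s \mapsto H(s, v(s))$ continuous at $0$ with value $H(0, v(0))$. Applying L'Hôpital to $y^{-c} \int_0^y s^{c-1} e^{d_1 s} H(s,v(s))\,ds$ (both numerator and denominator vanish at $0$) gives
\[
 v'(0^+) = \frac{H(0, v(0))}{c} = \frac{\tfrac{R}{\varphi}\bigl(\eta(0) v(0) - v(0)^{1-\varphi/R}\bigr)}{\kappa\theta},
\]
which is the claimed identity. The main obstacle is the step $C = 0$: the indicial exponents at the regular singular point $y = 0$ are $0$ and $1 - c \le 0$, so a priori any solution can pick up a $y^{1-c}$ (respectively $\log y$, when $c = 1$) component, and boundedness of $v$ is precisely what forces the regular branch. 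Every other step is a routine continuity or integration argument once this is in hand.
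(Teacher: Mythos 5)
Your argument is correct and follows essentially the same route as the paper: rewrite the HJB equation as a first-order linear ODE in $v'$ with a regular singular point at $0$, use the integrating factor $y^c e^{d_1 y}$, invoke the Feller condition $c \ge 1$ to rule out the singular branch via boundedness of $v$, and finish with L'Hôpital to obtain the limiting value of $v'$. The paper packages the boundedness of $v'$ and the identification of the limit into a single $\liminf$/$\limsup$ squeeze via L'Hôpital, whereas you first get a direct a priori bound on $v'$ and then apply L'Hôpital cleanly, but the underlying ideas are identical.
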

		\begin{proof}
				Denote $h(y) = \frac{\frac{R}{\varphi}}{\frac{1}{2} \nu^2} \left(\eta(y) v(y) - v(y)^{1 - \frac{\varphi}{R}}\right)$, $a_0 = \frac{\kappa \theta}{\frac{1}{2} \nu^2}$, and $a_1 = \frac{-\kappa + \frac{1-R}{R} \rho \lambda \nu}{\frac{1}{2} \nu^2}$.
				Notice that $a_0 \ge 1$ as the model satisfies the Feller condition $\kappa \theta \ge \frac{1}{2} \nu^2$.
				$v'$ satisfies the linear first-order ODE \[
						y w' + (a_0 + a_1 y) w = h(y)
				.\] 	
				The general solution to this equation is given by \[
						w_C(y) = \frac{C + F(y)}{y^{a_0} e^{a_1 y}}, \quad F(y) = \int_{0}^{y} h(s) s^{a_0-1} e^{a_1 s} \d s, \quad C \in \R
				.\]  
				Note that $F$ is well-defined since $h$ is bounded around $0$.
				For all $C \neq 0$, $w_C \sim \frac{C}{y^{a_0}}$ as $y\to 0$.
				Since $v$ is bounded around $0$, $v'$ is integrable around $0$.
				Since $a_0 \ge 1$, this means that $v' = w_0$.
				
				Now, denote $\munderbar{h}(0) = \liminf_{y \downarrow 0} h(y), \bar{h}(0) = \limsup_{y \downarrow 0} h(y)$.
				By L'Hospital's rule, \begin{multline*}
						\frac{\munderbar{h}(0)}{a_0} = \liminf_{y \downarrow 0} \frac{h(y) y^{a_0-1} e^{a_1 y}}{a_0 y^{a_0-1} e^{a_1 y} + a_1 y^{a_0} e^{a_1 y}} \le \liminf_{y \downarrow 0} w_0(y) \le \\ \le \limsup_{y \downarrow 0} w_0(y) \le \limsup_{y \downarrow 0} \frac{h(y) y^{a_0-1} e^{a_1 y}}{a_0 y^{a_0-1} e^{a_1 y} + a_1 y^{a_0} e^{a_1 y}} = \frac{\bar{h}(0)}{a_0}
				,\end{multline*} 
				so $v' = w_0$ is bounded around $0$.

				Since $v'$ is bounded around $0$, $v$ is Lipschitz-continuous around $0$. 
				Hence, $v$ can be continuously extended to $0$. 
				Thus, $\munderbar{h}(0) = \bar{h}(0)$, and so $\lim_{y \downarrow 0} v'(y) = \frac{h(0)}{a_0}$.
		\end{proof}

		\subsection{Vasicek Model}

		Finally, we demonstrate that our results can also be applied to models where $\eta(y) < 0$ for some $y$.
		To this end, we consider the stochastic interest rate model
		\begin{align*}
				dS_t &= S_t ((Y_t + \lambda \sigma) dt + \sigma dW_t), \\
				dY_t &= -\kappa (Y_t - \theta) dt + \nu d\tilde{W}_t
		\end{align*} with risk aversion $R > 1$.
		Here, $\delta, \sigma, \nu > 0$, $\kappa \ge 0$, $\lambda, \theta \in \R$, and $\rho \in [-1, 1]$ are constants.
		Furthermore, assume that 
        \begin{align*}
				\frac{1}{2} \nu^2 q < C_1 :&= 
                \kappa \theta + \frac{1-R}{R} \rho \lambda \nu + \frac{1}{Rq} \left(\delta - (1-R) \frac{\lambda^2}{2R}\right) - \frac{1}{2}\nu^2 \left((1-\rho^2) R + \rho^2\right) q \\&= 
                \tilde{a}(0) + \frac{\eta(0)}{q} - \frac{1}{2}\nu^2 \left((1-\rho^2) R + \rho^2\right) q
		,\end{align*} where $q = \frac{R-1}{R \kappa}$.
        Note that this condition is slightly stronger than condition (13) in \citet{Guasoni2019}, which is equivalent to $C_1 > 0$.

		In this model, \[
				\eta(y) = \frac{1}{R} \left(\delta - (1-R) \left(y + \frac{\lambda^2}{2R}\right)\right) < 0 \text{ for } y < \frac{\delta}{1-R} - \frac{\lambda^2}{2R} =: y^{*}
		.\]
        Note that $y^*<0$ and hence $\eta(0) > 0$.
		Since $\eta(y) < 0$ for some $y$, we cannot apply \cref{hjb:existence} to generate a subsolution to \cref{hjb:hjb}.
		Still, we can use it to construct a supersolution.
		Note that in light of \cref{hjb:existence:abstract} it is not surprising that it is easier to construct a supersolution than a subsolution.
		Set $g_2(y) = 1 - \frac{1-R}{R} \log(1 + \exp(y - y^{*}))$. 
		Then
		\[
				g_2(y) \ge 1 - \frac{1-R}{R} ((y - y^{*}) \vee 0) = 1 + \eta(y)^{+} \ge \eta(y)
		\]
		and $\tilde{C}_2 := \sup_{\R} \Psi g_2 < \infty	$.
		By \cref{hjb:existence}, $\beta(y) = \tilde{C}_2 g_2(y)$ is a supersolution to \cref{hjb:hjb}.
	
		We look for a subsolution of the form \[
				\alpha(y) = \begin{cases}
						\exp(qy+s), & y \le 0, \\
						e^{s} (1 + (q + \varepsilon) y), & y \ge 0
				\end{cases}
		\] for some $\varepsilon > 0$, $s \in \R$.

		Plugging $\alpha$ into \cref{hjb:hjb}, one sees that $\alpha$ is a subsolution on $(-\infty, 0)$ if \begin{align}
				\label{vasicek:subsolution:condition}
				\tilde{a}(y) + \frac{\eta(y)}{q} - \frac{1}{2} \nu^2 ((1-\rho^2)R + \rho^2)q - \frac{\alpha(y)}{q} \ge 0 \quad \text{ for all } y < 0
				\tag{$*$}
		.\end{align} 
		Notice that $\tilde{a} + \frac{\eta}{q}$ is constant.
		Hence, \eqref{vasicek:subsolution:condition} is equivalent to the condition $C_1 - \frac{\alpha(y)}{q} \ge 0$ for all $y < 0$.
		By monotonicity of $\alpha$, this is true for all $s \le \log(q C_1)$.
		Furthermore, $\alpha$ is a subsolution at $0$ for all $\varepsilon > 0$.
		Let  \[
				V(y, t, s) = -e^{s} (1+ty)^2 + (1+ty) \eta(y) + t \tilde{a}(y) - d \frac{t^2}{1+ty}
		.\]
		Plugging $\alpha$ into \cref{hjb:hjb}, one can check that $\alpha$ is a subsolution on $(0, \infty)$ if $V(y, q+\varepsilon, s) \ge 0$ for all $y > 0$.
		Notice that \[
                \frac{1+qy}{q} V(y, q, -\infty) = q^2 \kappa y^3 + q(\eta(0) + \kappa) y^2 + q\left(\tilde{a}(0) + 2 \frac{\eta(0)}{q}\right) y + \left(C_1 - \frac{1}{2} \nu^2 q\right)
        \] is a polynomial of degree $3$ in $y$, and that $\eta(0) \ge 0$.
        Since we have $\tilde{a}(0) + 2 \frac{\eta(0)}{q} \ge C_1 > 0$ and $C_1 - \frac{1}{2}v^2 q > 0$, all coefficients are positive.
		Hence, $V(y, q, -\infty) > 0$ for all $y > 0$ by Descartes' rule of signs.
		Thus, $V(y, q+\varepsilon, s) > 0$ for all $y > 0$ if $\varepsilon$ and $s$ are small enough.
		In total, $\alpha$ is a global subsolution to \cref{hjb:hjb}.
		Notice that $\alpha \le \beta$ as long as $s$ is small enough.

		By \cref{global_existence:existence}, there exists a candidate solution $u > 0$ with $\alpha \le u \le \beta$ to the HJB equation \eqref{hjb:hjb}.
		As $ \lim_{y \to \infty} \frac{\alpha(y)}{\eta(y)} > 0$ and $\lim_{y\to \infty} \frac{\beta(y)}{\eta(y)} = 1$, $\frac{u}{\eta}$ is eventually bounded and bounded away from $0$.
		One easily sees that the model coefficients satisfy the assumptions of \cref{hjb:asymptotics,hjb:asymptotics:mean_reverting,hjb:asymptotics:mean_reverting:less_than_Psi_eta,hjb:asymptotics:derivative:mean_reverting}, so $\frac{u}{\eta} \to 1$, $u \le \eta \Psi \eta \le \eta $ eventually, and $\frac{u'}{u} \to 0$ as $y \to \infty$.

		By \cref{vasicek:behaviour_at_neg_inf} below, $\frac{u'}{u}$ is bounded as $y\to -\infty$, so $\frac{u'}{u}$ is bounded over $\R$ by continuity.
		This means that condition \eqref{verification:martingale_problem} of \cref{verification} is satisfied.
		Now, consider condition \eqref{verification:infinite_consumption}, and denote $C = \sup_{\R} |\frac{u'}{u}|$.
		The $\hat{\mathbb{P}}$-dynamics of $Y$ are given by \[
				dY_t = \left(-\kappa (Y_t - \theta) + \frac{1-R}{R} \rho \lambda \nu - \nu^2 ((1-\rho^2)R + \rho^2) \frac{u'}{u}\right) dt + \nu d \tilde{W}^{\hat{\mathbb{P}}}_t, \quad Y_0 = y
		.\]
		Consider the Ornstein-Uhlenbeck process \[
				d\hat{Y}_t = -\kappa \left(\hat{Y}_t - \left(\theta + \frac{1-R}{R \kappa} \rho \lambda \nu - \frac{\nu^2}{\kappa} ((1-\rho^2) R + \rho^2) C\right)\right) dt + \nu d \tilde{W}^{\hat{\mathbb{P}}}_t, \quad \hat{Y}_0 = y
		.\]
		By the comparison theorem for SDEs (cf.\ \cite[Prop.~5.2.18]{Karatzas1998}), we have $\hat{Y}_t \le Y_t$ for all $t \ge 0$ $\hat{\mathbb{P}}$-a.s.
		Since the subsolution $\alpha$ is increasing, we thus have \[
				\int_{0}^{\infty} u(Y_t) \d t \ge \int_{0}^{\infty} \alpha(Y_t) \d t \ge \int_{0}^{\infty} \alpha(\hat{Y}_t) \d t = \infty \quad \hat{\mathbb{P}}\text{-a.s.}
		\] by ergodicity of the Ornstein-Uhlenbeck process $\hat{Y}$ and $\alpha > 0$ (cf.\ \cite[Ex.~9.12', Thm.~9.6]{Hoepfner2014}).

		Thus, \cref{verification} shows that $u$ is indeed the optimal consumption rate for the control problem \eqref{setting:control_problem}.

		\begin{table}[htb]
				\centering
				\begin{tabular}{cccccccc}
						\toprule
						$R$   & $\delta$ & $\lambda$ & $\sigma$ & $\kappa$ & $\theta$ & $\nu$ & $\rho$ \\
						\midrule
						$1.5$ & $0.02$ & $\frac{23}{60}$ & $0.18$    & $0.43$    & $0.013$    & $0.033$ & $-0.0012$ \\
						\bottomrule
				\end{tabular}
				\caption{Parameters of the Vasicek model, taken from \cite{Guasoni2019}.}
				\label{vasicek:table:parameters}
		\end{table}

		\begin{figure}[htb]
				\centering

				\begin{minipage}[t]{.45\textwidth}
						\includegraphics[width=0.95\textwidth]{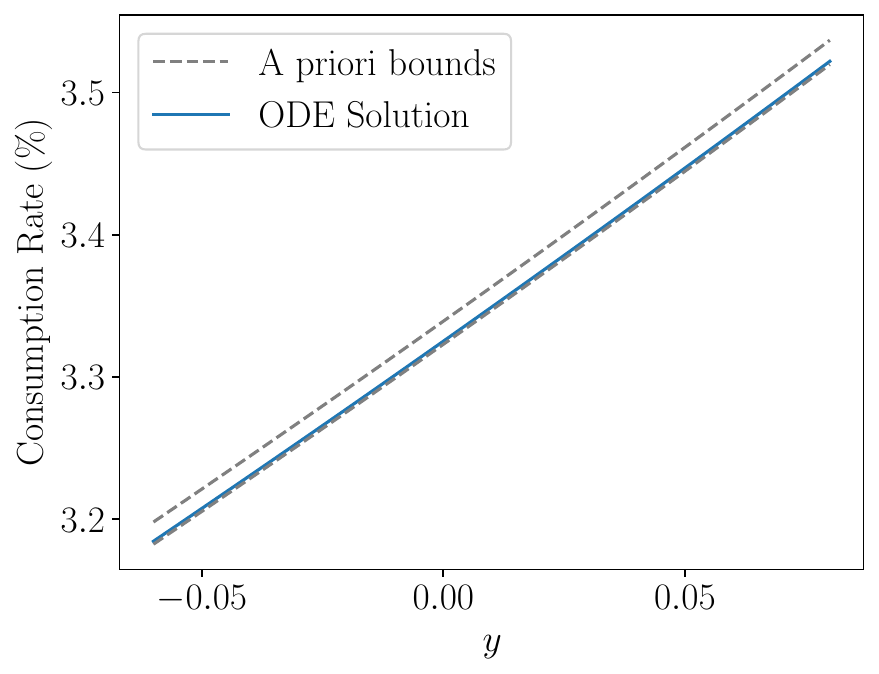}
						\caption{Optimal consumption rate with parameters from \cref{vasicek:table:parameters}.}
						\label{vasicek:fig:consumption}		
				\end{minipage}
				\begin{minipage}[t]{.45\textwidth}
						\includegraphics[width=0.95\textwidth]{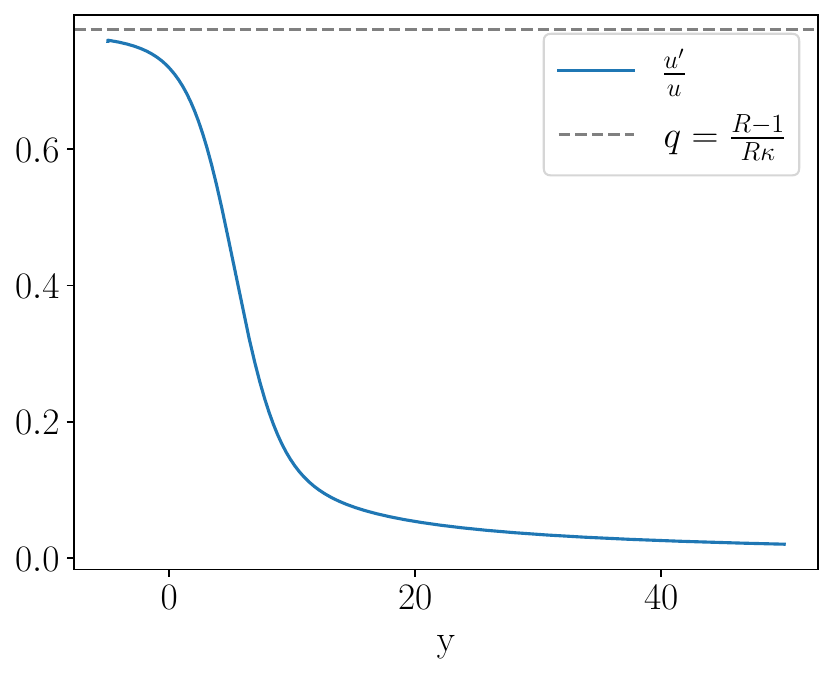}
						\caption{$\frac{u'}{u}$ with parameters from \cref{vasicek:table:parameters}.}
						\label{vasicek:fig:gradient}		
				\end{minipage}
		\end{figure}

		\Cref{vasicek:fig:consumption,vasicek:fig:gradient} show the optimal consumption rate and the log-gradient $\frac{u'}{u}$ using the scheme from \cref{section:numerics} over $[-5, 50]$ with grid size $2 \cdot 10^{-4}$ parameters from \cref{vasicek:table:parameters}, together with the a priori bounds for the consumption rate obtained by \citet{Guasoni2019}.

		\begin{lemma}
				\label{vasicek:behaviour_at_neg_inf}
				Consider the Vasicek model with parameters as above.
				Let $u$ be a positive solution to \cref{hjb:hjb} on $(-\infty, 0]$ with $qy+s \le \log u \le \tilde{q}y+\tilde{s}$ for all $y \le 0$, where $q = \frac{R-1}{R \kappa}$ and $0 \le \tilde{q} \le q$.
				Then
				\[
						\tilde{q} \le \liminf_{y \to -\infty} \frac{u'}{u} \le \limsup_{y\to -\infty} \frac{u'}{u} \le q
				.\]
		\end{lemma}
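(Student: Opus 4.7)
The plan is to study the logarithmic derivative $\psi := u'/u$, which is $C^2$ on $(-\infty, 0]$. Writing $u' = \psi u$ and $u'' = (\psi' + \psi^2) u$ and dividing the HJB equation \eqref{hjb:hjb} by $u$ will give the Riccati-type ODE
\begin{equation*}
\tfrac12 \nu^2 \psi' = u - \eta - \tilde{a}\psi + c\psi^2, \qquad c := d - \tfrac12 \nu^2 \ge 0.
\end{equation*}
From $\log u \le \tilde q y + \tilde s$ with $\tilde q \ge 0$, we have $u(y) \to 0$ as $y \to -\infty$; in the Vasicek model $\eta' = (R-1)/R$ and $\tilde{a}' = -\kappa$ are constant, so $q = (R-1)/(R\kappa) = -\eta'/\tilde{a}'$ is the natural balance point of the ODE.

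To establish $\limsup_{y \to -\infty} \psi \le q$, I would argue by contradiction: suppose $\epsilon > 0$ and $y_n \downarrow -\infty$ satisfy $\psi(y_n) \ge q + \epsilon$, and consider two mutually exhaustive regimes. In the eventually-above regime (where $\psi \ge q + \epsilon/2$ on some $(-\infty, Y]$), integrating $\psi$ between $y \le y_1 \le Y$ yields $\log u(y) \le \log u(y_1) + (q + \epsilon/2)(y - y_1)$, which is incompatible with $\log u(y) \ge qy + s$ for $y$ sufficiently negative. Otherwise, interlacing the $y_n$'s with a sequence $z_n \to -\infty$ along which $\psi(z_n) < q + \epsilon/2$ and maximising $\psi$ over $[z_{n+1}, z_n]$ will produce an interior local maximum $y^*_n \to -\infty$ with $\psi(y^*_n) \ge q + \epsilon$. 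Differentiating the $\psi$-ODE and using $\psi'(y^*_n) = 0$, $\eta' = (R-1)/R$, $\tilde a' = -\kappa$ gives
\begin{equation*}
\tfrac12 \nu^2 \psi''(y^*_n) = \psi(y^*_n)(\kappa + u(y^*_n)) - (R-1)/R,
\end{equation*}
and the second-order condition $\psi''(y^*_n) \le 0$ yields $\psi(y^*_n) \le (R-1)/[R(\kappa + u(y^*_n))] \to q$, contradicting $\psi(y^*_n) \ge q + \epsilon$.

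The lower bound $\liminf_{y \to -\infty} \psi \ge \tilde q$ follows from a dual argument: the eventually-below regime uses the integral inequality $\log u(y) \ge \log u(y_1) + (\tilde q - \epsilon/2)(y - y_1)$ against the upper bound $\log u(y) \le \tilde q y + \tilde s$, while the oscillating regime produces interior local minima $y^*_n$ with $\psi(y^*_n) \le \tilde q - \epsilon \le q - \epsilon$, at which $\psi''(y^*_n) \ge 0$ forces $\psi(y^*_n) \ge (R-1)/[R(\kappa + u(y^*_n))] \to q$, a contradiction. Note that this extremum step only requires $\kappa + u > 0$ and $(R-1)/R > 0$, so it is robust to the sign of $\tilde q - \epsilon$.

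The hard part will be the oscillating regime: one must extract honest interior local extrema of $\psi$ along $y \to -\infty$ whose values are separated from $q$ (respectively $\tilde q$) by $\epsilon$, so that the second-order extremum inequality combined with $u(y) \to 0$ delivers the contradiction. This reduces to an interlacing argument between the $(y_n)$ witnessing the failure of the limsup/liminf bound and a sequence on which $\psi$ crosses the intermediate level, but care is required to ensure that the extracted extrema are genuinely interior (so that $\psi'(y^*_n) = 0$) rather than boundary points. Once this dichotomy is set up correctly, both halves of the claim follow.
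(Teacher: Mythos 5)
Your proof is correct, but it takes a genuinely different route from the paper. Both start from the same Riccati equation for $\psi = u'/u$ (the paper's $w$), and both use the a priori linear bounds on $\log u$ to deduce that $\psi$ cannot stay bounded away from $[\tilde q, q]$ on a terminal interval. But from there the arguments diverge: the paper never differentiates the Riccati equation. Instead it factors the right-hand side as $(d - \tfrac12 b^2)(w - w_-)(w - w_+)$ and observes that $w' < 0$ whenever $w \in (w_-, w_+)$; since $w_-(y) \to q$ and $w_+(y) \to \infty$ as $y \to -\infty$, once $w$ rises above $q+\varepsilon$ on a far-left terminal interval it is trapped there going backward in $y$, contradicting $\liminf w \le q$ (and analogously for the lower bound). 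You instead differentiate the Riccati ODE once and use the second-order extremum condition at interior critical points of $\psi$, combined with a dichotomy between the eventually-monotone regime (killed by integrating against the log-bounds) and the oscillating regime (killed by the extremum inequality). Your approach is slightly longer and requires $u\in C^3$ (which does follow by bootstrapping the $C^2$ solution through the ODE, as you note), plus the interlacing care you flag in the final paragraph; the paper's first-order phase-line argument avoids both.

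One small but genuine inaccuracy: you claim that $\log u \le \tilde q y + \tilde s$ with $\tilde q \ge 0$ forces $u(y)\to 0$. This is false when $\tilde q = 0$ (and $\tilde q = 0$ is exactly the case used later in the Vasicek section, where the supersolution is asymptotically constant near $-\infty$); you then only get $u$ bounded. However, this does not break your argument. For the upper bound you only need $u \ge 0$, since the extremum condition already gives $\psi(y^*_n) \le (R-1)/[R(\kappa+u(y^*_n))] \le (R-1)/(R\kappa) = q < q+\epsilon$. For the lower bound, either $\tilde q > 0$ (in which case $u\to 0$ does hold and the bound $\to q > \tilde q - \epsilon$) or $\tilde q = 0$ (in which case $\tilde q - \epsilon < 0$ while the extremum bound gives $\psi(y^*_n) > 0$, still a contradiction). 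So the misstatement is inconsequential, but you should replace the claim "$u(y)\to 0$" by the weaker and correct "$u$ is bounded on $(-\infty,0]$, and $u\to 0$ whenever $\tilde q>0$," and split the lower-bound conclusion along $\tilde q>0$ versus $\tilde q=0$ as above.
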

		\begin{proof}
                Set $w = \frac{u'}{u}$.
				As $qy+s \le \log u \le \tilde{q} y + \tilde{s}$ and $w = (\log u)'$, we have \[
						\liminf_{y\to -\infty} w \le q, \quad \limsup_{y\to -\infty} w \ge \tilde{q}
				.\] 
				Since $w$ satisfies the ODE \[
					\frac{1}{2}b^2 w' = u - \eta - \tilde{a} w + \left(d - \frac{1}{2}b^2\right) w^2	
				,\] we have $w' < 0$ for $w\in (w_{-}, w_{+})$, where \[
						w_{\pm} = 
						\frac{\tilde{a} \pm \sqrt{\tilde{a}^2 -4 (d - \frac{1}{2}b^2) (u - \eta)} }{2 (d - \frac{1}{2}b^2)} 
						= \frac{u - \eta}{\tilde{a}} \frac{2}{1 \mp \sqrt{1 - 4 (d - \frac{1}{2}b^2) \frac{u-\eta}{\tilde{a}^2}} }
				.\] 
				Notice that $w_{-} \to q$ and $w_+ \to \infty$ as $y\to -\infty$.
				For all $\varepsilon > 0$, there exists some $y_0$ such that $w_{-}(y) < q + \varepsilon$ for all $y \le y_0$.
				Hence, if $w > q + \varepsilon$ for some $y \le y_0$, we have $w > q + \varepsilon$ for all $y \le y_0$.
				As this contradicts $\liminf_{y\to -\infty} w(y) \le q$, we have $w \le q + \varepsilon$ for all $y \le y_0$, and so $\limsup_{y\to -\infty}w \le q$.
				Analogously, it follows that $\liminf_{y \to -\infty} w(y) \ge \tilde{q}$.
		\end{proof}

		\begin{remark}
				Above, we used the supersolution $\beta(y) = \tilde{C}_2 g_2(y) \approx \tilde{C}_2 (1 + \eta(y)^{+})$, which corresponds to $\tilde{q} = 0$.
				Using the same functional form as for the subsolution, one can show that for any $0 < \tilde{q} < q$, there exists a supersolution that provides the corresponding upper bound on $\log u_{\tilde{q}}$.
				Analogously as for the case $\tilde{q} = 0$, the solution $u_{\tilde{q}}$ generated from the supersolution with parameter $0 < \tilde{q} < \frac{R-1}{R\kappa}$ is the optimal consumption rate of the control problem \eqref{setting:control_problem}.
				Hence, all the solutions $u_{\tilde{q}}$ coincide.
				Letting $\tilde{q} \uparrow q$ in \cref{vasicek:behaviour_at_neg_inf} now yields that $\lim_{y \to -\infty} \frac{u'}{u} = q = \frac{R-1}{R\kappa} $.
		\end{remark}

\begin{appendix}

        \crefalias{section}{appendix}

		\section{Solvability of the HJB Equation over a Bounded Domain}
		\label{section:bounded_domain}

		In this section we discuss the solvability of the Neumann and Dirichlet problems for the HJB equation \eqref{hjb:hjb:constant_correlation} over a bounded domain.
		The Neumann problem corresponds to the stochastic factor being a reflected diffusion process.
		The results obtained here for the Neumann problem are analogous to the case of a Markov chain with finite state space; a solution exists if and only if the real parts of all eigenvalues of the differential operator are positive.

		Let $\Omega \subset \R^{d}$ be be a bounded domain that is $C^{2,\alpha}$, and $L = - \sum_{i, j} b_{ij} D_i D_j + \sum_i a_i D_i + \eta$ a uniformly elliptic differential operator with coefficients belonging to $C^{\alpha}( \bar{\Omega})$.
		Notice that we use the opposite sign convention for the elliptic operator than \cite{Du2006}, which is our main reference on properties of elliptic operators in this section.
		We aim to solve the equation \[
				Lu = u^{p}, \quad p < 1,
		\] under homogeneous Neumann ($ \frac{\partial u}{\partial \nu} = 0$ on $\partial \Omega$) and Dirichlet ($u = 0$ on $\partial \Omega$) boundary conditions.

		We begin by considering the Neumann problem.

		\begin{theorem}
				\label{bounded_domain:criterion:neumann}
				The equation $Lu = u^{p}$, $p < 1$, under Neumann boundary conditions has a positive solution if and only if the principal eigenvalue of $L$ under Neumann boundary conditions is positive.
		\end{theorem}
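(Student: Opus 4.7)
My plan is to prove the two directions separately, exploiting the connection between positive solutions of $Lu = u^p$ and the principal eigenvalue of $L$ perturbed by a zero-order term.

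For the ``only if'' direction, assume that $u \in C^2(\bar\Omega)$ is strictly positive and satisfies $Lu = u^p$ with $\partial_\nu u = 0$ on $\partial\Omega$. Rewriting the equation as $(L - u^{p-1}\operatorname{Id})u = 0$, I would exhibit $u$ as a strictly positive eigenfunction of the shifted operator $L - u^{p-1}\operatorname{Id}$ with eigenvalue zero. Since only the principal eigenvalue of a uniformly elliptic operator under Neumann boundary conditions admits a strictly positive eigenfunction, this forces the principal eigenvalue of $L - u^{p-1}\operatorname{Id}$ to equal zero. The strict monotonicity of the principal eigenvalue in the zero-order term (which applies because $u^{p-1}$ is continuous and strictly positive on $\bar\Omega$, hence not identically zero) then yields $\lambda_1(L) > 0$.

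For the ``if'' direction, the plan is to build ordered positive sub- and supersolutions proportional to the principal eigenfunction of $L$ and then obtain a solution by monotone iteration. Assume $\lambda_1(L) > 0$, let $\phi > 0$ be an associated principal eigenfunction with $\partial_\nu \phi = 0$, and fix constants $0 < c_\phi \leq \phi \leq C_\phi$ on $\bar\Omega$ (strict positivity up to the boundary follows from Hopf's lemma applied to $L - \lambda_1 \operatorname{Id}$). Computing
\[
L(\varepsilon\phi) - (\varepsilon\phi)^p = \phi\bigl(\varepsilon\lambda_1 - \varepsilon^p \phi^{p-1}\bigr),
\]
we see that $\varepsilon\phi$ is a subsolution as soon as $\varepsilon^{1-p}\lambda_1 \leq \phi^{p-1}$, which holds uniformly on $\bar\Omega$ once $\varepsilon > 0$ is small enough (using $p < 1$ and the two-sided bounds on $\phi$). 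The identical calculation with $K\phi$ produces a supersolution for all $K$ sufficiently large; for $K \geq \varepsilon$ we have $\varepsilon\phi \leq K\phi$, and both functions preserve the Neumann condition by construction.

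To pass from the pair $(\varepsilon\phi, K\phi)$ to an actual solution, I would apply the standard monotone iteration scheme: rewriting the equation as $(L + c\operatorname{Id})u = u^p + cu$, pick $c \geq 0$ so that $u \mapsto u^p + cu$ is nondecreasing on the order interval $[\varepsilon\phi, K\phi]$ -- one may take $c = 0$ for $p \in [0,1)$ and $c := |p|(\varepsilon c_\phi)^{p-1}$ for $p < 0$, both finite because the subsolution is bounded away from zero. The linear Neumann problems $(L + c\operatorname{Id})u_{n+1} = u_n^p + c u_n$ admit unique classical $C^{2,\alpha}(\bar\Omega)$ solutions since $\lambda_1(L + c\operatorname{Id}) \geq \lambda_1(L) > 0$, and starting from $\varepsilon\phi$ and $K\phi$ respectively produces monotone sequences trapped in $[\varepsilon\phi, K\phi]$ whose common limit is a positive classical solution. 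The main obstacle I foresee is precisely the singularity of $u \mapsto u^p$ at $u = 0$ when $p < 0$; this is neutralised exactly because the iterates remain bounded below by $\varepsilon c_\phi > 0$, which both keeps the nonlinearity Lipschitz along the iteration and makes the required shift $c$ finite.
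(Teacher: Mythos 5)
Your proposal is correct, and the two halves diverge from the paper to different degrees. For the ``only if'' direction your argument is genuinely different: the paper pairs the equation $Lu = u^p$ against a positive principal eigenfunction $\tilde v$ of the formal adjoint $L^*$, uses the identity $\lambda = \tilde\lambda$ (obtained by pairing eigenfunctions of $L$ and $L^*$), and reads off $\lambda > 0$ directly from the positivity of $\langle u^p,\tilde v\rangle$ and $\langle u,\tilde v\rangle$; you instead exhibit $u$ as a positive eigenfunction of the shifted operator $L - u^{p-1}\operatorname{Id}$ with eigenvalue $0$, invoke the uniqueness of the eigenvalue possessing a positive eigenfunction to identify $0$ as its principal eigenvalue, and then use strict monotonicity of the principal eigenvalue in the zero-order coefficient to conclude $\lambda_1(L) > 0$. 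Both are standard; the adjoint argument is more self-contained (needing only existence and coincidence of the principal eigenvalues of $L$ and $L^*$), while yours presupposes the monotonicity theorem but is arguably more transparent about \emph{why} positivity of $\lambda_1$ is forced. For the ``if'' direction the two proofs are essentially the same: both scale the principal eigenfunction to produce ordered positive Neumann sub- and supersolutions and then invoke the sub/supersolution existence machinery. The paper's scaling $\lambda^{-1/(1-p)}v/M$ and $\lambda^{-1/(1-p)}v/m$ is a clean normalized version of your $\varepsilon\phi \le K\phi$, and the paper cites the elliptic sub/supersolution theorem of Du directly where you spell out the monotone iteration; your explicit discussion of the Lipschitz shift $c = |p|(\varepsilon c_\phi)^{p-1}$ for $p<0$ is precisely the point the paper disposes of by noting that $x\mapsto x^p$ is Lipschitz on the order interval.
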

		\begin{proof}
				First, note that the principal eigenvalue exists and is real-valued by \cite[Thm.~2.1]{Du2006}.
				Denote the principal eigenvalue by $\lambda$, and the corresponding positive eigenfunction by $v$.
				Note that $v \in C^{2,\alpha}( \bar{\Omega})$ by \cite[Thm.~A.4]{Du2006}.

				Assume that $u$ is a solution to $Lu = u^{p}$ under Neumann boundary conditions.
				Denote by $L^{*}$ the formal adjoint of $L$.
				Let $\tilde{\lambda}$ be the principal eigenvalue of $L^{*}$ with corresponding positive eigenfunction $ \tilde{v} $.
				Then \[
						\lambda \left<v, \tilde{v} \right> = \left<Lv, \tilde{v} \right> = \left<v, L^{*} \tilde{v} \right> = \tilde{\lambda} \left<v, \tilde{v} \right>
				,\] so $\lambda = \tilde{\lambda}$ since $\left<v, \tilde{v} \right> > 0$ by positivity of $v$ and $\tilde{v}$.
				Now, we have \[
						\left<u^{p}, \tilde{v} \right> = \left<Lu, \tilde{v} \right> = \left<u, L^{*} \tilde{v} \right> = \lambda \left<u, \tilde{v} \right>
				.\] 
				Since $u$ and $\tilde{v}$ are positive, we have $\left<u^{p}, \tilde{v} \right> > 0$ and $\left<u, \tilde{v} \right> > 0$, so we get $\lambda > 0$.

				Conversely, assume that the principal eigenvalue of $L$ is positive.
				Assume for sake of contradiction that $\min_{\partial \Omega} v \le 0$. 
				Then $\frac{\partial v}{\partial \nu} \neq 0$ by Hopf's lemma (applied to $L + \eta^{-}$), which contradicts the Neumann boundary conditions. 
				Hence, $v > 0$ on $\partial \Omega$, and thus also over $ \bar{\Omega}$.
				Set $m = \min_{ \bar{\Omega}} v$, $M = \max_{ \bar{\Omega}} v$, and \[
						\alpha = \lambda^{-\frac{1}{1-p}} \frac{v}{M}, \quad \beta = \lambda^{-\frac{1}{1-p}} \frac{v}{m}
				.\] 
				Clearly, we have $\alpha \le \beta $, and $\alpha$ and $\beta $ both satisfy the Neumann boundary conditions.

				Furthermore, \[
						L \alpha = \lambda^{-\frac{1}{1-p}} \frac{1}{M} Lv = \lambda^{-\frac{1}{1-p}} \frac{1}{M} \lambda v = \lambda^{-\frac{p}{1-p}} \frac{v}{M} \le \lambda^{-\frac{p}{1-p}} \left( \frac{v}{M} \right)^{p} = \alpha^{p}
				\] since $\frac{v}{M} \le 1$, so $\alpha$ is a subsolution.
				Analogously, it follows that $ \beta $ is a supersolution.

				Notice that the map $x \mapsto x^{p}$ is Lipschitz over $[\lambda^{-\frac{1}{1-p}} \frac{m}{M}, \lambda^{-\frac{1}{1-p}} \frac{M}{m}]$.
				Hence, by \cite[Thm.~4.3]{Du2006}, there exists a solution $u$ to $Lu = u^{p}$ with $0 < \alpha \le u \le \beta$.
		\end{proof}

		Next, we consider the case of Dirichlet boundary conditions.
		Here, we have to assume that $d = 1$ since the lack of regularity of $x \mapsto x^{p}$ at $0$ prevents us from applying the sub- and supersolution theory for elliptic operators.
		Instead, we use the open domain theory from \cref{section:global_existence}.

		\begin{theorem}
				\label{bounded_domain:criterion:dirichlet}
				Assume that $d = 1$. 
				The equation $Lu = u^{p}$, $p \in (0, 1)$, under Dirichlet boundary conditions has a positive solution if and only if the principal eigenvalue of $L$ under Dirichlet boundary conditions is positive.
		\end{theorem}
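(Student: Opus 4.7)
The plan is to follow the Neumann proof of \cref{bounded_domain:criterion:neumann}: necessity via pairing against an adjoint principal eigenfunction, sufficiency via a sub-/supersolution sandwich. The key difference, flagged in the statement, is that a Dirichlet target solution must reach $\partial E_2 = \{0\}$ of the admissible range $E_2 = (0,\infty)$, so the elliptic sub-/supersolution theory of \cite{Du2006} no longer applies. I would instead invoke the open-domain theory of \cref{section:global_existence}, recasting the equation in $d=1$ as the ODE $u'' = f(y,u,u') := (a(y) u' + \eta(y) u - u^p)/b(y)$ on $E_1 = \Omega$ with $E_2 = (0,\infty)$ and $\tilde E_2 = [0,\infty)$. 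Here $f$ is continuous on $E_1 \times \tilde E_2 \times \R$, locally Lipschitz on $E_1 \times E_2 \times \R$, and linear in $u'$, so the Nagumo condition is trivial.

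For necessity, let $v > 0$ and $\tilde v > 0$ be principal Dirichlet eigenfunctions of $L$ and $L^*$ with eigenvalues $\lambda$ and $\tilde\lambda$. As in the Neumann proof, $\lambda\langle v,\tilde v\rangle = \langle Lv,\tilde v\rangle = \langle v,L^*\tilde v\rangle = \tilde\lambda\langle v,\tilde v\rangle$ and $\langle v,\tilde v\rangle > 0$ give $\tilde\lambda = \lambda$. If $u > 0$ solves $Lu = u^p$ with $u|_{\partial\Omega} = 0$, integration by parts (all boundary terms vanish because both $u$ and $\tilde v$ are zero on $\partial\Omega$) yields $\langle u^p,\tilde v\rangle = \langle u,L^*\tilde v\rangle = \lambda\langle u,\tilde v\rangle$, and strict positivity of both inner products forces $\lambda > 0$. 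Conversely, assume $\lambda > 0$. Then $\alpha := cv$ is readily a subsolution: $L\alpha = \lambda\alpha \le \alpha^p$ whenever $c \le \lambda^{-1/(1-p)}/\max_{\bar\Omega} v$.

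The main obstacle will be producing a positive supersolution $\beta$ with $\beta \to 0$ on $\partial\Omega$; this vanishing is essential because the open-domain theory does not enforce boundary data, so the sandwich $\alpha \le u \le \beta$ yields $u|_{\partial\Omega}=0$ only if $\beta$ already does. Neither a constant (since $\eta$ need not be positive on $\bar\Omega$) nor $\beta = Kv$ (since $L(Kv) = K\lambda v$ is dominated by $(Kv)^p$ as $v \to 0$) will work. My proposal is $\beta := Kv^s$ with $s \in (0,1)$ chosen close enough to $1$ that $s\lambda + (1-s)\eta \ge \mu > 0$ throughout $\bar\Omega$, which is feasible since $\lambda > 0$ and $\eta$ is bounded. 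A direct computation gives
\[
L(Kv^s) = K v^s \bigl[s\lambda + (1-s)\eta\bigr] + K\,b s(1-s)\,v^{s-2}(v')^2,
\]
whose second summand is strictly positive on $\Omega$ and blows up at $\partial\Omega$ by Hopf's lemma ($v' \neq 0$ at the endpoints), while the first summand is bounded below by $K\mu v^s \ge 0$. Consequently the bracket $v^s[s\lambda+(1-s)\eta] + bs(1-s)v^{s-2}(v')^2$ has a strictly positive infimum on $\bar\Omega$, and for $K$ large enough it uniformly exceeds $K^{p-1}v^{ps}$, i.e.\ $L\beta \ge \beta^p$. Shrinking $c$ if necessary secures $\alpha \le \beta$ on $\Omega$ (the inequality $c \le K(\max v)^{s-1}$), and both functions lie in $\mathcal{X}$ since their derivatives are bounded on each compact subset of $\Omega$. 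Invoking \cref{global_existence:existence} then delivers a global solution $u$ with $\alpha \le u \le \beta$, and the sandwich forces $u$ to extend continuously to $\bar\Omega$ with $u|_{\partial\Omega} = 0$.
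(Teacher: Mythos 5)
Your proposal is correct and follows the same global strategy as the paper (necessity by pairing against the adjoint principal eigenfunction; sufficiency by sandwiching between sub- and supersolutions that vanish at $\partial\Omega$ and then invoking the open-domain existence theorem), but your supersolution construction is genuinely different. The paper takes $\beta$ to be a suitable scalar multiple of $w$, the solution of the auxiliary linear Dirichlet problem $Lw=1$, and then needs Hopf's lemma to control the ratio $v/w$ near $\partial\Omega$ in order to arrange $\alpha\le\beta$. You instead take $\beta = Kv^{s}$ for $s\in(0,1)$ close to $1$, deducing $L\beta\ge\beta^{p}$ from the identity
\[
L(Kv^{s}) \;=\; Kv^{s}\bigl[s\lambda + (1-s)\eta\bigr] \;+\; K\,b\,s(1-s)\,v^{s-2}(v')^{2},
\]
where Hopf's lemma now enters to show that the second summand blows up at $\partial\Omega$, so the bracket has a strictly positive infimum while $\beta^{p}/K = K^{p-1}v^{sp}$ stays bounded. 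Your approach avoids solving the extra linear Dirichlet problem and gives a more transparent verification of $\alpha\le\beta$ (it reduces to $cv^{1-s}\le K$, which holds since $v$ is bounded), at the cost of a slightly more involved differentiation; the paper's $\beta$ is linear in $v$ and $w$ and hence computationally simpler, but requires the $v/w$ boundedness lemma. Both arguments are sound. One small thing worth making explicit: to apply \cref{global_existence:existence} you use that $\alpha$ and $\beta$ are $E_2$-valued (i.e., strictly positive) on the open set $\Omega$, which holds since $v>0$ there, even though both vanish on $\partial\Omega$; the continuous extension $u|_{\partial\Omega}=0$ then follows from the sandwich, exactly as you state.
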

		\begin{proof}
				The positivity of the principal eigenvalue being a necessary condition follows analogously as in \cref{bounded_domain:criterion:neumann}.

				Assume now that the principal eigenvalue $\lambda$ of $L$ under Dirichlet boundary conditions is positive, and let $v$ be a corresponding positive eigenfunction.
				Set $M_1 = \max_{ \bar{\Omega}} v$.
				For a fixed $\varepsilon \in (0, 1)$, set \[
						\alpha = \varepsilon \lambda^{-\frac{1}{1-p}} \frac{v}{M_1}
				.\] 
				It follows analogously as in \cref{bounded_domain:criterion:neumann} that $\alpha$ is a subsolution. 
				
				Next, we construct a supersolution.
				Let $w$ be a solution to $Lw = 1$ with Dirichlet boundary conditions. 
				Note that $w$ exists by \cite[Thms.~A.1, A.5]{Du2006} since the principal eigenvalue of $L$ is positive.
				By \cite[Thm.~2.4]{Du2006}, the strong maximum principle holds for $L$, and so $w > 0$ in $\Omega$.

				Set $M_2 = \max_{ \bar{\Omega}} w$ and \[
						\beta = \frac{1}{\varepsilon} M_2^{\frac{p}{1-p}} w	
				.\] 
				Then $ \beta $ is a supersolution since \[
						L \beta = \frac{1}{\varepsilon} M_2^{\frac{p}{1-p}} Lw = \frac{1}{\varepsilon} M_2^{\frac{p}{1-p}} = \frac{1}{\varepsilon} M_2^{\frac{p^2}{1-p}} M_2^{p} \ge \left( \frac{1}{\varepsilon} \right)^{p} M_2^{\frac{p^2}{1-p}} w^{p} = \beta^{p}
				.\] 

				For $\varepsilon$ small enough, we have $\alpha \le \beta $:
				We have \[
						\frac{\alpha}{\beta } = \varepsilon^2 \lambda^{-\frac{1}{1-p}} M_1^{-1} M_2^{- \frac{p}{1-p}} \frac{v}{w}
				.\] 
				For $\varepsilon$ small enough, one obtains that $ \frac{\alpha}{\beta } \le 1$ as long as $\frac{v}{w}$ is bounded.
				Since $v > 0$ and $w > 0$ in $\Omega$, $\frac{v}{w}$ can only blow up at the boundary $\partial \Omega$.
				But by Hopf's lemma (applied to $L + \eta^{-}$), $\frac{\partial v}{\partial \nu} \neq 0$ and $\frac{\partial w}{\partial \nu} \neq 0$ on $\partial \Omega$, so $\frac{v}{w}$ stays bounded around $\partial \Omega$ by L'Hospital's rule.

				Now, a solution $u$ to $Lu = u^{p}$ with $\alpha \le u \le \beta $ on $\Omega$ exists by \cref{global_existence:existence}.
				Since $\alpha = 0$ and $\beta = 0$ on $\partial \Omega$, we can continuously extend $u$ to $u = 0$ on $\partial \Omega$.
		\end{proof}

		Finally, notice that the condition $\eta > 0$ implying existence of a solution in a finite regime setting (see \cref{finite_state_space:wellposedness:easy_criteria}) remains true on the bounded domain.

		\begin{lemma}
				\label{bounded_domain:uniformly_wellposed}
				Assume that $\eta > 0$ in $ \Omega$. 
				Then the principal eigenvalue of $L$ is positive under both Dirichlet and Neumann boundary conditions.
		\end{lemma}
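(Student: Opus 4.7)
The plan is to show that the principal eigenvalue $\lambda$ of $L$ with corresponding positive eigenfunction $v > 0$ must be strictly positive by examining the equation $Lv = \lambda v$ at the global maximum of $v$. The key observation is that at a positive interior maximum, the second-order part of $L$ contributes a non-negative quantity by uniform ellipticity, the first-order part vanishes, and the zero-order term $\eta v$ is strictly positive.

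For the Dirichlet case, since $v > 0$ in $\Omega$ and $v = 0$ on $\partial\Omega$, the maximum point $y_1$ of $v$ over $\bar\Omega$ is necessarily an interior point. At $y_1$, we have $\nabla v(y_1) = 0$ and $D^2 v(y_1)$ is negative semi-definite, so by uniform ellipticity $-\sum_{i,j} b_{ij}(y_1) D_i D_j v(y_1) \geq 0$. Evaluating the eigenvalue equation at $y_1$ yields $\lambda v(y_1) \geq \eta(y_1) v(y_1)$, and dividing by $v(y_1) > 0$ gives $\lambda \geq \eta(y_1) > 0$.

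For the Neumann case the maximum of $v$ may be attained on $\partial\Omega$, so I would argue by contradiction. Suppose $\lambda \leq 0$. Rewrite $Lv = \lambda v$ as $L_0 v = (\lambda - \eta) v$, where $L_0 := L - \eta$ has no zero-order term. Since $\eta > 0$ and $v > 0$ in $\Omega$, the right-hand side is strictly negative in $\Omega$, so in particular $L_0 v \leq 0$. If $v$ happens to be constant, then the eigenvalue equation forces $\eta$ to be the constant $\lambda$, which is then strictly positive, contradicting $\lambda \leq 0$. Otherwise, the strong maximum principle applied to $L_0$ rules out an interior maximum, so the maximum lies at some $y_1 \in \partial\Omega$. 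Applying Hopf's boundary point lemma to $L_0$ at $y_1$ (using the interior ball condition, which holds since $\Omega$ is $C^{2,\alpha}$ and $v > 0$ on $\bar\Omega$ by a preliminary Hopf argument at any boundary zero) yields $\partial_\nu v(y_1) > 0$, contradicting the Neumann boundary condition.

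The main obstacle is the Neumann case, where the positive eigenfunction may attain its maximum on $\partial\Omega$; the trick is to transfer the positive zero-order term onto the right-hand side of the eigenvalue equation so that Hopf's lemma can be applied cleanly to an operator without zeroth-order term.
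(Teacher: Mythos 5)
Your argument is correct, but it takes a genuinely different route from the paper's. The paper observes that the constant function $1$ is a positive strict supersolution, since $L1 = \eta > 0$ in $\Omega$, and the constant satisfies either boundary condition as a supersolution ($1 \geq 0$ for Dirichlet, $\partial_\nu 1 = 0 \geq 0$ for Neumann); it then cites \cite[Thm.~2.4]{Du2006}, which characterises positivity of the principal eigenvalue by the existence of such a supersolution, and the proof is immediate. You instead work from first principles: for Dirichlet you evaluate $Lv = \lambda v$ at the interior maximum of the eigenfunction and use uniform ellipticity to absorb the second-order term, directly bounding $\lambda \geq \eta(y_1) > 0$; for Neumann you argue by contradiction via the strong maximum principle and Hopf's boundary point lemma applied to the reduced operator $L_0 = L - \eta$. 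Both approaches are valid. The paper's is shorter because it delegates the maximum-principle machinery to the cited theorem; yours is longer but more self-contained and makes the mechanism transparent. Your Neumann argument does require $v > 0$ on $\bar\Omega$ before the final Hopf step, and the bootstrap you sketch (a boundary zero of $v$ would give $\partial_\nu v \neq 0$ by Hopf, contradicting the Neumann condition) is exactly the argument the paper itself uses in the proof of \cref{bounded_domain:criterion:neumann}, so this is fine.
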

		\begin{proof}
				Denote by $1$ the constant function $1$.
				We have $L 1 = \eta > 0$ in $\Omega$, and $1$ satisfies the boundary condition (as a supersolution). 
				Hence, the principal eigenvalue of $L$ is positive by \cite[Thm.~2.4]{Du2006}.
		\end{proof}

        \section{Auxiliary Results}
        \label{section:auxiliary}

		For completeness, we state Itô's formula for Markov-modulated diffusion processes.

		\begin{lemma}
				\label{auxiliary:ito_markov_modulated}
				Let $Y$ be a continuous-time Markov chain with state space $\{1, \ldots, N\} $ and Q-matrix $Q$, and let $X$ be a one-dimensional Itô-diffusion with dynamics  \[
						dX_t = \mu(X_t, Y_t) dt + \sigma(X_t, Y_t) dW_t 
				,\] 
				where $W$ is a Brownian motion independent of $Y$.
				For any function $V = V(t, x, y)$ that is continuously differentiable w.r.t.\ $t$ and twice continuously differentiable w.r.t.\ $x$, we have \[
						dV(t, X_t, Y_t) = \left( \frac{\partial V}{\partial t} + \mu \frac{\partial V}{\partial x} + \frac{1}{2} \sigma^2 \frac{\partial ^2 V}{\partial x^2} + QV \right) dt + \sigma \frac{\partial V}{\partial x} dW_t + dM_t
				,\] where $M$ is a local martingale with $M_0 = 0$.
                Moreover, if $\E\left[ \int_{0}^{t} V(s, X_s, j)^2 \d s \right] < \infty$ for all $t \ge 0$ and $j = 1,\dots,N$, then $M$ is a true martingale.
		\end{lemma}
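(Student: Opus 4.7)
The plan is to exploit the product structure: since $Y$ is a pure-jump process independent of the Brownian motion $W$, between jumps of $Y$ the process $V(t, X_t, Y_t)$ evolves according to the classical one-dimensional Itô formula in $(t,X_t)$ with $y$ frozen, while at jump times of $Y$ the process undergoes pure jumps in its last argument. I would start by recalling the Dynkin martingale decomposition for a finite-state chain: for any bounded $g : \{1,\ldots,N\} \to \R$, the process $g(Y_t) - g(Y_0) - \int_0^t (Qg)(Y_s)\,\d s$ is a martingale, and more generally for any $\mathbb{F}$-adapted process $h_s(\cdot)$ one has
\[
\sum_{s \le t} \bigl(h_s(Y_s) - h_s(Y_{s-})\bigr) = \int_0^t (Q h_s)(Y_{s-})\,\d s + \tilde M_t,
\]
where $\tilde M$ is a local martingale arising as a stochastic integral against the compensated jump measure of $Y$.

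Next I would decompose the change in $V$ into a continuous part and a pure-jump part. Since $Y$ has only finitely many jumps on compacts (finite state space) and $X$ is continuous, classical Itô applied piecewise between successive jump times of $Y$ gives
\[
V(t, X_t, Y_t) - V(0, X_0, Y_0) = \int_0^t \!\Bigl(\tfrac{\partial V}{\partial t} + \mu\tfrac{\partial V}{\partial x} + \tfrac{1}{2}\sigma^2 \tfrac{\partial^2 V}{\partial x^2}\Bigr) \d s + \int_0^t \sigma \tfrac{\partial V}{\partial x}\,\d W_s + \!\!\sum_{s\le t,\ \Delta Y_s\ne 0}\!\!\bigl[V(s,X_s,Y_s) - V(s,X_s,Y_{s-})\bigr].
\]
Applying the Dynkin-type identity to the last sum with $h_s(\cdot) = V(s, X_s, \cdot)$ (which is valid because $X$ is continuous so $X_{s-}=X_s$, and $Y$'s jumps are independent of $W$) rewrites it as $\int_0^t (QV(s, X_s, \cdot))(Y_{s-})\,\d s + \tilde M_t$, and since $Y_{s-} = Y_s$ outside a countable set of times the Lebesgue integral is unchanged by replacing $Y_{s-}$ with $Y_s$. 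Setting $M = \tilde M$ yields the stated formula.

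It remains to verify that $M$ is a true martingale under the stated $L^2$ hypothesis. The predictable quadratic variation of $\tilde M$ can be written explicitly as
\[
\langle \tilde M\rangle_t = \int_0^t \sum_{j\ne Y_{s-}} q_{Y_{s-}, j}\bigl(V(s, X_s, j) - V(s, X_s, Y_{s-})\bigr)^2 \d s,
\]
and using $(a-b)^2 \le 2a^2 + 2b^2$ together with boundedness of the off-diagonal entries of $Q$, $\E[\langle \tilde M\rangle_t]$ is bounded by a constant multiple of $\sum_{j=1}^N \E[\int_0^t V(s, X_s, j)^2 \d s]$, which is finite by assumption; hence $\tilde M$ is square-integrable and, in particular, a true martingale.

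The main technical obstacle is the identification of the jump compensator when $h_s(\cdot)$ itself depends on the adapted process $X_s$: the cleanest way is to note that $X$ is continuous and independent of the jumps of $Y$, so conditioning on $\mathcal{F}^W_\infty$ reduces the compensation to that of the pure Markov chain, and then to invoke Fubini-type arguments to conclude. Once this reduction is made, the martingale property follows from standard integration against the compensated jump measure; everything else is bookkeeping via classical Itô on the continuous part.
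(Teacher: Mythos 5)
Your argument is correct, and it proceeds along a related but genuinely different route from the one in the paper. The paper sidesteps the compensator issue you flag as the ``main technical obstacle'' by first linearising the $y$-dependence through the algebraic decomposition $V(t,X_t,Y_t)=\sum_{j=1}^N V(t,X_t,j)\,\delta^j_t$ with $\delta^j_t=I_{\{Y_t=j\}}$: each factor $V(t,X_t,j)$ is a continuous It\^o process, each $\delta^j$ is a finite-variation process whose Dynkin compensation $M^j_t=\delta^j_t-\delta^j_0-\int_0^t\sum_i Q_{ij}\delta^i_s\,\d s$ is a square-integrable martingale (cited from Bj\"ork), and the product rule then hands over the semimartingale decomposition with $M_t=\sum_j\int_0^t V(s,X_s,j)\,\d M^j_s$. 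The $X$-dependence thus enters only as a continuous, predictable integrand against the fixed martingales $M^j$, so there is no need to recompute a compensator with a stochastic integrand. Your approach instead works directly with the jump measure of $Y$: you apply classical It\^o piecewise between jumps, isolate the pure-jump contribution, and compensate it by $\int_0^t(Qh_s)(Y_{s-})\,\d s$ with $h_s(\cdot)=V(s,X_s,\cdot)$. This is more self-contained but requires you to justify that the compensator formula for the integer-valued random measure of $Y$ still applies with a predictable, $X$-dependent integrand; your observation that $X$ is continuous (hence $h_s$ is predictable) together with independence of $W$ and $Y$ is the right way to close this, and for a cleaner finish you could simply invoke the general theory of stochastic integration against compensated integer-valued random measures rather than the conditioning-on-$\mathcal{F}^W_\infty$ argument, which is somewhat awkward to make precise. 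The martingale verification via the predictable quadratic variation bound $\langle\tilde M\rangle_t\le C\sum_j\int_0^t V(s,X_s,j)^2\,\d s$ is correct and matches what the paper's cited formula delivers via the It\^o isometry.
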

		\begin{proof}
                We decompose \[
                        V(t, X_t, Y_t) = \sum_{j=1}^{N} V(t, X_t, j) \delta^{j}_t
                ,\] where $\delta^{j}_t = I_{\{Y_t = j\} }$.
                By Eq.~(5) in \cite{Bjoerk1980}, $M_t^j := \delta_t^j - \delta_0^j - \int_0^t \sum_{i=1}^{N} Q_{i,j} \delta^i_s \d s$ is a square-integrable martingale.
				The dynamics of $V$ and the (local) martingale property now follow from Eq.~(9) in \cite{Bjoerk1980} with $M_t = \sum_{j=1}^{N} \int_{0}^{t} V(s, X_s, j) dM^j_t$.
		\end{proof}

		\begin{lemma}
				\label{auxiliary:monotone_convergence}
				Let $f \in C^2([y_0, \infty))$, and assume that $f$ converges as $y\to \infty$.
				Then there exist sequences $(y^{1}_n)_{n\in \N}, (y^2_n)_{n\in \N}$ with $y^{i}_n \to \infty$, $i=1,2$, s.t. \begin{alignat*}{3}
						f'(y^{1}_n) &\to \liminf_{y \to \infty} f'(y), \quad& f''(y^{1}_n) \to 0 \qquad& \text{ as } n \to \infty, \\
						f'(y^{2}_n) &\to \limsup_{y \to \infty} f'(y), \quad& f''(y^{2}_n) \to 0 \qquad& \text{ as } n \to \infty
				.\end{alignat*}
				In particular, if $f$ converges monotonically, there exists a sequence $(y_n)_{n\in \N}$ with $y_n \to \infty$ and $f'(y_n) \to 0, f''(y_n) \to 0$ as $n\to \infty$.
		\end{lemma}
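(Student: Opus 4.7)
The strategy is to exploit the elementary fact that $f''$ vanishes at any interior local extremum of $f'$. Set $\ell := \liminf_{y\to\infty} f'(y)$ and $L := \limsup_{y\to\infty} f'(y)$. As a preliminary observation, the mean value theorem applied to $f$ on $[n,n+1]$ yields $\xi_n \in (n,n+1)$ with $f'(\xi_n) = f(n+1) - f(n) \to 0$ (using that $f$ converges), so $\ell \le 0 \le L$, and in particular $\ell < +\infty$ and $L > -\infty$.

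For the construction of $y_n^1$ I would distinguish three cases. If $\ell = L$, then $f'$ converges to $\ell$; combined with convergence of $f$ this forces $\ell = 0$, and applying the mean value theorem once more to $f'$ on $[n,n+1]$ produces a sequence along which $f''$ (and hence $f'$) tend to $0$. If $-\infty < \ell < L$, I would pick $c_n \in (\ell,L)$ with $c_n \downarrow \ell$ and consider the open set $A_n := \{y > y_0 : f'(y) < c_n\}$. This set is unbounded by the definition of $\ell$, and it cannot contain a component of the form $(\alpha,\infty)$, since this would force $L = \limsup f' \le c_n < L$. Hence $A_n$ is a union of infinitely many bounded open intervals $(\alpha,\beta)$ with left endpoints tending to $\infty$; on each such interval $f'(\alpha) = f'(\beta) = c_n$ and $f' < c_n$ in the interior, so $f'$ attains its infimum at some $y^* \in (\alpha,\beta)$ with $f''(y^*) = 0$ and $f'(y^*) < c_n$. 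Combining this with the bound $f'(y^*) \ge \ell - 1/n$, which holds once $y^*$ is sufficiently large by definition of the liminf, I obtain $y_n^1 \to \infty$ with $f'(y_n^1) \to \ell$ and $f''(y_n^1) \equiv 0$. The case $\ell = -\infty$ is handled identically, with $c_n := -n$ and the same level-set argument producing interior minima where $f'(y_n^1) \le -n$ and $f''(y_n^1) = 0$.

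The sequence $y_n^2$ is obtained by the symmetric construction, working with the level set $\{y > y_0 : f'(y) > c_n\}$, $c_n \uparrow L$, and interior maxima of $f'$. The final clause of the lemma follows at once: for a nondecreasing convergent $f$ one has $f' \ge 0$, hence $\ell \ge 0$, and combined with $\ell \le 0$ this yields $\ell = 0$, so that $y_n := y_n^1$ satisfies $f'(y_n) \to 0$ and $f''(y_n) \to 0$; the nonincreasing case is handled symmetrically using $y_n^2$.

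The main delicate point is ensuring that the bounded components of $A_n$ have left endpoints going to infinity, so that one can actually extract a sequence tending to $\infty$. This is precisely where the constraint $c_n < L$ (rather than merely $c_n > \ell$) is required, as it rules out the unwanted unbounded component, and it also explains why the degenerate case $\ell = L$ must be treated separately via a direct mean value theorem argument.
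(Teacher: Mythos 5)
Your proof is correct and rests on the same underlying idea as the paper's: locate local extrema of $f'$ far out, where $f''$ vanishes, and supplement this with a mean-value-theorem argument in the degenerate case. However, the case split is organized differently, and the details are filled in more carefully, so a short comparison is warranted. The paper dichotomizes on whether $f'$ is eventually monotone or ``oscillating'': in the monotone case it observes $f'\to 0$ and $\liminf f''=0$, and in the oscillating case it simply asserts that sequences of local maxima (resp.\ minima) of $f'$ can be chosen converging to $\limsup f'$ (resp.\ $\liminf f'$). You instead dichotomize on $\ell = L$ versus $\ell < L$, which is cleaner because the mean-value argument in the $\ell = L$ branch works regardless of whether $f'$ is eventually monotone. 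More importantly, your level-set construction — taking $c_n\downarrow\ell$, noting that $\{f' < c_n\}$ is unbounded but has no unbounded component (since $c_n < L$), hence consists of bounded intervals with left endpoints going to infinity, each containing an interior minimum of $f'$ — is a genuine substantiation of the step the paper leaves implicit, and it automatically handles $\ell = -\infty$ (with $c_n=-n$). The preliminary observation $\ell\le 0\le L$ via the mean value theorem on $[n,n+1]$ is also a nice touch that streamlines the derivation of the final clause. In short: same core mechanism, but your write-up is more rigorous where the paper is terse, and the $\ell=L$ versus $\ell<L$ split is arguably the more natural one since it exactly isolates the degenerate case.
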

		\begin{proof}
				It $f'$ eventually becomes monotone, it converges to $0$ since $f$ is convergent.
				Hence, $f'(y_n) \to 0$ for any sequence $(y_n)_{n\in \N}$ with $y_n \to \infty$.
				Assume without loss of generality that $f'$ is increasing.
				Then $\liminf_{y\to \infty} f''(y) = 0$,  which yields a suitable sequence.

				Otherwise, $f'$ is oscillating, and so we can choose approximating sequences $y^{i}$ for $\limsup_{y\to \infty} f'(y)$ and $\liminf_{y\to \infty} f'(y)$ that consist only of local maxima and minima of $f'$, respectively.
				At these extrema, $f''(y^{i}_n) = 0$.

				The additional claim follows from the above since $\liminf_{y\to \infty} f'(y) = 0$ if $f$ converges and is increasing, and $\limsup_{y\to \infty} = 0$ if $f$ converges and is decreasing.
		\end{proof}

		\begin{lemma}
				\label{auxiliary:convex:derivative}
				Let $f,g \in C^{1}([y_0, \infty))$ be positive.
				Assume that $f$ is increasing and convex, and that $\frac{f}{g}\to 1$ and $\frac{g'}{g} \to 0$.
				Then $\frac{f'}{f} \to 0$ as $y\to \infty$.
		\end{lemma}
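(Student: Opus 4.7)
The plan is to exploit the convexity of $f$ to dominate $f'(y)$ by a forward finite difference of $f$, and then use the two hypotheses on $g$ to control that finite difference.

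First, because $f$ is convex and increasing, $f'$ is nonnegative and nondecreasing. Hence for every $y \ge y_0$,
\[
    f'(y) \le \int_{y}^{y+1} f'(s) \d s = f(y+1) - f(y).
\]
Dividing by $f(y) > 0$ gives the two-sided bound
\[
    0 \le \frac{f'(y)}{f(y)} \le \frac{f(y+1)}{f(y)} - 1,
\]
so it suffices to prove that $f(y+1)/f(y) \to 1$ as $y \to \infty$.

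To that end, decompose
\[
    \frac{f(y+1)}{f(y)} = \frac{f(y+1)/g(y+1)}{f(y)/g(y)} \cdot \frac{g(y+1)}{g(y)}.
\]
The first factor tends to $1$ directly from $f/g \to 1$. For the second, note
\[
    \log\frac{g(y+1)}{g(y)} = \int_{y}^{y+1} \frac{g'(s)}{g(s)} \d s,
\]
and since $g'/g \to 0$, the integrand is uniformly small on the unit interval $[y, y+1]$ for $y$ large, so the integral (and hence the log) tends to $0$. Therefore $g(y+1)/g(y) \to 1$, and combining the two factors yields $f(y+1)/f(y) \to 1$. Hence $f'(y)/f(y) \to 0$.

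There is no real obstacle here; the only point requiring care is that the hypothesis $g'/g \to 0$ is a pointwise condition, which is strong enough to control the \emph{unit-length} integral $\int_y^{y+1} g'/g$ used above, but would not in general control larger increments such as $\int_y^{2y} g'/g$. Using the unit shift $y+1$ (rather than, say, $2y$) is what makes the argument go through cleanly.
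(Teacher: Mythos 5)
Your proposal is correct and follows essentially the same route as the paper: bound $f'(y)$ by $f(y+1)-f(y)$ via convexity, reduce to showing $f(y+1)/f(y)\to 1$, and deduce $g(y+1)/g(y)\to 1$ from $g'/g\to 0$. The only cosmetic difference is that you bound the integral $\int_y^{y+1} g'/g$ directly, whereas the paper evaluates $\log(g(y+1)/g(y))$ as $g'(\xi_y)/g(\xi_y)$ via the mean value theorem; the two are interchangeable.
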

		\begin{proof}
				Since $f$ is convex and increasing, we have $0 \le f'(x) \le f(x+1) - f(x) $, so it is enough to show that $\frac{f(x+1)}{f(x)} \to 1$.
                By the mean value theorem, \[
						\log\left( \frac{g(x+1)}{g(x)} \right) = (\log(g(\cdot ))'(\xi_x) = \frac{g'(\xi_x)}{g(\xi_x)}
				\] for some $\xi_x \in [x, x+1]$.
				Since $\frac{g'}{g} \to 0$, this yields $\frac{g(x+1)}{g(x)} \to 1$.
                Together with $\frac{f}{g} \to 1$, we get \[
						\frac{f(x+1)}{f(x)} = \frac{f(x+1)}{g(x+1)} \frac{g(x+1)}{g(x)} \frac{g(x)}{f(x)} \longrightarrow 1
				. \qedhere \] 
		\end{proof}

\end{appendix}

\bibliographystyle{imsart-nameyear} 
\bibliography{bibliography}       

\end{document}